\renewcommand*\backref[1]{\ifx#1\relax \else (see page #1) \fi}
\newcommand{\appsymb}{$\bigstar$}
\newcommand{\pfnote}[1]{}
\newcommand{\ssnote}[1]{}
\newcommand{\pfsidenote}[1]{}
\newcommand{\nbnote}[1]{}
\newtheorem{theorem}{Theorem}[section]
\newtheorem{proposition}[theorem]{Proposition}
\newtheorem{definition}{Definition}[section]
\newtheorem{example}{Example}[section]
\newtheorem{remark}{Remark}[section]
\newcommand{\calF}{\mathcal{F}}
\newcommand{\calP}{\mathcal{P}}
\newcommand{\calT}{\mathcal{T}}
\newcommand{\swap}{{{\mathrm{swap}}}}
\newcommand{\spear}{{{\mathrm{Spear}}}}
\newcommand{\pref}{\succ}
\newcommand{\sharpp}{{{\mathrm{\#P}}}}
\newcommand{\np}{{{\mathrm{NP}}}}
\newcommand{\fpt}{{{\mathrm{FPT}}}}
\newcommand{\pos}{{\mathrm{pos}}}
\newcommand{\dist}{{\mathrm{dist}}}
\newcommand{\reals}{{\mathbb{R}}}
\newcommand{\calD}{{\mathcal{D}}}
\newcommand{\POS}{{{\mathrm{POS}}}}
\newcommand{\EMD}{{{\mathrm{emd}}}}
\newcommand{\rID}{{{\mathrm{rID}}}}
\newcommand{\ID}{{{\mathrm{ID}}}}
\newcommand{\id}{{{\mathrm{id}}}}
\newcommand{\UN}{{{\mathrm{UN}}}}
\newcommand{\AN}{{{\mathrm{AN}}}}
\newcommand{\an}{{{\mathrm{an}}}}
\newcommand{\ST}{{{\mathrm{ST}}}}
\newcommand{\stt}{{{\mathrm{st}}}}
\newcommand{\sgn}{{{\mathrm{sgn}}}}
\newcommand{\Euc}{{{\mathrm{Euc}}}}
\newcommand{\MR}{{{\mathrm{MR}}}}
\newcommand{\AMR}{{{\mathrm{AMR}}}}
\newcommand{\relvot}{{{\mathrm{freq}}}}
\newcommand{\normphi}{{{\mathrm{norm}\hbox{-}\phi}}}
\definecolor{darkgreen}{rgb}{0,0.5,0}
\definecolor{darkpink}{rgb}{0.75,0.25,0.25}
\definecolor{RED}{rgb}{1,0,0}
\newcommand{\figurecut}[1]{}
\newcommand{\cutfornow}[1]{}
\newcommand{\hideforaamas}[1]{}
\title{Drawing a Map of Elections\thanks{This paper merges results
    from an AAMAS-2020 paper \cite{szu-fal-sko-sli-tal:c:map}, an IJCAI-2021 paper \cite{boe-bre-fal-nie-szu:c:compass}, and the PhD thesis
    of \citet{szu:thesis:map-of-elections}.
    In particular, most of Sections~\ref{sec:map-cultures} and~\ref{sec:experiments}, as well as significant fragments of Section~\ref{sec:real-life}, were not published in either of the two conference papers, or were
    published in very rudimentary forms. Code and data for the paper are available at \url{https://github.com/Project-PRAGMA/Journal---Drawing-a-Map-of-Elections}.}}
\author[1,2]{Stanisław Szufa}
\author[3]{Niclas Boehmer}
\author[4]{Robert Bredereck}
\author[1]{Piotr Faliszewski}
\author[5]{Rolf Niedermeier} 
\author[6]{Piotr Skowron}
\author[7]{Arkadii Slinko}
\author[8]{Nimrod Talmon}
\affil[1]{\small
  AGH University,
  \{faliszew,szufa\}@agh.edu.pl}
 \affil[2]{\small
  CNRS, LAMSADE, Universit\'{e} Paris Dauphine -- PSL,
  s.szufa@gmail.com}
\affil[3]{\small
  Hasso Plattner Institute, University of Potsdam,
  niclas.boehmer@hpi.de}
\affil[4]{\small
  Institut für Informatik, TU Clausthal, 
  robert.bredereck@tu-clausthal.de}
\affil[5]{\small
  Technische Universit\"at Berlin, Algorithmics and Computational 
  Complexity}
 \affil[6]{\small
  University of Warsaw,
  p.skowron@mimuw.edu.pl}
 \affil[7]{\small
  University of Auckland,
  a.slinko@auckland.ac.nz}
 \affil[8]{\small
  Ben-Gurion University,
  talmonn@bgu.ac.il}
\date{}
\begin{document}

\maketitle

\begin{abstract}
  Our main contribution is the introduction of the map of elections
  framework. A map of elections consists of three main elements: (1) a
  dataset of elections (i.e., collections of ordinal votes over given
  sets of candidates), (2) a way of measuring similarities between
  these elections, and (3) a representation of the elections in the 2D
  Euclidean space as points, so that the more similar two elections
  are, the closer are their points.  In our maps, we mostly focus on
  datasets of synthetic elections, but we also show an example of a
  map over real-life ones. To measure similarities, we would have
  preferred to use, e.g., the isomorphic swap distance, but this is
  infeasible due to its high computational complexity.  Hence, we
  propose polynomial-time computable positionwise distance and use it
  instead. Regarding the representations in 2D Euclidean space, we
  mostly use the Kamada-Kawai algorithm, but we also show two
  alternatives.
  We develop the necessary theoretical results to form our maps and
  argue experimentally that they are accurate and credible. Further,
  we show how coloring the elections in a map according to various
  criteria helps in analyzing results of a number of experiments.  In
  particular, we show colorings according to the scores of winning
  candidates or committees, running times of ILP-based winner
  determination algorithms, and approximation ratios achieved by
  particular algorithms.
\end{abstract}

\section{Introduction}\label{sec:intro}

Alongside theoretical research, experimental studies are in the very
heart of \emph{computational social
  choice}~\cite{bra-con-end-lan-pro:b:comsoc-handbook}.  Computational
aspects of elections, such as the problems of winner determination~
\citep{bou-lu:c:chamberlin-courant,fal-lac-pet-tal:c:csr-heuristics,csa-lac-pic:c:schulze-method,mic-sha:c:predict-election-outcome},
identifying and analyzing various forms of
manipulation~\citep{wal:c:where-hard-veto,dav-kat-nar-wal-xia:j:borda-nanson-baldwin-manipulation,gol-lan-mat-per:c:rank-dependent,kel-has-haz:j:approx-manipulation,bar-lan-yok:c:approval-hamming-manipulation,lu-zha-rab-obr-vor:c:manipulation-issue-selection} or
control~\citep{erd-fel-rot-sch:j:experimental-control,bre-che-nie-wal:c:parliamentary,wil-vor:c:control-social-influence},
measuring candidate performance~\citep{bri-sch-suk:c:mov-tournaments,boe-bre-fal-nie:c:counting-bribery,boe-fal-jan-pet-pie-sch-sko-szu:c:pb-performance}, and
preference elicitation~\citep{lu-bou:c:max-regret,ore-fil-bou:c:efficient,ben-nat-pro-sha:c:pb-preference-elicitation}
are nowadays often investigated through experiments. For example,
researchers evaluate running times of
algorithms~\cite{wal:j:hard-manipulation,fal-sli-sta-tal:j:clustering,wan-sik-she-zha-jia-xia:c:practical-algo-stv},
or test what approximation ratios appear in
practice~\cite{sko-fal-sli:j:multiwinner,kel-has-haz:j:approx-manipulation}. It
is also common to test non-computational properties of elections and
voting rules---for instance to evaluate how frequently a given
phenomenon occurs (e.g., how frequently a given voting rule is
manipulable~\cite{wal:j:hard-manipulation,gol-lan-mat-per:c:rank-dependent,erd-fel-rot-sch:j:experimental-control},
or how frequently particular candidates
win~\cite{elk-fal-las-sko-sli-tal:c:2d-multiwinner}; naturally, the
cited papers are just a few examples).
Yet, designing convincing experiments is not easy and, in particular,
it is not clear what election data to use.

The main contribution of this paper is setting up the ``map of
elections'' framework, which allows one to build and visualize
election datasets, and helps in planning and evaluating experiments.  Further, we argue that
the framework is useful by showcasing a number of its applications.
We focus on ordinal elections, i.e., on the setting where an election
consists of a set of candidates and a collection of voters that rank
the candidates from the most to the least desirable ones. For results
regarding approval elections we point the reader to the follow-up work of
\citet{szu-fal-jan-lac-sli-sor-tal:c:map-approval}.

\subsection{Motivating Example}\label{sec:mot-example}
Imagine we are interested in 
the Harmonic Borda~\cite{fal-sko-sli-tal:c:paths} (HB) multiwinner
voting rule. Under this rule we are given an ordinal election and an
integer $k$, and the rule chooses a committee of $k$ candidates that
minimizes the sum of dissatisfaction scores assigned by the voters; it
is a variant of the classic proportional approval voting rule,
PAV~\citep{Thie95a,kil:chapter:approval-multiwinner} (see
Section~\ref{sec:dec-dod-hb} for a detailed definition).  Finding a
winning committee under this rule is
$\np$-hard~\cite{sko-fal-lan:j:collective}, but such a committee can
be computed, e.g., using ILP solvers (i.e., solvers for integer linear
programming problems).
We want to assess how quickly an ILP solver can compute the winning
committees.
 
Ideally, we would want to try all elections of a given size. For
example, elections with 100 candidates and 100 voters are common in
the multiwinner
literature~\citep{elk-fal-las-sko-sli-tal:c:2d-multiwinner,cel-hua-vis:c:fairness-multiwinner,fal-lac-pet-tal:c:csr-heuristics};
see also the overview of
\citet{boe-fal-jan-kac-lis-pie-rey-sto-szu-was:c:guide}. Naturally,
this is infeasible. Instead, a natural approach is to generate
elections according to several standard distributions, referred to as
\emph{statistical cultures},
and test the algorithms on them.\footnote{It would also be natural to
  consider real-life elections (e.g., from
  Preflib~\cite{mat-wal:c:preflib}, or those collected by Boehmer and
  Schaar~\cite{boe-sch:c:real-life-elections}, currently also included
  in Preflib).  However, many sources of real-life elections lead to
  data over relatively few candidates (often just three or
  four~\cite{tid-pla:b:modeling-elections,mat-for-gol:c:empirical-study},
  although elections with 15 and more candidates are not uncommon
  either).  Accordingly, such elections would not suffice for our
  experiments (yet, they are useful in other cases; see, e.g., the
  works of Brandt et al.~\cite{bra-gei-str:c:ehrhart-theory} and Ayadi
  et al.~\cite{aya-ben-lan-pet:c:stv-incomplete}).} Indeed, many of
the
above-cited papers focus on some subset of the
following four models
(see Section~\ref{sec:prelim} for detailed descriptions of the
distributions):
\begin{enumerate}
\item The impartial culture (IC) model, where all votes are generated
  uniformly at random and independently (this is one of the most
  popular models in experiments, often used as a baseline for other
  experiments~\citep{wal:j:hard-manipulation,erd-fel-rot-sch:j:experimental-control,gol-lan-mat-per:c:rank-dependent,wan-sik-she-zha-jia-xia:c:practical-algo-stv,kel-has-haz:j:approx-manipulation,sko-fal-sli:j:multiwinner}).
  
\item The Polya-Eggenberger urn
  model~\citep{berg1985paradox,mcc-sli:j:similarity-rules}, which
  proceeds similarly to the impartial culture one, but already
  generated votes have increased probability of being generated
  again---as specified by a ``contagion'' parameter---and, hence, the
  model introduces correlations.  The model is quite common in various
  experiments~\cite{wal:j:hard-manipulation,erd-fel-rot-sch:j:experimental-control,fal-sli-sta-tal:j:clustering,kel-has-haz:j:approx-manipulation,sko-fal-sli:j:multiwinner}.
  
\item The Mallows model~\citep{mal:j:mallows}, and its mixtures,
  where the probability of generating a vote depends on its similarity
  to a preselected central one (a dispersion parameter controls the concentration
  of the generated votes). Mallows model is particularly natural for
  settings with some sort of a ground
  truth~\cite{gol-lan-mat-per:c:rank-dependent,sko-fal-sli:j:multiwinner,bra-gei-str:c:ehrhart-theory,mic-sha:c:predict-election-outcome,ben-sko:c:comparing-rules}.

\item The Euclidean model~\cite{enelow1984spatial,enelow1990advances},
  where candidates and voters are drawn as points in a Euclidean space
  and the closer a candidate is to a voter, the more this voter
  appreciates this candidate, is also common in recent
  experiments~\cite{bra-gei-str:c:ehrhart-theory,elk-fal-las-sko-sli-tal:c:2d-multiwinner,fal-sli-sta-tal:j:clustering,lac:c:perpetual-voting};
  the way of generating the ideological positions of the candidates
  and voters is the parameter of the model.

\end{enumerate}

\noindent
However, which of these models should we use and how should we set
their parameters? Perhaps we should also use some other models as well,
possibly generating elections that are
single-peaked~\cite{bla:b:polsci:committees-elections},
single-crossing~\cite{mir:j:single-crossing,rob:j:tax},
group-separable~\cite{ina:j:group-separable,ina:j:simple-majority}, or
that are structured in some other way?
Intuitively, we would
like to have a set of elections that would be as varied as possible for our experiment,
so that, on the one hand, we would not spend too much time on very
similar elections---for which we expect nearly identical results of
the experiments---and, on the other hand, we would not miss
interesting families of elections---for which the results would be
hard to predict.

Unfortunately, researchers working on elections only recently started to ask and carefully
answer questions like the ones given above, though, of course, there
are examples of older experimental papers that address these issues in
detail; see, e.g., the work of \citet{bra-gei-str:c:ehrhart-theory}.
This view is supported by a recent analysis by \citet{boe-fal-jan-kac-lis-pie-rey-sto-szu-was:c:guide} of over $160$ papers that
include numerical experiments on elections and were published between
2009 and 2023 in the IJCAI, AAAI, and AAMAS conference series. In
particular, the authors found that in about half of the papers that
considered ordinal elections (of which there were a bit over $120$),
researchers only used a single data source.\footnote{If one's paper were to address a specific practical problem only using
real-life data from the relevant data source can be
justified. However, many of these papers relied on the highly unrealistic
impartial culture model, which may put some doubt on the relevance and generalizability of
their results.} While the situation has been improving over time and in
recent papers researchers do use more varied data, the process is
slow\footnote{According to
\citet{boe-fal-jan-kac-lis-pie-rey-sto-szu-was:c:guide}, in 2023 still
about half of the papers studying ordinal elections used only one or two
data sources.} and principled advice on how experiments should be conducted is missing.

As reported by
\citet{boe-fal-jan-kac-lis-pie-rey-sto-szu-was:c:guide}, nearly
two-thirds of papers that include experiments on elections rely on
synthetically generated data, without looking at any elections based
on real-life data.  While this is natural---real-life data relevant to
a particular setting is not easy to come by---it reinforces the need
for a good understanding of how elections generated synthetically
relate to each other and to elections occurring in practice.  Indeed,
many papers do not seem to be making informed decisions on which data
sources to include and either rely on intuitions or on copying
approaches of previous authors working on similar topics (see the
discussion of election sizes in the work of
\citet{boe-fal-jan-kac-lis-pie-rey-sto-szu-was:c:guide}, where the
authors discuss typical application scenarios).

Our main motivation behind the map of elections framework is to help
with solving the issues mentioned above. In particular, the maps help
with analyzing and using diverse data sources and with relating synthetic
elections to real-life ones.

\begin{figure}
    \begin{subfigure}[b]{0.49\textwidth}
      \centering
      \includegraphics[width=8.cm, trim={0.2cm 0.2cm 0.2cm 0.2cm}, clip]
        {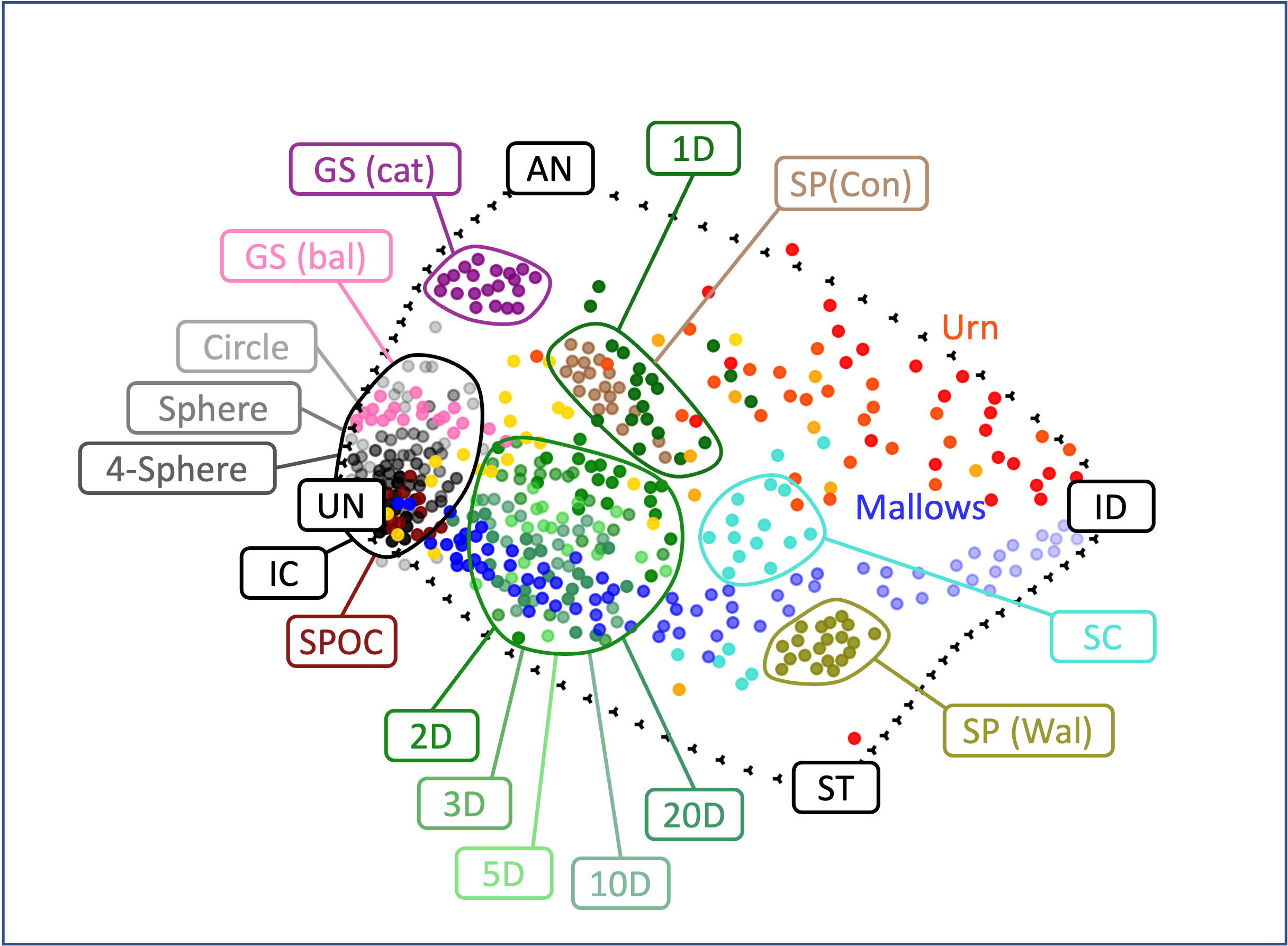}
      \caption{\label{fig:10x100SC}\footnotesize A map of synthetic elections.}
  \end{subfigure}
  \hfill
   \begin{subfigure}[b]{0.49\textwidth}
    \centering            
    \includegraphics[width=8cm, trim={0.2cm 0.2cm 0.2cm 0.2cm}, clip]{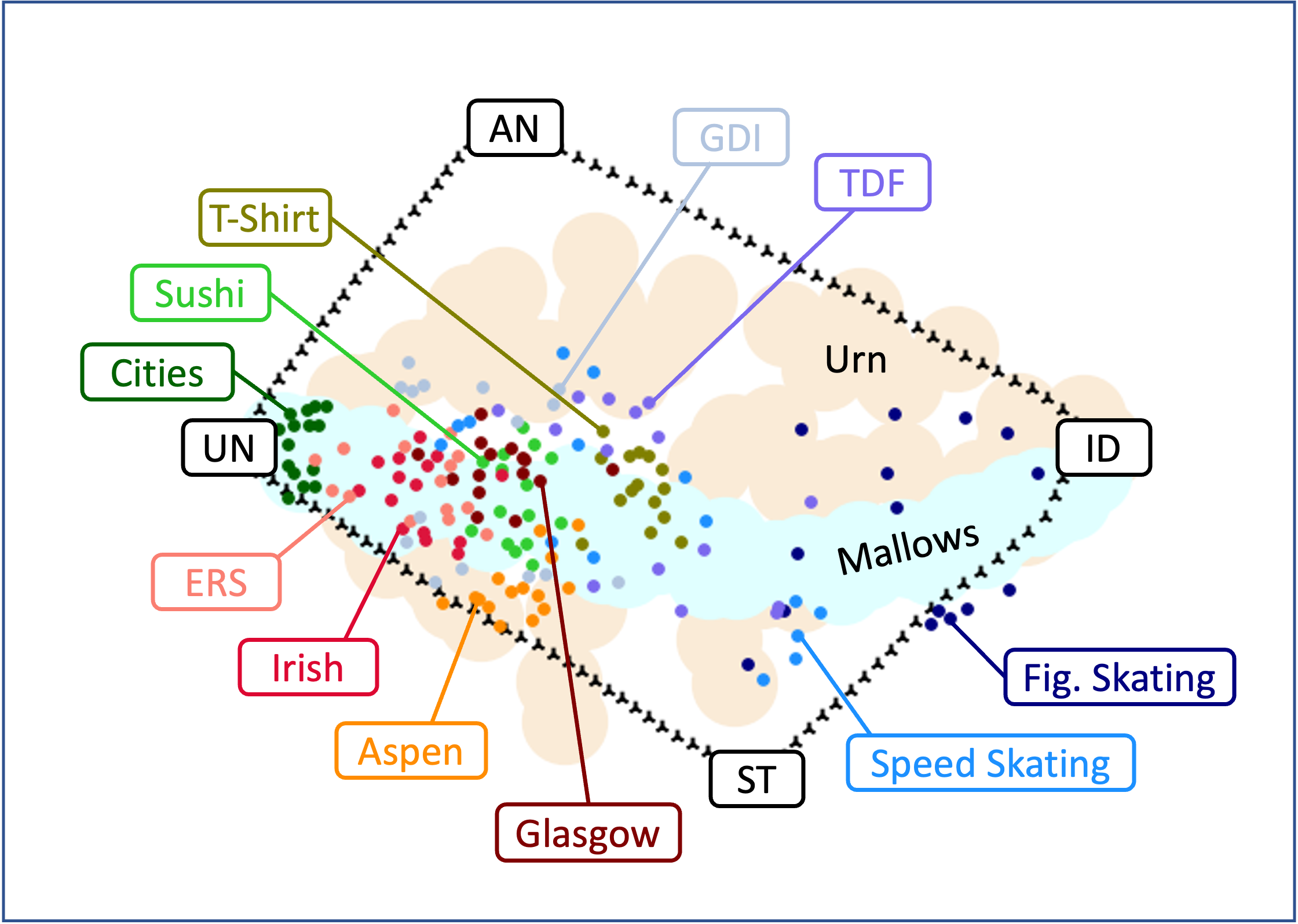}
    \caption{\label{fig:10x100RW}\footnotesize 
    A map of various real-life elections. }
  \end{subfigure}
  \caption{\label{fig:10x100+compass} Two examples of maps of
    elections.  Each point on each of the maps represents an election
    with $10$ candidates and $100$ voters. Elections on the left were
    generated using various statistical cultures (as reflected by
    their colors), whereas those on the right were derived from
    real-life data (after appropriate preprocessing). We define
    various statistical cultures in \Cref{sec:cultures}. Maps of
    synthetic elections are discussed in detail in
    \Cref{sec:map-cultures}, whereas the map of real-life elections is
    discussed in \Cref{preflib_info}; the maps will become fully
    comprehensible only after reading these sections. The maps include
    four special points (UN, ID, ST, AN), to which we refer as the
    ``compass,'' and their connecting paths.  The map on the right
    also includes pale blue and orange areas, which represent places
    where Mallows and urn elections would land, respectively.}
\end{figure}

\subsection{Maps of Elections}

Our main idea is that given a set of elections we may arrange them as
points on a plane, so that the Euclidean distances between the points
would indicate the similarity between the corresponding
elections. Such points form a map, which can then be colored in a
number of ways. For example, each color may correspond to the
statistical culture (or, a real-life dataset) from which the given
election comes, or to some feature of the election.  We show two
examples of such maps and the former type of coloring in
Figure~\ref{fig:10x100+compass}.  In \Cref{sec:experiments}, we show a
number of colorings of the latter type, including ones related to the
motivating example from \Cref{sec:mot-example}.

Preparing a map requires three main components: (1)~the set of
elections that we want to work with, (2)~the way of measuring
similarities between the elections, and (3)~the way of embedding
points on the plane that respects these similarities.  We describe and
evaluate these components in detail in \Cref{sec:positionwise} and
\Cref{sec:map-cultures}. Below we give a quick overview.

\paragraph{Election Datasets and Compass.}
We mostly focus on synthetic elections, generated from various
statistical cultures, such as those mentioned in the motivating
example above. Specifically, we build several election datasets, where each of them consists of 480 elections that
come from several well-known cultures, with various parameter
settings,\footnote{We additionally use 84 elections that help us
  navigate through the map.}
with each dataset involving a different number of candidates.
The idea is that if one wants to perform some computational experiment
on elections, a priori it is not clear which models are most
interesting to consider.  Our datasets contain elections sampled from many
different ones, so that one could use these datasets for initial studies and only
then decide which models to use in follow-up, more focused
experiments.  That said, our choices are, of course, somewhat
arbitrary and one may as well prefer other combinations of models and
their parameters.  We complement our results on the synthetic models
by also considering elections derived from real-life data, available
in Preflib~\citep{mat-wal:c:preflib} or collected by
\citet{boe-sch:c:real-life-elections} (the latter ones are now
included within Preflib).

In addition to the base content, all our datasets also include a
number of elections that help in navigating the maps. In particular,
these elections include what we call ``the compass,'' that is,
four elections that capture four different types of (dis)agreement among the voters:
\begin{enumerate}
\item The identity election, $\ID$, which models perfect agreement and
  contains voters that all have the same preference order.

\item The uniformity election, $\UN$, which  models perfect lack of
  agreement (or, chaos) and contains voters with all possible
  preference orders.

\item The stratification election, $\ST$, which models partial agreement, where
  voters agree that half of the candidates are better than those in the other
  half, but lack agreement on anything else.

\item The antagonism election, $\AN$, which models a conflict where half
  of the voters have one preference order and the other half has the
  opposite one.
\end{enumerate}
Besides the compass elections themselves, we also include \emph{paths} of intermediate
elections in between them.
For example, a path from $\ID$ to $\AN$ can be seen as consisting of
elections where, initially, all voters agree on a single preference order,
but  as we move toward $\AN$, more and more voters report the opposite one.

\paragraph{Similarities Between Elections.}
To measure the similarity of our elections, we need a notion of a
distance. Furthermore, this distance needs to be both neutral and
anonymous---i.e., independent of the names of both the candidates and
the voters. Indeed, we want to compare elections generated from
statistical cultures, where this information is random, as well as
elections that come from different real-life datasets, whose
candidates and voters are unrelated.  Such distances were 
studied by \citet{fal-sko-sli-szu-tal:j:isomorphism}, who---in
particular---proposed the isomorphic variants of the swap and Spearman
distances, which satisfy our requirements. However, while these
distances have many attractive
features~\citep{boe-fal-nie-szu-was:c:metrics}, they also are
$\np$-hard to compute, which makes them difficult to use on elections
with more than a few candidates. Since we want our framework to be
scalable, we provide a new distance, which we call
\emph{positionwise}. This distance is less precise than the isomorphic
ones, but still satisfies our desiderata, is computable in polynomial
time, and appears to give meaningful results.

Briefly put, the key idea of the positionwise distance is to not work
directly on elections and votes, but to use an aggregate
representation. Specifically, given an election with $m$ candidates,
its \emph{frequency matrix} is an $m \times m$ matrix where the
columns correspond to the candidates, the rows correspond to the
positions in the votes, and the entries give fractions of the votes
where a given candidate was ranked on a given position. Positionwise
distance measures the distance between such matrices. An added benefit
of using frequency matrices is that we show a nearly 1-to-1
correspondence between them and elections with a given number of
voters.  Consequently, instead of working directly with our compass
elections, it is often more convenient to work with their matrices.
Generally, we often speak of matrices and elections interchangeably.

\paragraph{Embedding Algorithms.}
There is a number of algorithms that given a set of objects and
distances between each pair of these objects find their embedding as
points on a plane (or, other Euclidean space), so that the Euclidean
distances between the points resemble those between the objects that
they represent.  (getting perfect embedding is often impossible; as an
example, consider embedding 10 points on a plane, where each pair is
supposed to be at the same distance). These algorithms may either focus on
obtaining embeddings that are more visually pleasing, or may strive
for as high accuracy as possible.

In most of our visualizations, we use the Kamada-Kawai
algorithm~\citep{kam-kaw:j:embedding}, which belongs to the latter
group and which gave the most accurate embeddings according to a
number of criteria (another example of such an algorithm is the
Multidimensional Scaling algorithm, MDS,~\citep{kru:j:mds,lee:j:mds},
but it did not perform as well).
A brief idea of the Kamada-Kawai algorithm is that it considers the
points to be embedded as vertices in the 2D space that are connected
with springs.  Each spring has its resting length corresponding to the
desired distance (the positionwise distance in our case) between its two endpoints, but typically it is either
stretched or compressed. The algorithm iteratively moves the vertices
to minimize the energy of the system. The original algorithm uses the
Newton-Raphson method for this purpose, but we use the faster
optimization method presented by Barzilai and
Borwein~\citep{bar-bor:j:kam-kaw}. For a comparison of the two
variants of the Kamada-Kawai algorithm with the two optimization methods,
see the work of \citet{pos-has-now-pla:c:kam-kaw-bar-bor}; we point to
the work of \citet{mt:sapala} for the implementation that we use.

We note that in some cases the force-directed algorithm of
\citet{fru-rei:j:graph-drawing}, which belongs to the group of
algorithms that prioritize creating a visually pleasing embedding,
also can be useful. In particular, it produces embeddings where the
election points are more spread over the plane, which in some
circumstances can help with the analysis of the map; such a more
spread map still shows the main relations between various elections,
but the points do not overlap and it is easier to see their colors.
We conduct a quantitative comparisons of the different embeddings
produced by these algorithms in \Cref{app:embeddings}.

\subsection{Use Cases for the Maps }\label{sec:usecases}

There is a number of ways in which maps of elections can help in
conducting numerical experiments. Below we give three general types of
experiments where maps can be helpful:

\begin{description}
\item[Finding Relations Between Elections.] By their very nature, maps
  arrange elections in 2D space, so that similar elections are close
  to each other. By examining a map and comparing the positions of
  particular elections to each other and to the characteristic compass
  elections, we learn about their nature and properties. For example,
  in \Cref{sec:map-cultures} we form a dataset of elections
  generated from various statistical cultures and find how they relate
  to each other by drawing a map where each election has a color that
  specifies the statistical culture from which it was generated. Among
  other findings, we observe how the Conitzer and Walsh models for
  generating single-peaked elections produce elections that are quite
  different from each other, and how the Conitzer model is similar to
  the 1D Euclidean model. This type of experiment also allowed
  \citet{boe-bre-fal-nie-szu:c:compass,boe-fal-kra:c:mallows-normalization}
  to discover that the dispersion parameter in the Mallows model may
  need normalization, and to propose such a normalization.
  Perhaps most importantly, by looking at positions where elections
  land on the maps, we get a better understanding as to which areas of
  the election space are not covered by our datasets. This encourages
  forming more diverse datasets for further experiments (indeed, we
  form such a dataset in \Cref{sec:map-cultures}).
  Further, experiments where we analyze the positions of elections on
  the map allow us to learn about real-life data. For example, while
  there are statistical cultures that generate highly polarized
  elections---i.e., elections close to the antagonism compass
  election---real-life elections rarely fall in the respective region
  of the map (see Section~\ref{sec:real-life}).  By analyzing the
  proximity of real-life elections to those generated from statistical
  cultures, we may also learn which cultures (have a chance to)
  generate realistic data. We show some results in this direction in
  Section~\ref{sec:real-life}; in a follow-up paper,
  \citet{boe-bre-elk-fal-szu:c:frequency-matrices} pursue this
  direction more systematically.

\item[Visualizing Election Properties.] Given a dataset of elections,
  we may want to learn something about some of their features, such as
  the Borda scores of their winners (see
  \Cref{sec:map:borda-copeland}), the running times an an ILP solver
  takes to compute their winners according to some $\np$-hard voting
  rules (see \Cref{sec:map:ilp-times}), or many others.  In this case,
  we color the elections on the map according to a given feature. This
  way we obtain a comprehensible bird's eye view at the results for
  all the elections, without resorting to computing averages or other
  statistics (naturally, these statistics are also important---the
  maps simply offer a different perspective and an ability to spot
  issues that otherwise would not be as readily visible).
  In the most beneficial case, we learn that the considered
  features closely depend on elections' positions on the map. For
  example, in \Cref{sec:map:borda-copeland} we find that Borda scores of the winners tend to
  increase as we move from the uniformity compass election toward the
  identity one. In \Cref{sec:map:ilp-times} we see a similar phenomenon for the running
  time required to find Harmonic Borda winners using an ILP solver.
  Another possibility is that we would spot some pattern regarding the
  considered feature and statistical cultures present on the map.  For
  example, in \Cref{sec:map:ilp-times} we observe that an ILP
  formulation for computing the results of the Dodgson rule tends to
  require a close-to-fixed amount of time for elections from all our
  statistical cultures (for a given fixed size of elections) except
  for two, where on one it is much faster and on the other it is much
  slower. Maps that include diverse statistical cultures facilitate
  such observations because they include varied data\footnote{As shown
    by \citet{boe-fal-jan-kac-lis-pie-rey-sto-szu-was:c:guide}, many
    computational social choice papers that include experiments on
    elections limit their attention to a small collection of
    statistical cultures. When performing experiments on such limited
    data it is very easy to overlook such phenomena as described
    here.} and allow one to see the results ``all at once.''
  An experiment of the type described here was recently provided by
  \citet{con-gra-men:c:explanations-map}, where the map was used
  to analyze the sizes of explanations regarding why a particular
  candidate won in a given election.
  
\item[Comparing Algorithm Performance.] The third use case is when we
  want to compare, e.g., how two different algorithms perform on
  elections from a given dataset.  As an example, in
  \Cref{sec:map:approx} we take two approximation algorithms for the
  $\np$-hard Chamberlin--Courant multiwinner voting
  rule~\citep{cha-cou:j:cc} and compute the approximation ratios they
  achieve on our elections. Then, we color the elections on the map
  according to the algorithm that performed better (with the blue
  color for one algorithm and red for the other) and its performance
  (the better is the achieved approximation ratio, the darker is the
  color). Such an experiment is particularly useful for areas of the
  map where, on average, both algorithms perform similarly but where
  on particular elections one of the algorithms is notably better than
  the other one: Various colorings of the map allow us to spot such places
  more easily than using basic statistics (such as the average
  approximation ratio on elections from a given statistical
  culture). Recently, a similar visualization was used by
  \citet{del-pet:t:indifferent-stv-map}, who analyzed two different
  ways of generalizing the STV voting rule to elections where voters
  can express indifferences.
\end{description}

We want to stress that, in principle, all the experiments and types of
experiments that we described above could be performed without maps
and without the visualizations that they offer, possibly letting one
reach the same conclusions and spotting the same phenomena as using
the maps. The main advantage of using the maps is that it may be
significantly easier to make relevant observations and reach
particular conclusions.
We also stress that the researchers can take advantage of our maps
even without including the actual plots in their papers. For example,
they may run their initial experiments using our diverse, synthetic
datasets, analyze the maps to see where interesting phenomena happen,
and then focus only on those areas.  Indeed, this is how the map
framework was used, e.g., in the works of
\citet{boe-bre-fal-nie:c:counting-bribery} and
\citet{boe-bre-fal-nie:c:counting-bribery-real-life}.
That said, the maps also have a number of drawbacks and there are
challenges that one faces when using them. We describe them throughout
the paper and summarize them in \Cref{sec:conclusions}.

\subsection{Related Work}\label{sec:related}

To the best of our knowledge, this paper is the first to provide a
principled framework for analyzing and visualizing election
datasets. In this respect, we are not aware of previous works to which
we can directly compare our approach. Nonetheless, we do build on
numerous ideas that were already present in the literature, and we
continued working on the topic.  Below we review both our follow-up works
and other related literature.

\paragraph{The Map Framework.}
This paper is based on two conference papers, where the first one
proposed the map framework and showed its usefulness through initial
experiments~\cite{szu-fal-sko-sli-tal:c:map}, and the second one
added, e.g., the compass elections, analysis of position and frequency
matrices, and studies of real-life
data~\cite{boe-bre-fal-nie-szu:c:compass}. The paper is also based on
fragments of the PhD thesis of Stanisław
Szufa~\citep{szu:thesis:map-of-elections}. Both conference papers also
included content that we decided to omit---sometimes to maintain
focus, sometimes because further research showed that our initial
ideas were misguided, and sometimes because they deserved more
detailed treatment than we could have provided here. We also show some
additional results, not present in either of the two papers. Indeed,
most of Sections~\ref{sec:map-cultures} and~\ref{sec:experiments}
present new material; Section~\ref{sec:real-scores} also is new. Yet,
on the whole, the current paper can largely be seen as a union of the
two initial works and fragments of the PhD thesis. Next we give an
overview of the follow-up papers that regard the framework.

\citet{boe-bre-elk-fal-szu:c:frequency-matrices} extended the map
framework by observing that the positionwise distance can be applied not
only to elections, but also to distributions over votes---such as
impartial culture, the Mallows distribution, or various distributions
of single-peaked votes---by measuring distances between the expected
frequency matrices of such distributions. They have provided a number
of ways of computing such matrices and have shown that they can be
used for learning parameters of the models that, e.g., are as close as
possible to various real-life elections. In this sense, their work put
a principled layer over the analysis of real-life elections that we provide
in \Cref{sub:recom}.

The work of \citet{boe-bre-elk-fal-szu:c:frequency-matrices} can also
be seen as presenting one more argument---on top of low computational
complexity and apparent usefulness---for using the positionwise
distance in the map framework. Yet, a priori, it is not really clear
if the positionwise distance really is making a good compromise
between computational complexity and expressivity. Perhaps an even
simpler distance would give equally good (or, only slightly worse)
results? Or, maybe, the results given by the positionwise distance are
too far off and we should seek a different approach?  This issue was
addressed by \citet{boe-fal-nie-szu-was:c:metrics}, who compared
several distances and analyzed their properties. In the end, they
concluded that the compromise offered by the positionwise distance
indeed is appealing. However, if one considers small enough elections
or has sufficient computing power, using the isomorphic swap distance
certainly \emph{is} preferable, as done, e.g., by
\citet{fal-kac-sor-szu-was:c:microscope}, who explain the positions of
elections on the map---generated according to the isomorphic swap
distance---using the notions of agreement, diversity, and polarization
among the voters. \citet{boe-cai-fal-fan-jan-kac:c:position-matrices}
have shown a number of downsides and difficulties that may arise when
using the positionwise distance, but they also concluded that on
typical datasets, such as those that we use in this paper, the issues
are not significant.

Maps of elections already proved useful in planning and conducting
several experiments and we mentioned some such examples in
\Cref{sec:usecases}.

While our work focuses on ordinal elections, in principle the map
framework can be applied to any type of objects, provided that one can
find an appropriate distance and, possibly, a set of compass points to
navigate the map. So far, this was done by
\citet{szu-fal-jan-lac-sli-sor-tal:c:map-approval} for the case of
approval elections (i.e., elections where each vote is a set of
candidates that a given voter finds appealing), by
\citet{boe-hee-szu:c:map-stable-roommates} for the case of the stable
roommates problem (in this problem we have a group of people who have
ordinal preferences over the other ones, regarding who to be teamed-up
with), and by \citet{fal-lac-sor-szu:c:map-of-rules} for the case of
approval-based multiwinner voting rules. Further,
\citet{fal-kac-sor-szu-was:c:microscope} have used the same approach
for visualizing particular votes within a single election.

\paragraph{Distances.}
There is a number of works in computational social choice that use
various forms of distances, either between votes or between
whole elections. For example, this is the case in the distance
rationalizability
framework~\cite{nit:j:closeness,mes-nur:b:distance-realizability,ser-ada:c:dr,elk-fal-sli:j:distance-rational,had-wil:j:dr}
(see also the overview of \citet{elk-sli:b:rationalization}) or in
various works that use distances to define forms of manipulating
elections~\cite{obr-elk:c:optimal-manipulation,bau-hog:c:generalized-distance-bribery,dey:j:local-distance-bribery,ana-dey:j:distance-manipulation}
or to analyze the space of
elections~\cite{obr-elk-fal-sli:c:swap-geometry,obr-elk-fal:c:convexity}.
The main difference between these works and ours is that they consider
distances between elections with the same candidate sets, whereas we
allow these sets to differ (although we do require them to be of the
same cardinality).  We point the readers interested in an encyclopedic
listing of various kinds of distances to the work of
\citet{dez-dez:b:encyclopedia}.

\paragraph{Election Data.}
Our datasets mostly consist of synthetic data, generated according to
a number of prominent (and not-so-prominent) statistical cultures. We
describe these cultures and give appropriate references in
Section~\ref{sec:cultures}.  In \Cref{preflib_info}, we also draw a
map of real-world elections.  In this map, we include elections from
all datasets from Preflib\footnote{We refer to the contents of Preflib
  prior to extending it with the datasets collected by
  \citet{boe-sch:c:real-life-elections}.} that contain at least ten
candidates and in which votes are not heavily incomplete (see
\Cref{app:selection} for an overview of the Preflib datasets that we
considered and the reasons for rejecting many of them).  These
elections include the now-classic Sushi preferences collected by
\citet{kam:c:sushi}, elections from the city councils of Aspen and
Glasgow~\citep{openstv}, and surveys regarding costs of living and
populations of a number of
cities~\citep{car-cha-kri-vou:j:optimizing}.  Moreover, we included
three datasets capturing sports competitions, collected by
\citet{boe-sch:c:real-life-elections}.  Naturally, there is also a
number of other sources of real-life data that one might use, but
which we do not include in our experiments (also, oftentimes, these
elections have below ten candidates).  For example, there are
extensive datasets from field experiments held during French
presidential
elections~\citep{las-str:j:french-election-2002,bau-ige:b:french-elections-2007,bau-ige-leb-gav-las:j:french-election-2012,bou-bla-bau-dur-ige-lan-lar-las-leb-mer:t:french-elections-2017}
and during German state and federal
ones~\citep{alo-gra:j:german-approval-voting}.

\paragraph{Embedding Algorithms.}
There is a number of algorithms that one could use to obtain an
embedding of points in space in a way that attempts to respect their
prescribed distances. For example, one could consider the Principal
Component Analysis~\citep{min:j:pca}, $t$-Distributed Stochastic
Neighbor Embeddings ($t$-SNE)~\citep{maa:c:tsne,maa-hin:j:tsne}, or
Locally Linear Embeddings~\citep{don-gri:j:lle,mha-wan:c:lle}.  We
have tried these algorithms, but the results where highly
unsatisfactory on our data. Instead, we focus on the algorithm of
\citet{kam-kaw:j:embedding} (as adapted by \citet{mt:sapala}), which
work by simulating a system of springs. In \Cref{app:embeddings} we
also compare it to the algorithms of \citet{fru-rei:j:graph-drawing}
and to (metric) Multidimensional Scaling~\citep{kru:j:mds,lee:j:mds},
which also perform well.

\section{Preliminaries}\label{sec:prelim}

We write $\reals_+$ to denote the set of non-negative real
numbers. For an integer $n$, we write $[n]$ to denote the set
$\{1,\ldots, n\}$.  Given a vector $x = (x_1, \ldots, x_m)$, we
interpret it as an $m \times 1$ matrix, i.e., we use column vectors.
For a matrix $X$, we write $x_{i,j}$ to refer to the entry in its
$i$-th row and $j$-th column.  For two finite sets $C$ and $D$ of the
same cardinality, we write $\Pi(C,D)$ to denote the set of bijections
from $C$ to $D$.
\smallskip

\subsection{Elections}
An election is a pair $E = (C,V)$, where $C = \{c_1, \ldots, c_m\}$ is
a set of candidates and $V = (v_1, \ldots, v_n)$ is a collection of
voters. Each voter~$v_i$ has a preference order, i.e., a ranking of
the candidates from the most to the least desirable one. To simplify
notation, $v_i$ refers both to the voter and to his or her preference
order; the exact meaning will always be clear from the context.
We write $v \colon a \pref b$ to indicate that voter $v$ ranks
candidate $a$ ahead of candidate $b$, and we write $\pos_v(c)$ to
denote the position of candidate $c$ in $v$'s preference order (the
top-ranked candidate has position $1$, the next one has position $2$,
and so on).  For an election $E = (C,V)$ and two candidates
$a, b \in C$, we write $M_E(a,b)$ to denote the fraction of voters in
$V$ who prefer $a$~to~$b$.
By $d_\swap(v,u)$ we denote the swap distance between votes $u$ and
$v$ (over the same candidate set $C$), i.e., the minimum number of
swaps of adjacent candidates needed to turn vote $u$ into vote $v$.
By $d_\spear(v,u)$ we denote the Spearman's distance between $v$
and~$u$. It is defined as $d_\spear(v,u) = \sum_{c \in C}|\pos_v(c) -
\pos_u(c)|$.

Let $C$ and $D$ be two equal-sized candidate sets.  If $v$ is a
preference order over $C$ and $\delta$ is a bijection in $\Pi(C,D)$,
then by $\delta(v)$ we mean a preference order obtained from $v$ by
replacing each candidate $c \in C$ with candidate $\delta(c) \in
D$. Similarly, for an election $E = (C,V)$, where
$V = (v_1, \ldots, v_n)$, by $\delta(V)$ we mean
$(\delta(v_1), \ldots, \delta(v_n))$, and by $\delta(E)$ we mean
$(\delta(C),\delta(V)) = (D, \delta(V))$.

\subsection{Structured Preferences}\label{sec:structured}

Our datasets will include  elections where the voters' preferences have some
particular structure, such as \emph{single-peaked} or
\emph{single-crossing} ones (or several others). Such elections have been frequently
studied and used in the literature \cite{elk-lac-pet:t:domains}  to model various features observed in
real-life scenarios. Further, many problems that are intractable in
the general case become solvable in polynomial time for structured
elections (see the overview by \citet{elk-lac-pet:t:domains} for a
detailed discussion of structured domains, including motivations as
well as their axiomatic and algorithmic properties).

Single-peaked preferences, introduced by
Black~\cite{bla:b:polsci:committees-elections}, capture settings where
it is possible to order the candidates in such a way that as we move
along this order, then each voter's appreciation for the candidates
first increases and then decreases (one example of such an order is
the classic left-to-right spectrum of political opinions; another one
regards preferences over temperatures in an office). Recently, Peters and
Lackner~\cite{pet-lac:j:spoc} introduced the notion of preferences
single-peaked on a circle, where instead of ordering the candidates as
a line, we arrange them cyclically (such preferences may appear, e.g.,
when choosing a video-conference time and the voters are in different
time zones).

\begin{definition}[Black~\cite{bla:b:polsci:committees-elections},
  Peters and Lackner~\cite{pet-lac:j:spoc}]\label{def:sp}
  Let $C$ be a set of candidates and let $c_1 \lhd c_2 \lhd \cdots
  \lhd c_m$ be a strict, total order over $C$, referred to as the
  \emph{societal axis}. Let $v$ be a preference order over $C$.  We
  say that $v$ is \emph{single-peaked} with respect to $\lhd$ if for
  each $\ell \in [m]$ the set of the $\ell$ top-ranked candidates
  according to $v$ forms an interval within $\lhd$.
  We say that $v$ is \emph{single-peaked on a circle (SPOC)} with respect to $\lhd$ if for each
  $\ell \in [m]$, the set of the $\ell$ top-ranked candidates either forms
  an interval within $\lhd$ or a complement of an interval.
  An election is single-peaked (is single-peaked on a circle)
  if there is an axis such that each voter's preference order
  is single-peaked (single-peaked on a circle) with respect to
  this axis.
\end{definition}

We note that if an election is single-peaked then each voter ranks one
of the two extreme candidates from the societal axis last. Elections
single-peaked on a circle are not restricted in this way.

\begin{example}

  Consider the candidate set $\{a,b,c,d\}$ and the axis $a \lhd b \lhd
  c \lhd d$. The Vote $b \pref c \pref a \pref d$ is
  single-peaked with respect to~$\lhd$ 
  (sets $\{b\}$,
  $\{b,c\}$, $\{a,b,c\}$, and $\{a,b,c,d\}$ form intervals
  within~$\lhd$), whereas 
  $b \pref a \pref d \pref c$
  is single-peaked on a circle with respect to~$\lhd$, but is not
  single-peaked ($\{a,b,d\}$ is not an interval within 
  $\lhd$). The Vote $a \pref c \pref b \pref d$ is not even single-peaked
  on a circle with respect~to~$\lhd$.
\end{example}

Single-crossingness is based on a similar principle as
single-peakedness, but with respect to the voters. The idea is that we
can order the voters in such a way that for each pair of candidates
$a$ and $b$, as we move along this order, the relative preference
over $a$ and $b$ changes at most once (in the left-to-right political
spectrum example, one could think that voters are ordered from the
extreme-left one to the extreme-right one, initially the voters prefer
the more left-wing candidate, and then they switch to the more
right-wing one). Single-crossing preferences were introduced by
Mirr\-lees~\cite{mir:j:single-crossing} and Roberts~\cite{rob:j:tax}
to capture preference orders that arise in the context of taxation.
\begin{definition}[Mirrlees~\cite{mir:j:single-crossing}, Roberts~\cite{rob:j:tax}]
  An election $E = (C,V)$ is single crossing if it is possible to
  order the voters in such a way that for each pair of candidates $a,b
  \in C$, the set of voters that prefer $a$ to $b$ either forms a
  prefix or a suffix of this order.
\end{definition}

We say that a set of preference orders $\calD$ is a \emph{single-crossing
  domain} if every election where each voter has a preference order
from $\calD$ is single-crossing.
For a discussion and analysis
of single-crossing domains, we point the reader to
the work of Puppe and Slinko~\cite{pup-sli:j:single-crossing}.

\begin{example}
  Consider the candidate set $C = \{a,b,c,d\}$ and the following votes:
  \begin{align*}
    v_1 \colon a \pref b \pref c \pref d, &&
                                            v_5 \colon c \pref b \pref d \pref a, \\
    v_2 \colon b \pref a \pref c \pref d, &&
                                            v_6 \colon c \pref d \pref b \pref a, \\
    v_3 \colon b \pref c \pref a \pref d, &&
                                            v_7 \colon d \pref c \pref b \pref a. \\
    v_4 \colon b \pref c \pref d \pref a,                                              
  \end{align*}
  The election $E = (C,V)$, where $V = (v_1, \ldots, v_7)$ is single-crossing. Since deleting
  or cloning voters cannot turn a single-crossing election into one that is not single-crossing,
  we see that $\calD = \{v_1, \ldots, v_7\}$ is a single-crossing domain. Note that $E$ is also a single-peaked election with respect to $a \lhd b \lhd
  c \lhd d$.\footnote{We want to remark that this is merely a coincidence, as there are single-peaked elections that are not single-crossing and the other way around. For instance, $(C,\{d \pref a \pref b \pref c,d \pref c \pref b \pref a\})$ is single-crossing but not single-peaked, whereas 
$(C,\{a \pref b \pref c \pref d,a \pref b \pref d \pref c, b \pref a \pref d \pref c,b \pref a \pref c \pref d\})$
  is single-peaked with respect to $c \pref a \pref b \pref d$ but not single-crossing. 
  }
\end{example}

We are also interested in group-separable elections, introduced by
\citet{ina:j:group-separable, ina:j:simple-majority}. Inada's original
definition says that an election is group-separable if each subset $A$
of at least three candidates from this election can be partitioned
into two nonempty subsets, $A'$ and $A''$, such that every voter
either ranks all members of $A'$ ahead of all members of $A''$ or the
other way round.  The intuition is that for every candidate set $A$,
some candidates have one feature (say, those in $A'$) and the others
have an opposite one (those in $A''$). Some voters prefer one variant
of the feature and some prefer the other. Indeed, these features are
organized hierarchically in a tree, as captured in a tree-based
definition of group-separable elections, provided by
\citet{kar:j:group-separable}. For examples of this feature-based
interpretation, as well as for computational properties of
group-separable elections, see the work of
\citet{fal-kar-obr:c:group-separable}. While these definitions are
equivalent, we will find the tree-based one more convenient and
present it in the following.

Fix a candidate set $C$ and consider an ordered, rooted tree $\calT$
with $|C|$ leaves, where each node either is a leaf or has at least
two children, ordered in a sequence (for the sake of simplicity, we will speak of these children as
ordered from left to right).
Further, each leaf has a unique candidate from $C$ as a label. 
We refer to such trees as $C$-based.  A frontier of $\calT$ is
the preference order obtained by reading the lables of the leaves from
left to right. A preference order $v$ is \emph{consistent} with $\calT$ if we
can obtain $v$ as a frontier of $\calT$ by reversing the order of
children of (some of) the internal nodes.

  \begin{figure}
    \begin{subfigure}[b]{0.49\textwidth}
      \centering
    \begin{tikzpicture}
      \Tree [ .$\calT_1$ [ [  [ .$a$ ] [ .$b$ ]  ] [  .$c$  ] ] [ [ .$d$ ] [ .$e$ ] ] ]      
    \end{tikzpicture}
      \caption{A balanced tree $\calT_1$.}
    \end{subfigure}    
    \begin{subfigure}[b]{0.49\textwidth}
      \centering
      \begin{tikzpicture}
        \tikzset{level distance=24pt}
        \tikzset{sibling distance=14pt}
        \Tree [ .$\calT_2$  $a$ [ $b$ [ $c$ [ $d$  $e$ ] ]  ] ] ]
      \end{tikzpicture}
      \caption{A caterpillar tree $\calT_2$.}
    \end{subfigure}    
    \caption{\label{fig:g-s}Trees from Example~\ref{ex:g-s}.}
  \end{figure}
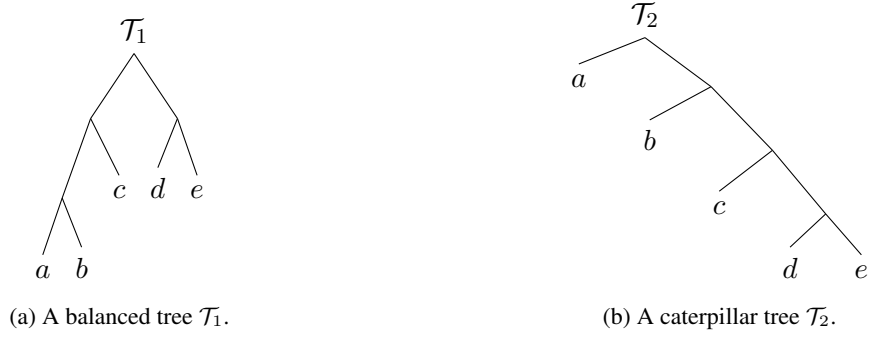

\begin{definition}[\citet{ina:j:group-separable,
    ina:j:simple-majority,kar:j:group-separable}]
  An election $E = (C,V)$ is group-separable if there is a $C$-based
  tree $\calT$ such that every vote from $V$ is consistent with
  $\calT$.
\end{definition}
We are
particularly interested in the following two types of group-separable
elections:
\begin{enumerate}
\item We say that an election is \emph{balanced group-separable} if it is
  group-separable with respect to a complete binary tree (where every level
  except possibly the last one is filled with nodes).
  
\item We say that an election is \emph{caterpillar group-separable} if
  it is group-separable with respect to a binary caterpillar tree (in
  such a tree, every node is either a leaf or has exactly two
  children, of which at least one is a leaf).
\end{enumerate}

\newcommand{\fvote}[5]{#1 \pref #2 \pref #3 \pref #4 \pref #5}

\begin{example}\label{ex:g-s}
  Consider the candidate set $C = \{a,b,c,d,e\}$,  the following votes
  \begin{align*}
    v_1 \colon \fvote a b c d e, &&
    v_2 \colon \fvote b a c d e, &&
    v_3 \colon \fvote e d c a b, \\
    v_4 \colon \fvote a b d e c, &&
    v_5 \colon \fvote b c e d a, &&
    v_6 \colon \fvote b c d e a, &&
  \end{align*}
  and the trees $\calT_1$ and $\calT_2$ from Figure~\ref{fig:g-s} (we see
  that $\calT_1$ is balanced while $\calT_2$ is caterpillar). Vote
  $v_1$ is consistent with both trees as it is their frontier. Votes
  $v_2$ and $v_3$ are consistent with $\calT_1$ but not $\calT_2$ (in
  particular, because for consistency with $\calT_2$ candidate $a$ has
  to be either ranked first or last), where $v_4$, $v_5$, and $v_6$
  are consistent with $\calT_2$ but not $\calT_1$ (in particular,
  because for consistency with $\calT_1$ candidates $a$, $b$, and $c$
  must be ranked consecutively).
\end{example}

Finally, we consider Euclidean preferences, discussed in detail, e.g.,
by Enelow and Hinich~\cite{enelow1984spatial,enelow1990advances}, which are
defined geometrically:
Each candidate and each voter corresponds to a point in
a Euclidean space and voters form their preferences by ranking the
candidates with respect to their distance.

\begin{definition}\label{def:t-euclidean}
  Let $t$ be a positive integer.  An election $E = (C,V)$ is
  $t$-Euclidean if it is possible to associate each candidate and each
  voter with his or her ideal point in $t$-dimensional Euclidean space
  $\mathbb{R}^t$ in such a way that the following holds: For each
  voter $v$ and each two candidates $a, b \in C$, $v$ prefers $a$ to
  $b$ if and only if $v$'s point is closer to the point of $a$ than to
  the point of $b$.
\end{definition}

It is well known that $1$-dimensional Euclidean elections are
both single-peaked and single-crossing. We also note that in a $2$-dimensional Euclidean election where
the ideal points are arranged on a circle, the voters have SPOC
preferences.

\subsection{Statistical Cultures}\label{sec:cultures}

Below we describe a number of ways of generating random elections
(i.e., statistical cultures).  For each of the models we either
describe explicitly how an election with $m$ candidates and $n$ voters
is generated, or we describe the process of generating a single vote
(and then it is implicit that this process is repeated $n$
times to obtain an election).\smallskip

\begin{description}
  \item[Impartial Culture and Related Models.]
Under the impartial culture model (IC), every preference order is sampled
with the same probability. That is, to generate a vote we choose a
preference order uniformly at random.
Under impartial anonymous culture (IAC), we require that each
\emph{voting situation} appears with the same probability (a voting
situation specifies how many votes with a given preference order are
present in a profile; thus IAC generates anonymized
preference profiles uniformly at random).
The Impartial Anonymous Neutral Culture (IANC) model additionally
abstracts away from the names of the
candidates~\cite{ege-gir:j:isomorphism-ianc}. For elections with 100
candidates, which are our main focus, these three models are nearly
the same, so we only consider IC.

\item[Pólya-Eggenberger Urn Model.]
  The Pólya-Eggenberger urn
  model~\cite{berg1985paradox,mcc-sli:j:similarity-rules} is
  parametrized with a nonnegative number $\alpha$, known as the level of
  contagion, and proceeds as follows: Initially, we have an urn with
  one copy of each of the $m!$ possible preference orders. To generate
  a vote, we draw a preference order from the 
  urn uniformly at random (this is the generated vote), and 
  we return it to the urn together with additional $\alpha m!$
  copies. For larger $\alpha$s 
  the generated votes are more correlated.  For $\alpha = 0$ the model
  is equivalent to IC, for $\alpha = \nicefrac{1}{m!}$ it is
  equivalent to IAC, and for $\alpha = \infty$ it produces unanimous
  elections.

\item[Mallows Model.]
  The Mallows model~\cite{mal:j:mallows} is parameterized by the
  dispersion parameter $\phi \in [0,1]$ and the center preference
  order $v$ (we choose it uniformly at random and then use it for all
  the generated votes). We generate each vote independently at random,
  where the probability of generating vote $u$ is proportional to
  $\phi^{d_\swap(v,u)}$. For $\phi = 1$, the model is equivalent to
  impartial culture, whereas for $\phi = 0$ all generated votes are
  identical to the center $v$.  See the work of Lu and Boutilier for
  an effective algorithm for sampling from the Mallows
  model~\cite{lu-bou:j:sampling-mallows}.  Instead of using the
  dispersion parameter directly, we will use its normalized variant,
  which we discuss in \Cref{sec:datasets}; we also point the reader to
  the recent work of \citet{boe-fal-kra:c:mallows-normalization} for a
  comparison of the two Mallows parameterizations.

  \smallskip
		
\item[Single-Peaked Models.]
We consider two ways of generating single-peaked elections, one
studied by Walsh~\cite{wal:t:generate-sp} and one studied by
Conitzer~\cite{con:j:eliciting-singlepeaked}; 
thus we refer to them as the \emph{Walsh model} and the \emph{Conitzer
  model}. 
Under both models, we first choose the axis (uniformly at
random). To generate a vote, we proceed as follows:
\begin{enumerate}
\item Under Walsh's model, we choose a single-peaked preference
  order (under the given axis) uniformly at random.
  Walsh~\cite{wal:t:generate-sp} provided a sampling algorithm for
  this task.

\item To generate a vote under the Conitzer model for the axis $c_1
  \lhd c_2 \lhd \cdots \lhd c_m$, we first choose some candidate $c_i$
  (uniformly at random) to be ranked on top (so, at this point, $c_i$
  is the only ranked candidate). Then, we perform $m-1$ steps
  as follows: Let $\{c_j, c_{j+1}, \ldots,
  c_{k}\}$ be the set of the currently ranked candidates. We choose
  the next-ranked candidate from the set $\{c_{j-1}, c_{k+1}\}$
  uniformly at random (if $j=1$ or $k=m$ then this set contains a single element and randomization
  is not needed).
\end{enumerate}
The Conitzer model is sometimes referred to as the \emph{random peak}
model, whereas the Walsh model can be dubbed as \emph{impartial
  culture over single-peaked votes}.

To generate a single-peaked on a circle vote, we use the Conitzer
model, except that we take into account that the axis is cyclical
(consequently, each SPOC vote is equally likely to be generated and,
hence, we could also refer to this model as impartial culture over SPOC
votes, or as the Walsh model for SPOC elections).  \smallskip

\item[Single-Crossing Models.]
  We would like to generate single-crossing elections uniformly at
  random, 
  but we are not aware of an efficient sampling algorithm.
  Thus, to generate a single-crossing election, we first generate a
  single-crossing domain $\calD$ and then draw $n$ votes from it
  uniformly at random. To generate this 
  domain for a candidate set $C = \{c_1, \ldots, c_m\}$, we use the
  following procedure:
  \begin{enumerate}
  \item We let $v$ be a preference order
    $c_1 \pref c_2 \pref \cdots \pref c_m$ and we output $v$ as the
    first member of our domain.
  \item We repeat the following steps until we output
    $c_m \pref c_{m-1} \pref \cdots \pref c_1$: (a) We draw candidate
    $c_j$ uniformly at random and we let $c_i$ be the candidate ranked
    right ahead of $c_j$ in $v$ (if $c_j$ is ranked on top, then we
    repeat); (b) If $i < j$ then we swap $c_i$ and $c_j$ in $v$ and
    output the new preference order.
  \item We randomly permute the names of the candidates.
  \end{enumerate}
  Our 
  domains always have cardinality $(\nicefrac{1}{2})m(m-1)+1$.  We
  believe that studying the process of sampling single-crossing
  elections (uniformly at random) in more detail is an interesting
  direction for future research. The approach given here is quite ad
  hoc, and we use it due to the lack of better alternatives.

\item[Group-Separable Model.] Given a candidate set $C$ and a
  $C$-based tree $\calT$, we generate each vote independently as
  follows: For each of the internal nodes in $\calT$, we reverse the
  order of its children with probability $\nicefrac{1}{2}$. The
  frontier of the resulting tree is the generated vote.  This process
  generates each vote compatible with a given tree with the same
  probability and, hence, it is a form of impartial culture over
  group-separable votes.  We consider either balanced trees or
  caterpillar trees.

\item[Euclidean Models.]
To generate a $t$-Euclidean election, we choose the ideal
points for the candidates and the voters, and then derive the voters'
preferences as in Definition~\ref{def:t-euclidean}. Given $t \in \{1,
2, \ldots \}$, we consider the following two ways of generating the
ideal points:
\begin{enumerate}
\item In the $t$-dimensional hypercube model ($t$D-Hypercube), we choose all the ideal points
  uniformly at random from 
  $[-1,1]^t$.
\item In the $t$-dimensional hypersphere model ($t$D-Hypersphere), we
  choose all the ideal points uniformly at random from the $t$-dimensional hypersphere
  centered at $(0, \ldots, 0)$, with radius $1$.
\end{enumerate}
For $t \in \{1,2,3\}, $ we refer to $t$D-Hypercube models as
1D-Interval, 2D-Square, and 3D-Cube, respectively. Similarly, by
Circle and Sphere we mean the $t$D-Hypersphere models for $t
\in \{2,3\}$.\smallskip

\end{description}

For the urn and Mallows models it is not completely clear what values
of the contagion and dispersion parameters to use. In our later
discussion we will suggest how to pick the parameter values to cover a broad spectrum of
elections that can be generated from these models. For the other
models, either there are no parameters to set, or we proposed
particular approaches above (such as using uniform distributions on
the hypercubes and hyperspheres for Euclidean elections, or the
particular type of trees for group-separable elections). These
choices are, of course, somewhat arbitrary, but we believe that they,
at least, cover some possible extreme approaches.

\subsection{(Isomorphic) Distances Between Elections}\label{sec:distances}

For a given set $X$, a pseudometric 
over $X$ is a function $d \colon X \times X \rightarrow \reals_+$ such
that for each $x,y,z \in X$ it holds that
\begin{inparaenum}[(1)]
\item $d(x,x) = 0$, 
\item $d(x,y) = d(y,x)$, and 
\item $d(x,z) \leq d(x,y) + d(y,z)$. 
\end{inparaenum}
In our case, we take $X$ to be the set of all elections with a given
number 
of candidates and a given number 
of voters.  As our goal is to compare elections generated from
statistical cultures, where the names of the candidates or the voters
are chosen randomly, we require distances to be invariant under
permutations of the names of candidates and voters. We refer to
such distances as neutral/anonymous (\emph{neutrality} refers to
invariance with respect to permuting candidate names and
\emph{anonymity} has an analogous meaning for the case of voters).

Recently, \citet{fal-sko-sli-szu-tal:j:isomorphism} introduced several
pseudometrics between elections that satisfy our basic anonymity/neutrality conditions.  The
main idea is that given two elections,
we find mappings between their candidates and between their voters,
and then we sum up the distances between the matched pairs of
 votes assuming the candidate mapping (using, e.g., the swap distance or the Spearman's distance);
we seek mappings that give the smallest final distance. These distances are
known as the isomorphic swap distance and the isomorphic Spearman distance.
The names stem from the underlying distance between the votes and the
fact that two elections are isomorphic---i.e., one can be obtained
from the other by renaming the candidates and reordering the
voters---if and only if their distances are equal to zero.  While
these distances are intuitively very appealing---they capture even the
smallest differences in the structure between elections and have very
natural interpretations---they are $\np$-hard to compute, hard to
approximate, and the known $\fpt$ algorithms are too
slow~\citep{fal-sko-sli-szu-tal:j:isomorphism}.

\section{Positionwise Distance and the Backbone Map}\label{sec:positionwise}
In this section we define our main tool, the positionwise distance,
and apply it to form our first map.  The idea of the positionwise
distance is that given two elections we first derive their aggregate
representations (as position matrices) and then compute the distances
between these representations. This way we lose some precision as
compared to the isomorphic swap or Spearman distances (e.g., because
some nonisomorphic elections have the same aggregate representations
and, hence, are at positionwise distance zero) but we gain efficient,
polynomial-time computability. In this section we first define and
analyze the position and frequency matrices, i.e., the aggregate
representations that we use and which facilitates the low
computational complexity of our distance,\footnote{Indeed, just using
  an aggregate representation does not need to lead to a
  polynomial-time computable distance. In one of the conference papers
  on which this paper is based we have studied a distance based on the
  weighted pairwise majority matrix, which turned out to be
  $\np$-complete to compute~\citep{szu-fal-sko-sli-tal:c:map}. For
  more details on this distance, see the work of
  \citet{boe-fal-nie-szu-was:c:metrics}.}  and then we define the
positionwise distance and provide the basic setup for the map of
elections. Finally, we reflect on why we believe that positionwise
distance is a good choice for our maps (for elections with larger
numbers of candidates).

\subsection{Position and Frequency Matrices}

Given an election, its position matrix specifies for each candidate
how many voters rank him or her at each possible position. Frequency matrices
are defined analogously, but focus on the fractions of votes instead of absolute vote counts.
Formally, let $E = (C,V)$ be an election, where
$C = \{c_1, \ldots, c_m\}$ and $V = (v_1, \ldots, v_n)$.  For a
candidate $c \in C$ and position $i \in [m]$, we write~$\#\pos_E(c,i)$
to denote the number of voters in election $E$ that rank $c$ on
position~$i$, and by~$\#\pos_E(c)$ we mean the vector:
\[
  ( \#\pos_E(c,1), \#\pos_E(c,2), \ldots, \#\pos_E(c,m) ).
\]
A \emph{position matrix} of election $E$, denoted $\#\pos(E)$, is the
$m \times m$ matrix that has
vectors~$\#\pos_E(c_1), \ldots, \#\pos_E(c_m)$ as its columns.\footnote{Technically, due to different possible orderings of the candidates,
an election may have several different position matrices; for our
purposes we typically assume some arbitrary order because the
positionwise distance will internally reorder the candidates as needed.}
Frequency matrices are defined analogously: For a candidate $c$ and a position $i \in [m]$, let
$\#\relvot_E(c,i)$ be~$\frac{\#\pos_E(c,i)}{n}$, let vector
$\#\relvot_E(c)$ be~$ ( \#\relvot_E(c,1), \ldots, \#\relvot_E(c,m) )$,
and let each \emph{frequency matrix} of election $E$,
denoted~$\#\relvot(E)$, consist of
columns~$\#\relvot_E(c_1), \ldots, \#\relvot_E(c_m)$.
In other words, a frequency matrix is a position matrix normalized by the number of voters.

Note that in each position matrix, each row and each column
sums up to the number of voters in the election. Similarly, in each
frequency matrix, the rows and columns sum up to one (such matrices
are called bistochastic).
For a positive integer $m$, we write $\calF(m)$ [$\calP(m)$] to denote the set of
all $m \times m$ frequency [position] matrices.

\begin{example}\label{ex:matrices}
  Consider an  election $E = (C,V)$, where $C = \{a,b,c,d\}$,
  $V = (v_1,v_2,v_3)$, and the voters have
  the following preference orders:  
  \begin{align*}
    v_1 \colon a \pref b \pref c \pref d, && 
    v_2 \colon a \pref c \pref b \pref d, && 
    v_3 \colon d \pref b \pref c \pref a.
  \end{align*}
  The position and frequency matrices of this election are:
  \begin{align*}
    \#\pos(E) = 
     \kbordermatrix{ & a & b & c & d \\
    1 &                2 & 0 & 0 & 1 \\
    2 &                0 & 2 & 1 & 0 \\
    3 &                0 & 1 & 2 & 0 \\
    4 &                1 & 0 & 0 & 2       
   }
                               \text{\quad and\quad}
                               \#\relvot(E) = 
     \kbordermatrix{ & a & b & c & d \\
    1 &                \nicefrac{2}{3} & 0               & 0                   & \nicefrac{1}{3} \\
    2 &                0               & \nicefrac{2}{3} & \nicefrac{1}{3}     & 0 \\
    3 &                0               & \nicefrac{1}{3} & \nicefrac{2}{3}     & 0 \\
    4 &                \nicefrac{1}{3} & 0               & 0                   & \nicefrac{2}{3}                                                                    }.
  \end{align*}  
\end{example}
Naturally, two distinct elections can have identical position matrices
(this issue was recently studied in detail by
\citet{boe-cai-fal-fan-jan-kac:c:position-matrices}, who has shown
that counting the number of non-isomorphic elections with a given position matrix is
$\sharpp$-complete, but it is possible to deduce some nontrivial
properties of the elections with a given matrix).
\begin{example}\label{ex:non-unique-matrix}
  Consider
  election 
  $E' = (C,U)$, where $C = \{a,b,c,d\}$,
  $U = (u_1, u_2, u_3)$, and the voters have
  preference orders:  
  \begin{align*}
    u_1 \colon a \pref b \pref c \pref d, &&
    u_2 \colon a \pref b \pref c \pref d, &&
    u_3 \colon d \pref c \pref b \pref a. 
  \end{align*}
  This election has the same position (and frequency) matrix as $E$ in
  Example~\ref{ex:matrices}.  However, $E$ and $E'$ are not isomorphic
  because all votes in $E$ are different, whereas $E'$ contains two
  identical ones.
\end{example}

\subsection{Distances Among Vectors}

For our new distance, we will need distances among vectors. Given two
vectors, $x = (x_1, \ldots, x_t)$ and $y = (y_1, \ldots, y_t)$, their
$\ell_1$-distance is:
\begin{align*}
  \ell_1(x,y) = |x_1-y_1| + |x_2-y_2| + \cdots +|x_t - y_t|.
\end{align*}
To define the \emph{earth mover's distance} (EMD)~\citep{rub-tom-gui:j:emd},
we additionally require that the entries of these vectors are
nonnegative and sum up to the same value. Then their earth mover's
distance, denoted $\EMD(x,y)$, is defined as the lowest total cost of
operations that transform vector~$x$ into vector $y$, where each
operation is of the form ``\emph{subtract $\delta$ from position~$i$
  and add $\delta$ to position~$j$}'' and costs $\delta \cdot
|i-j|$. Such an operation is legal if the current value at
position~$i$ is at least $\delta$.

It is well-known that $\EMD(x,y)$ can be computed in polynomial time
using a simple greedy algorithm. For each vector
$z = (z_1, \ldots, z_t)$, let $\hat{z}$ be its prefix-sum vector,
i.e., $\hat{z} = (z_1, z_1+z_2, \ldots, z_1 + z_2 + \cdots + z_t)$.
Then, we have that
$\EMD(x,y) = \ell_1(\hat{x}, \hat{y})$~\citep{rub-tom-gui:j:emd}.

\subsection{Positionwise Distance}

We are now ready to define the positionwise distance. Its main
underlying principle is that the most valuable information about each
candidate can be extracted from the positions that this candidate
occupies in the voters' preference rankings.

\begin{definition}\label{def:poswise} 
  Let $E = (C,V)$ and $F = (D,U)$ be two elections with $m$ candidates
  each (we do not require that $|V| = |U|$).  Let $\delta$ be a bijection
  in $\Pi(C,D)$. We define:
  \[
    \textstyle
    \delta\hbox{-}\POS(E,F) := \sum_{c \in C}  \EMD(\#\relvot_E(c), \#\relvot_F(\delta(c))).
  \]
  The positionwise
  distance between~$E$ and~$F$, denoted $\POS(E,F)$, is defined as:
  \[
    \textstyle \POS(E,F) := \min_{\delta \in \Pi(C,D)} \delta\hbox{-}\POS(E,F).
  \]
  We refer to $\delta$ as the \emph{candidate matching} that implements
  (or, witnesses) the distance between $E$ and $F$.
\end{definition}

\noindent In other words, the positionwise distance is the sum of the
earth mover's distances between the frequency vectors of the
candidates from the two elections, with candidates/columns matched
optimally according to $\sigma$.  The positionwise distance is
invariant to renaming the candidates and reordering the voters.

\begin{example}\label{ex:poswise}
  Consider election $E$ from \Cref{ex:matrices} and election
  $F = (D,U)$ with candidate set $D = \{x,y,z,w\}$ and voter
  collection $U = (u_1, u_2, u_3)$, where the voters have preference
  orders:
  \begin{align*}
    u_1 \colon x \pref y \pref z \pref w, &&
    u_2 \colon y \pref x \pref w \pref z, &&
    u_3 \colon w \pref z \pref x \pref y.
  \end{align*}
  These elections have the following frequency matrices:
  \begin{align*}
    \#\relvot(E) = 
     \kbordermatrix{ & a & b & c & d \\
    1 &                \nicefrac{2}{3} & 0 & 0 & \nicefrac{1}{3} \\
    2 &                0 & \nicefrac{2}{3} & \nicefrac{1}{3} & 0 \\
    3 &                0 & \nicefrac{1}{3} & \nicefrac{2}{3} & 0 \\
    4 &                \nicefrac{1}{3} & 0 & 0 & \nicefrac{2}{3}       
    },
      &&
    \#\relvot(F) = 
     \kbordermatrix{ & x & y & z & w \\
    1 &                \nicefrac{1}{3} & \nicefrac{1}{3} & 0               & \nicefrac{1}{3} \\
    2 &                \nicefrac{1}{3} & \nicefrac{1}{3} & \nicefrac{1}{3} & 0 \\
    3 &                \nicefrac{1}{3} & 0               & \nicefrac{1}{3} & \nicefrac{1}{3} \\
    4 &                0               & \nicefrac{1}{3} & \nicefrac{1}{3} & \nicefrac{1}{3}       
    }.
  \end{align*}
  Let us consider a mapping $\sigma$ such that $\sigma(a) = y$,
  $\sigma(b) = x$, $\sigma(c) = z$, and $\sigma(d) = w$. We see
  that:
  \begin{align*}
    \EMD( \#\relvot_E(a), \#\relvot_F(\sigma(a))) =
    \EMD( (\nicefrac{2}{3},0,0,\nicefrac{1}{3}), (\nicefrac{1}{3},\nicefrac{1}{3},0,\nicefrac{1}{3}) = \nicefrac{1}{3}, \\
    \EMD( \#\relvot_E(b), \#\relvot_F(\sigma(b))) =
    \EMD( (0, \nicefrac{2}{3},\nicefrac{1}{3},0), (\nicefrac{1}{3},\nicefrac{1}{3},\nicefrac{1}{3},0) = \nicefrac{1}{3}, \\    
    \EMD( \#\relvot_E(c), \#\relvot_F(\sigma(c))) =
    \EMD( (0, \nicefrac{1}{3},\nicefrac{2}{3},0), (0,\nicefrac{1}{3},\nicefrac{1}{3},\nicefrac{1}{3}) = \nicefrac{1}{3}, \\    
    \EMD( \#\relvot_E(d), \#\relvot_F(\sigma(d))) =
    \EMD( (\nicefrac{1}{3},0,0,\nicefrac{2}{3}), (\nicefrac{1}{3},0,\nicefrac{1}{3},\nicefrac{1}{3}) = \nicefrac{1}{3}. 
  \end{align*}
  Hence, for this mapping, the positionwise distance is
  $\nicefrac{1}{3} + \nicefrac{1}{3} + \nicefrac{1}{3} +
  \nicefrac{1}{3} = \nicefrac{4}{3}$. This mapping is, indeed,
  optimal. To see this, note that each column of $\#\relvot(E)$
  includes the value $\nicefrac{2}{3}$, whereas the largest value in each
  column of $\#\relvot(F)$ is $\nicefrac{1}{3}$. Hence, the smallest
  possible EMD distance between a column of $\#\relvot(E)$ and a
  column of $\#\relvot(F)$ is at least
  $\nicefrac{1}{3}$. Consequently, we have
  $\POS(E,F) = \nicefrac{4}{3}$.
\end{example}

\begin{remark}\label{rem:norm-pos}
  To obtain the \emph{normalized positionwise distance} between two
  elections with $m$ candidates each (where $m$ is divisible by $4$),
  we divide their positionwise distance by $\frac{1}{3}(m^2-1)$. This
  normalization factor is the largest positionwise distance between
  two elections with $m$
  candidates~\citep{boe-fal-nie-szu-was:c:metrics}. \Cref{pr:calc} and
  the discussion below will make this normalization more intuitive.
\end{remark}

\Cref{def:poswise} calls for some explanations. First, let us mention
that we could have used any other distance between vectors instead of
EMD and, indeed, \citet{boe-fal-nie-szu-was:c:metrics} evaluate the
effects of using~$\ell_1$ (and conclude that the practical differences
are not huge, but still EMD is preferable). We use EMD because
it captures the idea that being ranked on the top position is more
similar to being ranked on the second position than to being ranked on
the bottom one.

The second issue is that we could have used position matrices instead
of the frequency ones. The difference would only be technical. Indeed,
if we used position matrices then we would have to assume that the
elections have the same numbers of voters (in our maps this holds
anyway) and the obtained distances would be multiplied by this number
of voters. We use frequency matrices because of their increased
flexibility (this will become clear, e.g., in \Cref{sec:four}).

The third issue is that we should formally argue that the positionwise
distance is, indeed, a pseudometric. We do so in the next proposition.

\begin{proposition}
  The positionwise distance is a pseudometric.
\end{proposition}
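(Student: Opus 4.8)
The plan is to verify the three pseudometric axioms from the definition: (1) $\POS(E,E)=0$, (2) symmetry, and (3) the triangle inequality. The key observation throughout is that the positionwise distance is built by composing two layers---an inner layer (the earth mover's distance $\EMD$ between frequency vectors) and an outer layer (the minimization over candidate matchings $\delta \in \Pi(C,D)$)---and each axiom reduces to checking how these two layers interact. I would first record the fact, established earlier in the excerpt via the prefix-sum identity $\EMD(x,y)=\ell_1(\hat x,\hat y)$, that $\EMD$ is itself a metric on frequency vectors: it is nonnegative, symmetric, zero exactly on equal vectors, and satisfies the triangle inequality (all inherited directly from the corresponding properties of $\ell_1$ applied to the prefix-sum vectors).

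\textbf{Reflexivity and symmetry} are quick. For $\POS(E,E)=0$, the identity matching $\delta=\id$ gives $\delta\hbox{-}\POS(E,E)=\sum_{c}\EMD(\#\relvot_E(c),\#\relvot_E(c))=0$, and since every $\EMD$ term is nonnegative the minimum cannot be smaller. For symmetry, given any $\delta\in\Pi(C,D)$ its inverse $\delta^{-1}\in\Pi(D,C)$ satisfies $\delta^{-1}\hbox{-}\POS(F,E)=\delta\hbox{-}\POS(E,F)$ term by term, using symmetry of $\EMD$ and the reindexing $c\mapsto\delta(c)$; hence the two minimizations range over sets of identical values and $\POS(E,F)=\POS(F,E)$.

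\textbf{The triangle inequality} is the main obstacle, because the outer minimization does not distribute over a sum of distances in an obvious way. Let $E=(C,V)$, $F=(D,U)$, $G=(B,W)$ be three elections with $m$ candidates each. Let $\sigma\in\Pi(C,D)$ witness $\POS(E,F)$ and let $\tau\in\Pi(D,B)$ witness $\POS(F,G)$. The natural move is to use the \emph{composite} matching $\tau\circ\sigma\in\Pi(C,B)$ as a (generally suboptimal) candidate matching between $E$ and $G$. Then, for each $c\in C$, the triangle inequality for $\EMD$ applied to the three vectors $\#\relvot_E(c)$, $\#\relvot_F(\sigma(c))$, $\#\relvot_G(\tau(\sigma(c)))$ yields
\[
  \EMD(\#\relvot_E(c),\#\relvot_G(\tau(\sigma(c)))) \le \EMD(\#\relvot_E(c),\#\relvot_F(\sigma(c))) + \EMD(\#\relvot_F(\sigma(c)),\#\relvot_G(\tau(\sigma(c)))).
\]
Summing over all $c\in C$ and noting that $c\mapsto\sigma(c)$ is a bijection from $C$ to $D$ (so the second group of terms is exactly $\sum_{d\in D}\EMD(\#\relvot_F(d),\#\relvot_G(\tau(d)))=\tau\hbox{-}\POS(F,G)$) gives
\[
  (\tau\circ\sigma)\hbox{-}\POS(E,G) \le \sigma\hbox{-}\POS(E,F) + \tau\hbox{-}\POS(F,G) = \POS(E,F)+\POS(F,G).
\]
Finally, since $\POS(E,G)$ is the minimum over all matchings in $\Pi(C,B)$, it is at most the value achieved by the particular matching $\tau\circ\sigma$, which completes the inequality. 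The one point deserving care---and the reason the composite matching works---is the reindexing that lets the middle terms reassemble into $\tau\hbox{-}\POS(F,G)$ rather than some scrambled sum; this is exactly where bijectivity of $\sigma$ is used.
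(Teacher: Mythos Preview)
Your proof is correct and follows essentially the same approach as the paper: the triangle inequality is obtained by composing the two optimal candidate matchings and applying the triangle inequality for $\EMD$ termwise, exactly as the paper does. Your treatment is slightly more explicit (you spell out reflexivity, symmetry, and the reindexing step that the paper leaves implicit), but the argument is the same.
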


\begin{proof}
  We show that the positionwise distance satisfies the triangle inequality
  (the other requirements for being a pseudometric are easy to verify).
  Consider three elections with candidate sets of equal size,
  $E_1 = (C_1,V_1)$, $E_2 = (C_2,V_2)$, and $E_3 = (C_3,V_3)$. Let
  $\delta$ and $\sigma$ be the candidate matchings that minimize the
  EMD distances between $E_1$ and $E_2$ and between $E_2$ and $E_3$,
  respectively. We have that:
  \begin{align*}
    \POS(E_1, E_3) &\leq \textstyle \sum_{c \in C_1} \EMD( \#\relvot_{E_1}(c), \#\relvot_{E_3}(\sigma(\delta(c)))) \\
    &\textstyle \leq \sum_{c \in C_1} \EMD( \#\relvot_{E_1}(c), \#\relvot_{E_2}(\delta(c))) \\
    &\textstyle + \sum_{c \in C_2} \EMD( \#\relvot_{E_2}(\delta(c)), \#\relvot_{E_3}(\sigma(\delta(c)))) \\
    &\textstyle =\POS(E_1, E_2) + \POS(E_2, E_3) \text{.}
  \end{align*}
  The first inequality follows from the definition of the positionwise
  distance, the second one from the fact that EMD is a metric.
\end{proof}

Last but not least, a crucial feature of the positionwise distance is
that we can compute it in polynomial time. In short, doing so reduces
to finding a minimum-cost matching in a certain weighted bipartite graph.

\begin{proposition}\label{pro:pos-poly}
  There is a polynomial-time algorithm for computing the positionwise
  distance.
\end{proposition}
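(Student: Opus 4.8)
The plan is to reduce the computation of $\POS(E,F)$ to a single minimum-cost perfect matching problem in a complete weighted bipartite graph, which is solvable in polynomial time by the Hungarian algorithm. First I would observe that, by \Cref{def:poswise}, computing $\POS(E,F)$ amounts to minimizing $\sum_{c \in C} \EMD(\#\relvot_E(c), \#\relvot_F(\delta(c)))$ over all bijections $\delta \in \Pi(C,D)$. The key structural insight is that this objective decomposes as a sum of per-candidate costs, each depending only on the pair $(c, \delta(c))$; hence the optimization is exactly an assignment problem between the $m$ candidates of $E$ and the $m$ candidates of $F$.

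Concretely, I would build a complete bipartite graph $G$ with vertex classes $C$ and $D$, placing an edge between each $c \in C$ and each $d \in D$ with weight $w(c,d) := \EMD(\#\relvot_E(c), \#\relvot_F(d))$. A perfect matching in $G$ corresponds precisely to a bijection $\delta \in \Pi(C,D)$, and its total weight equals $\delta\hbox{-}\POS(E,F)$; so a minimum-weight perfect matching witnesses $\POS(E,F)$. It remains to argue that $G$ can be constructed and solved efficiently. There are $m^2$ edges, and for each edge the weight $w(c,d)$ is a single EMD value between two length-$m$ frequency vectors. By the prefix-sum characterization recalled just before this proposition, namely $\EMD(x,y) = \ell_1(\hat x, \hat y)$, each such weight is computable in $O(m)$ time, so the whole weight matrix is built in $O(m^3)$ time (and the frequency matrices themselves are read off from the elections in time polynomial in $m$ and the number of voters). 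Finally, a minimum-cost perfect matching in a bipartite graph on $2m$ vertices is found in polynomial time, e.g.\ in $O(m^3)$ by the Hungarian algorithm.

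I do not expect any genuine obstacle here: the proof is essentially a matter of recognizing the right reduction and verifying the two efficiency claims (per-edge EMD computation and bipartite matching). The only point requiring a sentence of care is the correctness of the reduction, i.e.\ the bijection between perfect matchings of $G$ and elements of $\Pi(C,D)$ together with the equality of their costs; this is immediate from the additive, pairwise form of the objective in \Cref{def:poswise}. I would therefore present the argument compactly, citing the greedy prefix-sum formula for EMD stated earlier and invoking a standard polynomial-time bipartite matching routine as a black box.
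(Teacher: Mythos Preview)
Your proposal is correct and follows essentially the same approach as the paper: reduce to a minimum-cost perfect matching in the complete bipartite graph on $C \cup D$ with edge weights given by the pairwise EMD values, which are computable in polynomial time. The only difference is that you spell out concrete running-time bounds (via the prefix-sum formula and the Hungarian algorithm), whereas the paper simply cites a standard min-cost matching reference.
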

\begin{proof}
  Let $E = (C,V)$ and $F = (D,U)$ be two elections where $|C| = |D|$.
  The value of $\POS(E,F)$ is equal to the minimum-cost matching in
  the bipartite graph whose vertex set is $C \cup D$ and which has the
  following edges: For each $c \in C$ and each $d \in D$ there is an
  edge with the cost equal to the EMD distance between
  $\#\relvot_E(c)$ and $\#\relvot_F(d)$; these weights can be computed
  independently for each pair of candidates. Such minimum-cost
  matchings can be computed in polynomial time (see, e.g., the
  overview of \citet{ahu-mag-orl:b:flows}).
\end{proof}

\subsection{Recovering Elections from Matrices}

Since the positionwise distance works on frequency matrices, it will
sometimes be convenient to operate directly on such matrices rather
than on the elections that generate them. To this end, we now argue
that given a position or a frequency matrix, we can always compute in
polynomial time some election that either generates it (for position
matrices) or is very close to generating it (for frequency matrices).

First, we  observe that each $m\times m$ position matrix has a corresponding
$m$-candidate election with at most $m^2-2m+2$ distinct preference
orders. This was shown by \citet[Theorem 7]{leep1999marriage} (they
speak of ``semi-magic squares'' and not ``position matrices'' and
show a decomposition of a matrix into permutation matrices, which
correspond to votes in our setting).
Their proof
lacks some algorithmic details which we provide in the appendix.
\begin{restatable}[\appsymb]{proposition}{polyalgo}\label{pro:poly-algo}
  Given a position matrix $X\in \calP(m)$, one can compute
  in $O(m^{4.5})$ time  an election~$E$ that contains at most $m^2-2m+2$
  different votes such that $\#\pos(E) =
  X$. 
\end{restatable}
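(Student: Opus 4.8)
The plan is to realize the classical decomposition of a semi-magic square into permutation matrices as an efficient algorithm, exploiting the correspondence between permutation matrices and votes. First I would observe that a position matrix $X \in \calP(m)$ with common line sum $n$ is exactly a nonnegative integer $m \times m$ matrix all of whose rows and columns sum to $n$, and that an $m \times m$ permutation matrix is precisely the position matrix of a single vote (the candidate in a given column occupies the unique position marked by the $1$ in its column). Hence writing $X = \sum_k c_k P_k$ as a nonnegative-integer combination of distinct permutation matrices $P_k$, with $\sum_k c_k = n$, is the same as exhibiting an election $E$ with $\#\pos(E) = X$ in which $c_k$ voters cast the vote corresponding to $P_k$; the number of distinct votes of $E$ equals the number of distinct $P_k$.

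The algorithmic engine is a greedy peeling step. Viewing the nonzero entries of the current matrix as the edges of a bipartite graph between rows (positions) and columns (candidates), I would note that because every line sum is equal and positive, the positive entries admit a perfect matching: otherwise the K\"onig--Hall theorem would yield a vertex cover of fewer than $m$ lines, contradicting the uniform line sums. Such a matching is a permutation matrix $P$ whose support lies inside that of $X$. Setting $\lambda := \min_{(i,j)\in P} x_{i,j}$, which is a positive integer, and replacing $X$ by $X - \lambda P$ keeps the matrix a nonnegative integer semi-magic square (with line sum reduced by $\lambda$) and zeroes out at least one entry. Iterating until the matrix is the zero matrix yields an integer decomposition into permutation matrices whose multiplicities sum to $n$, i.e.\ a valid election, and since each iteration removes at least one of the at most $m^2$ nonzero entries, there are only $O(m^2)$ iterations.

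To bring the number of \emph{distinct} permutation matrices down to $m^2 - 2m + 2$, I would follow the refined accounting behind \citet[Theorem 7]{leep1999marriage}. The real matrices with fixed line sums form an affine space of dimension $(m-1)^2 = m^2 - 2m + 1$, so the Carath\'eodory bound for this space is $m^2 - 2m + 2$; the task is to realize this saving of $m-1$ over the naive $m^2 - m + 1$ bound by arranging the peeling so that certain steps eliminate two entries at once rather than one. The \textbf{main obstacle} is that the multiplicities must remain \emph{integral}, so the generic Carath\'eodory reduction (which shifts multiplicities along an affine dependency and can produce fractional values) does not apply directly; this is exactly the delicate content supplied by the Leep--Myerson argument, which uses the special structure of permutation matrices to carry out the reduction over the integers. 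My role here is purely to track their construction constructively, so that each peel remains a single matching computation on the current support and the total number of peels—hence distinct votes—stays within $m^2 - 2m + 2$.

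Finally I would account for the running time. The dominant cost per peeling step is computing the bipartite perfect matching, which by the Hopcroft--Karp algorithm runs in $O(m^{2.5})$ time on a graph with $2m$ vertices and $O(m^2)$ edges. With $O(m^2)$ peels, this gives $O(m^{4.5})$ overall, which dominates the bookkeeping of updating the residual matrix and its support between steps and matches the claimed bound.
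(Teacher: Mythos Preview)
Your proposal is correct and follows essentially the same approach as the paper: build the bipartite graph on the support of the current residual matrix, use Hall's theorem to extract a perfect matching (i.e., a permutation matrix/vote), subtract the largest integer multiple that keeps the residual nonnegative, iterate, and defer the sharp $m^2-2m+2$ count to the Leep--Myerson argument; the $O(m^{4.5})$ bound comes from $O(m^2)$ iterations times $O(m^{2.5})$ per Hopcroft--Karp matching. The only cosmetic difference is that the paper asserts the unmodified greedy peel already terminates within $m^2-2m+2$ steps (citing Leep--Myerson), whereas you phrase it as ``arranging the peeling'' to hit the Carath\'eodory bound---but both accounts ultimately hand off this refinement to the same reference.
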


Next, we consider the issue of recovering elections based on frequency
matrices. Given an $m \times m$ bistochastic matrix $X$ and a number
$n$ of voters, we would like to find an election~$E$ with position
matrix $nX$. This may be impossible as $nX$ may have fractional
entries, but we can get very close to this~goal.
The next proposition shows how to achieve it, and justifies speaking
of elections and frequency matrices interchangeably.
By $\lfloor nX \rfloor$, we mean the matrix whose each entry is equal to the
floor of the corresponding entry of $nX$.

\begin{restatable}{proposition}{freqtopos}\label{pro:frequency}
  Given an $m \times m$ bistochastic matrix $X$ and an integer~$n$,
  one can compute in polynomial time an election~$E$ with $n$ voters
  whose position matrix~$P$ satisfies $|nx_{i,j} - p_{i,j}| < 1$
  for each $i, j \in [m]$ and, under this condition, minimizes the
  value $\sum_{1 \leq i,j \leq m} |nx_{i,j} - p_{i,j}|$.
\end{restatable}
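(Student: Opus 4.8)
The plan is to reduce the problem to a single minimum-cost flow computation and then invoke \Cref{pro:poly-algo} to turn the resulting position matrix into an election. First I would write $a_{i,j} := n x_{i,j}$ and split each entry as $a_{i,j} = \lfloor a_{i,j}\rfloor + f_{i,j}$ with $f_{i,j} \in [0,1)$. Any admissible position matrix $P$ (integer entries, all row and column sums equal to $n$, and $|p_{i,j}-a_{i,j}|<1$) must satisfy $p_{i,j} \in \{\lfloor a_{i,j}\rfloor, \lceil a_{i,j}\rceil\}$, so writing $y_{i,j} := p_{i,j} - \lfloor a_{i,j}\rfloor \in \{0,1\}$ records, for each cell, whether we round $a_{i,j}$ up or down; moreover $y_{i,j}$ is forced to $0$ whenever $f_{i,j}=0$. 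The contribution of cell $(i,j)$ to $\sum_{i,j}|p_{i,j}-a_{i,j}|$ is $f_{i,j}$ if $y_{i,j}=0$ and $1-f_{i,j}$ if $y_{i,j}=1$, so the total error equals the constant $\sum_{i,j} f_{i,j}$ plus $\sum_{i,j} y_{i,j}(1-2f_{i,j})$; hence minimizing the error is the same as minimizing $\sum_{i,j} y_{i,j}(1-2f_{i,j})$ over admissible $0/1$ choices.

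Next I would translate the margin constraints on $P$ into margin constraints on the $y_{i,j}$. Because $X$ is bistochastic, each row and each column of $a = nX$ sums to the integer $n$; consequently $r_i := n - \sum_j \lfloor a_{i,j}\rfloor = \sum_j f_{i,j}$ and $c_j := n - \sum_i \lfloor a_{i,j}\rfloor = \sum_i f_{i,j}$ are nonnegative integers, and the requirement that $P$ have all margins equal to $n$ is equivalent to $\sum_j y_{i,j} = r_i$ for every row $i$ and $\sum_i y_{i,j} = c_j$ for every column $j$. This is exactly a transportation problem on the complete bipartite graph with rows as sources of supply $r_i$, columns as sinks of demand $c_j$, unit edge capacities (and capacity $0$, i.e., no edge, on the cells where $f_{i,j}=0$), and edge costs $1-2f_{i,j}$. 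Supply equals demand, since $\sum_i r_i = \sum_{i,j} f_{i,j} = \sum_j c_j$.

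The key facts are feasibility and integrality. The fractional assignment $y_{i,j} := f_{i,j}$ already satisfies all the capacity and margin constraints, so the associated flow polytope is nonempty; and since the node--arc incidence matrix of a bipartite transportation problem is totally unimodular while all supplies, demands, and capacities are integers, a minimum-cost integral flow exists and can be computed in polynomial time (e.g., by the standard algorithms surveyed by \citet{ahu-mag-orl:b:flows}). Reversing the encoding, the optimal flow $(y_{i,j})$ yields $P := \lfloor nX\rfloor + (y_{i,j})$, an integer matrix with all margins $n$ satisfying $|p_{i,j}-a_{i,j}|<1$ and minimizing the total error. Finally, $P \in \calP(m)$ is a genuine position matrix, so applying \Cref{pro:poly-algo} to $P$ produces, in polynomial time, an election $E$ with $n$ voters and $\#\pos(E)=P$, as required.

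I expect the main obstacle to be making the encoding airtight: verifying that the map $P \mapsto (y_{i,j})$ is a cost-preserving bijection between admissible position matrices and feasible integral flows (in particular, that the strict inequality $|p_{i,j}-a_{i,j}|<1$ corresponds precisely to unit capacities together with the removal of the $f_{i,j}=0$ edges), and confirming that the error objective is affine in $y$ so that minimizing the flow cost minimizes the rounding error exactly rather than only approximately.
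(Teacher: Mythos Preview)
Your proposal is correct and follows essentially the same high-level strategy as the paper: encode the rounding problem as a min-cost integral flow with edge costs $1-2f_{i,j}$, extract the rounded matrix $P=\lfloor nX\rfloor + y$, and then invoke \Cref{pro:poly-algo}. The differences are in presentation and in the feasibility argument. The paper first establishes the existence of a valid rounding via dependent rounding (Gandhi et~al.) and then builds a somewhat idiosyncratic flow network (with chains $s\to v_{i,1}\to\cdots\to v_{i,m}$ and pre-sinks $t_j$), using the dependent-rounding output to certify that the required flow value is attainable. You instead go straight to the standard bipartite transportation formulation, exhibit the fractional solution $y_{i,j}=f_{i,j}$ for feasibility, and invoke total unimodularity for integrality; this is more direct and avoids the dependent-rounding detour entirely. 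You also explicitly delete the edges with $f_{i,j}=0$ to enforce the strict inequality $|p_{i,j}-a_{i,j}|<1$, a point the paper's deterministic construction leaves implicit. Both arguments are sound; yours is the cleaner packaging of the same idea.
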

\begin{proof}
  We first design a randomized algorithm that uses dependent
  rounding. Then we derive a deterministic algorithm, based on
  computing min-cost flows, which also performs the minimization step
  (however, we need the former algorithm to explain why the latter
  always produces a correct result).
  We start by computing matrix $Y$
  where each entry $y_{i,j}$ is equal to:
  \[ y_{i,j} = nx_{i,j} - \lfloor nx_{i,j}\rfloor. \]
  All entries of $Y$ are between~$0$
  and~$1$, and each row and each column
  of~$Y$ sums up to an integer because the latter property holds both
  for $nX$ and $\lfloor
  nX\rfloor$ (although different rows and columns may sum up to
  different integers).  
  We construct an edge-weighted bipartite
  graph~$G$ with vertex sets~$A = \{a_1, \ldots, a_m\}$ and~$B =
  \{b_1, \ldots, b_m\}$. For each two vertices~$a_i$
  and~$b_j$, we have a connecting edge of
  weight~$y_{i,j}$. For each vertex $c \in A \cup
  B$, we let its fractional
  degree~$\delta_G(c)$ be the sum of the weights of the edges incident
  to it.  Then, we invoke the dependent rounding procedure of
  \citet[Theorem~2.3]{gan-khu-par-sri:j:dependent-rounding} on this
  graph: Consequently, in polynomial time we obtain an unweighted
  bipartite graph
  $G'$ with the same two vertex sets, such that the (standard) degree
  of each vertex $c \in A \cup B$ in $G'$ is equal to $\delta_G(c)$
  (this is property P2 in the paper of
  \citet{gan-khu-par-sri:j:dependent-rounding}; note that dependent
  rounding is computed via a randomized algorithm, but this condition
  on the degrees is always satisfied, independently of the random bits
  selected).\footnote{Dependent rounding accepts input exactly in the
    format that we have and implements exactly the effect that we
    describe (plus additional guarantees that we do not need).}
  Using $G'$, we form an $m \times m$ matrix
  $D$ such that for each $i,j \in [m]$,~$d_{i,j}$ is $1$ if
  $G'$ contains an edge between $a_i$ and $b_j$, and $d_{i,j} =
  0$ otherwise.  Finally, we compute matrix $P = \lfloor nX \rfloor +
  D$.

  The entries of~$P$ differ from those of~$nX$ by less than one, and
  the rows and columns of $P$ sum up to~$n$ (to see it, consider the
  degrees of the vertices in~$G'$).  So, we obtain the desired
  election by invoking Proposition~\ref{pro:poly-algo} on matrix $P$.

  Next, we provide a fully deterministic algorithm for computing the
  matrix $D$ (and, consequently, matrix $P$), which does not invoke
  dependent rounding and which minimizes the value
  $\sum_{1 \leq i,j \leq m} |nx_{i,j}-p_{i,j}|$.  Consider matrix~$Y$
  from the first part of the proof and let
  $Z = \sum_{1 \leq i,j \leq m} y_{i,j}$.

  We form a flow network with source~$s$, nodes $v_{i,j}$ for each
  $i,j \in [m]$, ``pre-sink'' nodes $t_1, \ldots, t_m$, and sink
  node~$t$. We have the following edges (unless specified otherwise,
  all of them have cost $0$):
  \begin{enumerate}
  \item For each $i \in [m]$, we have a directed path which starts at
    the source node~$s$, then goes to $v_{i,1}$, next to $v_{i,2}$,
    and so on, until $v_{i,m}$. Each edge on this path has capacity
    equal to $Y_i = \sum_{j=1}^m y_{i,j}$. Recall that this value is
    an integer and note that $Y_i$ is exactly the number of entries in
    the $i$-th row of matrix $D$ that need to have value $1$ (the
    other entries in this row will have value $0$). 

  \item For each $i, j \in [m]$, we have an edge from $v_{i,j}$ to
    $t_j$, with capacity one and with cost $1-2y_{i,j}$. The intuition
    is that if a unit of flow goes from $v_{i,j}$ to $t_j$, then
    $d_{i,j}=1$, and if the entire flow that enters $v_{i,j}$ goes to
    $v_{i,j+1}$, then $d_{i,j} = 0$. The intuition for the cost of the
    edge is that if $d_{i,j} = 0$ then
    $p_{i,j} = \lfloor nx_{i,j} \rfloor$ and, consequently,
    $|nx_{i,j} - p_{i,j}| = |nx_{i,j} - \lfloor nx_{i,j}\rfloor| =
    y_{i,j}$, which we treat as the ``default.'' If we send a unit of
    flow from $v_{i,j}$ to $t_j$ then $d_{i,j} = 1$ and
    $p_{i,j} = \lfloor nx_{i,j} \rfloor + 1$.  Consequently,
    $|nx_{i,j} - p_{i,j}| = |nx_{i,j} - 1 - \lfloor nx_{i,j}\rfloor| =
    1 + \lfloor nx_{i,j}\rfloor - nx_{i,j} = 1-y_{i,j}$.  So the
    difference between the default value $y_{i,j}$ of
    $|nx_{i,j} - p_{i,j}|$ that we get for $d_{i,j}=0$ and its value
    $1-y_{i,j}$ that we get for $d_{i,j} = 1$ is $1-2y_{i,j}$.  Note
    that $1-2y_{i,j}$ may be negative, but this is not an issue for the flow computation as our
    flow network will not have cycles.
     
  \item Finally, for each $j \in [m]$, we have an edge from $t_j$ to
    $t$ with capacity $\sum_{i=1}^m y_{i,j}$. This value is an integer
    and it is the number of entries in the $j$-th column of matrix $D$
    that need to become $1$ (instead of being $0$).
  \end{enumerate}
  Next, we compute in polynomial time an integral flow that moves
  $\sum_{i,j\in[m]} y_{i,j}$ units of flow from $s$ to $t$ at the
  lowest possible cost (which we denote $Z_f$; we use an arbitrary
  polynomial-time algorithm that accepts negative costs, such as one
  of those available in the overview of
  \citet{ahu-mag-orl:b:flows}). If this flow existed, then we could
  compute matrix~$D$ by setting, for each $i, j \in [m]$, $d_{i,j}$ to
  be the amount of flow ($0$ or~$1$) going from $v_{i,j}$
  to~$t_j$. Indeed, by definition of our flow network, for each~$i$ we
  would have that $\sum_{j=1}^m d_{i,j} = Y_i= \sum_{j=1}^m y_{i,j}$
  (because a flow of $Y_i$ needs to enter $v_{i,1}$ and when a flow enters some node $v_{i,j}$, then it can either
  go to $t_j$ or to $v_{i,j+1}$). Due to the capacities on the edges
  from the pre-sink nodes to the sink, for each $j \in [m]$ we would
  also have that $\sum_{i=1}^m d_{i,j} = \sum_{i=1}^m y_{i,j}$.

  We will now argue
  that our flow problem has a solution. For this, we reexamine the
  graphs $G$ and $G'$ constructed in the first part of the proof. Note
  that the two above-discussed properties of the flow network are
  equivalent to ensuring that the degrees of the vertices in~$G'$ are
  equal to the fractional degrees in~$G$, as is done by dependent
  rounding. This means that given graph $G'$ computed by our first
  algorithm and the matrix $D$ induced by $G'$, we can compute a
  capacity-respecting flow for our network by sending one unit of flow
  from $v_{i,j}$ to $t_j$ if $d_{i,j}=1$. This flow moves
  $\sum_{i,j\in[m]} y_{i,j}$ units of flow and accordingly, our flow
  problem has a solution.
 
  Further, for a flow $f$ implying values $d_{i,j}$, we have that:
  \[
    \textstyle \sum_{1 \leq i,j \leq m}|nx_{i,j} - p_{i,j}| = \sum_{1
        \leq i,j \leq m} |nx_{i,j} - (\lfloor nx_{i,j} \rfloor +
      d_{i,j})| = Z + Z_f.
  \]
  To see why this is the case, note that if all the values $d_{i,j}$,
  as well as $Z_f$, were $0$, then the equality would hold by the definition of $Z$. Now, for
  each $d_{i,j}=1$, the left-hand side increases by $1-2y_{i,j}$ and,
  consequently, the whole sum is equal to $Z + Z_f$.  Thus minimizing
  $Z_f$ is equivalent to minimizing
  $\sum_{1 \leq i,j \leq m}|nx_{i,j} - p_{i,j}|$, which is our goal.
\end{proof}

Together with the fact that computing a frequency matrix for a given
election is straightforward, Proposition~\ref{pro:frequency} gives a
two-way interface between elections and frequency matrices.  In the
remainder of the paper, we will also express many (idealized families
of) elections by focusing on their frequency matrices and we will
often speak of elections and their frequency matrices interchangeably.

\subsection{Four Compass Elections/Matrices}\label{sec:four}
In this section we prepare our ``compass,'' i.e., we identify four
special matrices (or, families of elections) that correspond to
different types of (dis)agreement among the voters.
Importantly, we want our compass to form a backbone of the maps and,
consequently, we want them to be at large and close-to-constant
distances from each other as we consider different numbers of
candidates (so that irrespective of the sizes of the considered
elections, the compass matrices could be viewed as fixed points on the
maps).
We argue as to why our matrices indeed are quite extreme, occupy very
different areas in the space of elections, and stay at close-to-fixed
distances from each other.
As explained in the previous section, we focus on
$m \times m$ frequency matrices (or, equivalently, on elections with $m$
candidates).

\paragraph{Identity and Uniformity.}
The first two matrices are the \emph{identity} matrix, $\ID_m$, with
ones on the diagonal and zeros elsewhere, and the \emph{uniformity}
matrix, $\UN_m$, with each entry equal to $\nicefrac{1}{m}$:
\begin{align*}
 \ID_m = \begin{bmatrix}
   1 & 0 & \cdots & 0\\
   0 & 1 & \cdots & 0\\
   \vdots & \vdots & \ddots & \vdots \\
   0 & 0 & \cdots & 1
 \end{bmatrix}, &&
 \UN_m = \begin{bmatrix}
   \nicefrac{1}{m} & \nicefrac{1}{m} & \cdots & \nicefrac{1}{m}\\
   \nicefrac{1}{m} & \nicefrac{1}{m} & \cdots & \nicefrac{1}{m}\\
   \vdots & \vdots & \ddots & \vdots \\
   \nicefrac{1}{m} & \nicefrac{1}{m} & \cdots & \nicefrac{1}{m}
 \end{bmatrix}.                
\end{align*}
The identity matrix corresponds to elections where each voter has the
same preference order, i.e., they capture perfect agreement
among the voters. In contrast, the uniformity matrix captures
elections where each candidate is ranked on each position equally
often, i.e., where, in aggregate, all the candidates are viewed as
equally good. Hence, uniformity elections can be seen as capturing
perfect lack of agreement regarding the relative qualities of the
candidates.

Uniformity elections are quite similar to the IC ones
and, in the limit, indistinguishable from them.  Indeed, if we choose
each preference order uniformly at random then, in expectation, each
candidate would be ranked on each position the same number of times.
Yet, for a fixed number of voters, typically IC elections are at some
(small) positionwise distance from uniformity.

\paragraph{Stratification.}
The next matrix, \emph{stratification}, is defined as
follows (we assume that $m$ is even):
\[
  \ST_m = \begin{bmatrix}
    \UN_{\nicefrac{m}{2}} & 0 \\
    0 & \UN_{\nicefrac{m}{2}}
  \end{bmatrix}.
\]
Stratification matrices correspond to elections where the voters agree
that half of the candidates are 
more desirable than the other half, but, in aggregate, are unable to
distinguish between the qualities of the candidates in each group.

 \paragraph{Antagonism.}
 For the next matrix, we need one more piece of notation.  Let
 $\rID_m$ be the matrix obtained by reversing the order of the columns
 of the identity matrix $\ID_m$. We define the \emph{antagonism}
 matrix, $\AN_m$, to be
 $ \textstyle \nicefrac{1}{2} \ID_m+\nicefrac{1}{2} \rID_m:$
\[
 \textstyle
 \AN_m = \frac{1}{2}\begin{bmatrix}
   1 & 0 & \cdots & 0 & 0\\
   0 & 1 & \cdots & 0 & 0\\
   \vdots & \vdots & \ddots & \vdots \\
   0 & 0 & \cdots & 1 & 0 \\
   0 & 0 & \cdots & 0 & 1
 \end{bmatrix}
 +
 \frac{1}{2}\begin{bmatrix}
   0 & 0 & \cdots & 0 & 1\\
   0 & 0 & \cdots & 1 & 0 & \\
   \vdots & \vdots & \iddots & \vdots \\
   0 & 1 & \cdots & 0 & 0\\
   1 & 0 & \cdots & 0 & 0
 \end{bmatrix}
 =
 \begin{bmatrix}
   \nicefrac{1}{2} & 0 & \cdots & 0 & \nicefrac{1}{2}\\
   0 &\nicefrac{1}{2} &  \cdots &  \nicefrac{1}{2} & 0\\
   \vdots & \vdots & \iddots & \vdots \\
   0 & \nicefrac{1}{2} & \cdots & \nicefrac{1}{2} & 0\\
   \nicefrac{1}{2} & 0 & \cdots & 0 & \nicefrac{1}{2}
 \end{bmatrix}.
\]
Such matrices are generated, e.g., by
elections where half of the voters rank the candidates in one order,
and half of the voters rank them in the opposite one, so there is a
clear conflict (however, there are also many other elections that
generate this matrix). 
In some sense, stratification and antagonism are based on similar
premises. Under stratification, the set of candidates is partitioned
into halves with different properties, whereas in antagonism the
voters are partitioned into such halves. However, the nature of the partitioning is
quite different.

\paragraph{Distances Between the Matrices.}
We chose the above matrices
because they capture natural, intuitive phenomena, 
seem to occupy very different areas of the space of elections, and
their relative distances do not depend strongly on the numbers of
candidates.  To see that the latter two points hold, let us calculate
their positionwise distances (for the calculations, see
\Cref{se:recov_app}).

\begin{restatable}[\appsymb]{proposition}{dist}\label{pr:calc}
    If $m$ is divisible by $4$, then it holds that:
  \begin{enumerate}
  \item $\POS(\ID_m,\UN_m) = \frac{1}{3}(m^2-1)$,
  \item $\POS(\ID_m,\AN_m) = \POS(\UN_m,\ST_m) = \frac{m^2}{4}$,
  \item
    $\POS(\ID_m,\ST_m) = \POS(\UN_m,\AN_m)  = 
    \frac{2}{3}(\frac{m^2}{4}-1)$,
  \item $\POS(\AN_m,\ST_m) = \frac{13}{48} m^2 - \frac{1}{3}$.
  \end{enumerate}
\end{restatable}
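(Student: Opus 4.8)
The plan is to reduce every distance to a sum of earth mover's distances between individual columns, using the identity $\EMD(x,y) = \ell_1(\hat x, \hat y)$ between prefix-sum vectors, and then to handle the outer minimization over candidate matchings. The key observation that organizes the whole argument is that for most of these distances the matching is irrelevant, so the only genuine work is the column-wise EMD computations and their summation; only the two distances emanating from $\ID_m$ require an actual optimality argument for the matching.

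First I would dispatch the pairs involving $\UN_m$. Since every column of $\UN_m$ equals $(\nicefrac1m, \ldots, \nicefrac1m)$, any bijection yields the same total, so $\POS(\ID_m,\UN_m)$, $\POS(\UN_m,\ST_m)$, and $\POS(\UN_m,\AN_m)$ are just sums of per-column distances $\EMD(\cdot, u)$ where $u$ is the uniform vector. For $\POS(\ID_m,\UN_m)$ the $i$-th column of $\ID_m$ is $e_i$, whose prefix sum jumps from $0$ to $1$ at position $i$, giving $\EMD(e_i,u) = \frac{1}{2m}[(i-1)i + (m-i)(m-i+1)]$; summing over $i$ yields $\frac13(m^2-1)$. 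The $\UN$--$\ST$ and $\UN$--$\AN$ sums are analogous, since each $\ST_m$ column is uniform on one half and each $\AN_m$ column places mass $\nicefrac12$ at positions $j$ and $m+1-j$, and after summing these give $\frac{m^2}{4}$ and $\frac23(\frac{m^2}{4}-1)$ respectively.

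For $\POS(\AN_m,\ST_m)$ the matching is again immaterial, but for a subtler reason: under the position-reversal $p \mapsto m+1-p$ each $\AN_m$ column is fixed, while the (internally identical) first-half $\ST_m$ columns are exchanged with the (internally identical) second-half ones. Since EMD is invariant under applying this reversal to both arguments, every $\AN_m$ column is equidistant from every $\ST_m$ column, so all bijections cost the same and the distance equals $2\sum_{j=1}^{m/2}\EMD(\AN_j, \text{ST-first})$. The two $\ID_m$-based distances do need an optimality proof. For $\POS(\ID_m,\ST_m)$ a single-swap exchange argument shows it is optimal to send the top-half unit vectors $e_1,\ldots,e_{m/2}$ to the first-half $\ST_m$ columns and the remaining ones to the second-half columns; this splits the problem into two independent copies of the $\ID$--$\UN$ computation in dimension $\nicefrac m2$, giving $\frac23(\frac{m^2}{4}-1)$. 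For $\POS(\ID_m,\AN_m)$ I would use that transporting a single point mass gives $\EMD(e_i, \AN_j) = \frac12|i-j| + \frac12|i-(m+1-j)|$, which simplifies to $\frac{m+1}{2} - \min(\tilde{i}, j)$ with $\tilde{i} = \min(i,m+1-i)$; minimizing the total then amounts to maximizing $\sum \min(\tilde{i}, j)$, which is bounded above by $\sum_i \tilde{i}$ and attained by matching each folded index to the equal content, yielding $\frac{m^2}{4}$.

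I expect the main obstacle to be the $\AN_m$--$\ST_m$ summation rather than any matching argument. There the prefix sum of a uniform-on-one-half $\ST_m$ column crosses the value $\nicefrac12$ at position $\nicefrac m4$, so the absolute-value terms in $\ell_1(\hat x,\hat y)$ change sign at that point and the sum over positions must split at this breakpoint; this is precisely where the hypothesis $4 \mid m$ is used, to keep $\nicefrac m4$ integral and the partial sums clean. Carrying out this case split and simplifying produces the $\frac{13}{48}m^2 - \frac13$ value, and collecting the four computations completes the proof.
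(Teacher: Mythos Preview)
Your proposal is correct and follows essentially the same approach as the paper: both exploit that the candidate matching is irrelevant for any pair involving $\UN_m$ and for $\AN_m$--$\ST_m$ (via the palindrome/reversal symmetry of the $\AN_m$ columns), reduce $\ID_m$--$\ST_m$ to two copies of the $\ID$--$\UN$ computation in dimension $m/2$, and then carry out the column-wise EMD sums directly. Your treatment of $\POS(\ID_m,\AN_m)$ via the closed form $\EMD(e_i,\AN_j)=\tfrac{m+1}{2}-\min(\tilde i,\tilde j)$ and the upper bound $\sum_i \min(\tilde i,\widetilde{\sigma(i)})\le \sum_i \tilde i$ is in fact a crisper optimality argument than the paper's, which reasons more informally about placing the unit mass between the two $0.5$ entries.
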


\noindent
To normalize these distances, we divide them by
$D(m) = \POS(\ID_m, \UN_m)$, which is the largest positionwise
distance between two matrices from $\calF(m)$, as shown by
\citet{boe-fal-nie-szu-was:c:metrics}; recall
\Cref{rem:norm-pos}.\footnote{The distance between $\ID_m$ and $\UN_m$
  is also the largest possible under the isomorphic swap distance.
  Interestingly, the distance between $\ID_m$ and $\AN_m$ is also the
  largest possible in this setting, and the distance between $\AN_m$
  and $\UN_m$ seems to approach this largest possible value as the
  number of candidates
  grows~\citep{boe-fal-nie-szu-was:c:metrics}. This is another reason
  for including $\AN_m$ in our compass.}  For each two matrices $X$
and $Y$ among our four, we let
$d(X,Y) := \lim_{m \rightarrow
  \infty}\nicefrac{\POS(X_{4m},Y_{4m})}{D(4m)}$. A simple computation
shows the following (see also the drawing on the right; we sometimes
omit the subscript $m$ for simplicity):

\begin{minipage}[b]{0.45\textwidth}
  \centering
  \begin{align*}
  &d(\ID,\UN) = 1,\\
  &d(\ID,\AN) = d(\UN, \ST) = \nicefrac{3}{4},\\
  &d(\ID,\ST) = d(\UN,\AN) = \nicefrac{1}{2},\\
  &d(\AN,\ST) = \nicefrac{13}{16}.\\
\end{align*}
\end{minipage}
\begin{minipage}[b]{0.5\textwidth}
  \centering
        \newcommand{\drawun}[2]{
    \draw (#1+0.5, #2+1) node[anchor=south] {UN};
    \fill[black!25!white] (#1+0,#2+0)  rectangle (#1+1,#2+1);
    \draw (#1+0,#2+0) rectangle (#1+1,#2+1);
  }

  \newcommand{\drawan}[2]{
    \draw (#1+1.75, #2+0.5) node[anchor=south] {AN};
    \fill[black!25!white] (#1+0,#2+0)  -- (#1+0.2, #2+0) -- (#1+1, #2+1-0.2) -- (#1+1, #2+1) -- (#1+1-0.2, #2+1) -- (#1, #2+0.2) -- cycle;
    \fill[black!25!white] (#1+0,#2+1)  -- (#1+0.2, #2+1) -- (#1+1, #2+0.2) -- (#1+1, #2) -- (#1+1-0.2, #2) -- (#1, #2+1-0.2) -- cycle;
    \draw (#1+0,#2+0) rectangle (#1+1,#2+1);
  }

  \newcommand{\drawid}[2]{
    \draw (#1+0.5, #2+1) node[anchor=south] {ID};
    \fill[black!25!white] (#1+0,#2+1)  -- (#1+0.2, #2+1) -- (#1+1, #2+0.2) -- (#1+1, #2) -- (#1+1-0.2, #2) -- (#1, #2+1-0.2) -- cycle;
    \draw (#1+0,#2+0) rectangle (#1+1,#2+1);
  }

  \newcommand{\drawst}[2]{
    \draw (#1-0.75, #2-0.5) node[anchor=south] {ST};
    \fill[black!25!white] (#1+0,#2+1)  rectangle (#1+0.5, #2+0.5);
    \fill[black!25!white] (#1+0.5,#2+0.5)  rectangle (#1+1, #2+0);
    \draw (#1+0,#2+0) rectangle (#1+1,#2+1);
  }

    \begin{tikzpicture}[xscale=0.5, yscale=0.5]
    \clip (-0.1, -3) rectangle (9, 3.5);
    \drawun{0}{0}
    \drawid{8}{0}
    \drawan{3}{2}
    \drawst{5}{-2}
    \draw (1,0.5) -- (8,0.5);
    \draw (3,0.5) node[anchor=south] {$1$};
    \draw (1,1) -- (3,2.5);
    \draw (2,1.75) node[anchor=south] {$\frac{1}{2}$};
    \draw (4,2.5) -- (8,1);
    \draw (6,1.75) node[anchor=south] {$\frac{3}{4}$};
    \draw (4,2) -- (5,-1);
    \draw (4.7,0.75) node[anchor=south] {$\frac{13}{16}$};
    \draw (1,0) -- (5,-1.5);
    \draw (2,-0.5) node[anchor=north] {$\frac{3}{4}$};
    \draw (6,-1.5) -- (8,0);
    \draw (7.2,-0.5) node[anchor=north] {$\frac{1}{2}$};
  \end{tikzpicture}
\end{minipage}

Importantly, already for fairly small numbers of candidates the
normalized distances between the compass matrices become very similar
to the above-computed limit values. For example, for $m=12$ they only
differ by a few percent. Hence, the compass matrices indeed can be
used to form the backbones of our maps.

\begin{remark}
  Using a combination of brute-force search and ILP solving, for very
  small values of $m$, such as $4$ or $5$, we found sets of
  matrices that are even further away from each other than the compass
  ones. Unfortunately, we did not see ways of generalizing these
  matrices to arbitrary numbers of candidates.
\end{remark}

\subsection{Paths Between Election Matrices}

Next, we consider convex combinations of frequency matrices.  We do so
to derive matrices that would be located between the compass ones in
the space of elections (as defined by the positionwise distance) and
would support our maps' backbones.

Given two frequency matrices,~$X$ and $Y$, and $\alpha \in [0,1]$, one
might expect that matrix $Z = \alpha X + (1-\alpha)Y$ would lie at
distance~$(1-\alpha) \POS(X,Y)$ from $X$ and at distance
$\alpha \POS(X,Y)$ from~$Y$, so that we would~have:
\begin{align*}
     \POS(X,Y) = \POS(X, Z) 
                     + \POS(Z, Y).
\end{align*}
However, without further assumptions this is not necessarily the
case. Indeed, if we take~$X = \ID_m$ and $Y = \textrm{rID}_m$, then
$\POS(X,Y) = 0$, but $Z = 0.5X+0.5Y = \AN_m$, so
$\POS(X,Z) = \POS(\ID,\AN) > 0$ and
$\POS(X,Y) \neq \POS(X,Z) + \POS(Z,Y)$.  Yet, if we arrange the
two 
matrices $X$ and $Y$ so that their positionwise distance is witnessed
by the identity permutation of their column vectors (i.e., by a
trivial candidate matching), then their convex combination lies
exactly between them. 
\begin{proposition}\label{pro:paths}
     Let $X = (x_1, \ldots, x_m)$ and $Y = (y_1, \ldots, y_m)$ be two
  $m \times m$ frequency matrices such that
  $
     \POS(X,Y) = \textstyle \sum_{i=1}^m \EMD(x_i,y_i).
  $
  Then, for each $\alpha \in [0,1]$ it holds that~$\POS(X,Y) = \POS(X, \alpha X + (1-\alpha)Y) + \POS( \alpha X +
  (1-\alpha)Y, Y)$.
\end{proposition}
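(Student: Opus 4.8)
The plan is to establish the claimed equality by proving the two opposite inequalities separately. One direction is free: since the positionwise distance is a pseudometric (just established), the triangle inequality immediately gives
\[
  \POS(X,Y) \le \POS(X, \alpha X + (1-\alpha)Y) + \POS(\alpha X + (1-\alpha)Y, Y).
\]
Hence all the content lies in the reverse inequality, and to obtain it I would produce matching upper bounds on the two distances on the right-hand side.

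Write $Z = \alpha X + (1-\alpha)Y$, so that its $i$-th column is $z_i = \alpha x_i + (1-\alpha) y_i$. Since the identity permutation is always a feasible candidate matching, evaluating it yields the upper bounds $\POS(X,Z) \le \sum_{i=1}^m \EMD(x_i, z_i)$ and $\POS(Z,Y) \le \sum_{i=1}^m \EMD(z_i, y_i)$. The crux is to compute these individual EMD terms, for which I would use the prefix-sum identity $\EMD(a,b) = \ell_1(\hat a, \hat b)$ recalled earlier. Taking prefix sums is a linear operation, so $\hat{z_i} = \alpha \hat{x_i} + (1-\alpha)\hat{y_i}$, and the entrywise differences therefore satisfy $\hat{x}_{i,k} - \hat{z}_{i,k} = (1-\alpha)(\hat{x}_{i,k} - \hat{y}_{i,k})$. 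The common nonnegative factor $(1-\alpha)$ pulls out of every absolute value, giving
\[
  \EMD(x_i, z_i) = (1-\alpha)\,\EMD(x_i, y_i),
\]
and symmetrically $\EMD(z_i, y_i) = \alpha\,\EMD(x_i, y_i)$.

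Summing over $i$ and invoking the hypothesis $\POS(X,Y) = \sum_{i=1}^m \EMD(x_i, y_i)$ then yields $\POS(X,Z) \le (1-\alpha)\POS(X,Y)$ and $\POS(Z,Y) \le \alpha\POS(X,Y)$, whose sum is exactly $\POS(X,Y)$. Combined with the triangle inequality above, this pins down equality throughout (and, as a byproduct, shows that the two intermediate inequalities are in fact equalities).

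I do not expect a genuine obstacle, as the argument is short once the prefix-sum linearity is in hand; the only point deserving care is pinpointing where the hypothesis is actually used. It enters precisely when rewriting the identity-matching upper bounds as multiples of $\POS(X,Y)$ itself: without the guarantee that the identity permutation witnesses $\POS(X,Y)$, one would only have $\sum_{i=1}^m \EMD(x_i, y_i) \ge \POS(X,Y)$, and the inequality chain would fail to close. The counterexample $X = \ID_m$, $Y = \rID_m$ preceding the statement confirms that this hypothesis is essential rather than merely convenient.
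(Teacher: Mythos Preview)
Your proof is correct and follows essentially the same approach as the paper: bound $\POS(X,Z)$ and $\POS(Z,Y)$ from above via the identity matching, show these bounds sum to $\POS(X,Y)$, and close the argument with the triangle inequality. The only cosmetic difference is in the computation of $\EMD(x_i,z_i)$: the paper invokes translation invariance and positive homogeneity of $\EMD$ directly, whereas you go through the prefix-sum identity $\EMD(a,b)=\ell_1(\hat a,\hat b)$ and linearity of the prefix-sum map, which is an equally valid (and arguably more concrete) route to the same conclusion.
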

\begin{proof}
  Let $Z = (z_1, \ldots, z_m) = \alpha X + (1-\alpha) Y$ be our convex
  combination of $X$ and $Y$.  We note two properties of the earth
  mover's distance. Let $a$, $b$, and $c$ by three vectors that
  consist of nonnegative numbers, where the entries in $b$ and $c$ sum up to the same
  value. Then, it holds that $\EMD(a+b,a+c) = \EMD(b,c)$. Further, for a
  nonnegative number~$\lambda$, we have that
  $\EMD(\lambda b, \lambda c) = \lambda\EMD(b,c)$.  Using these
  observations and the definition of the earth mover's distance, we
  note that:
  \begin{align*}
    \textstyle
    \POS(X,Z)  & \textstyle\leq \sum_{i=1}^m \EMD(x_i,z_i) 
     =   \textstyle\sum_{i=1}^m \EMD(x_i,\alpha x_i + (1-\alpha)y_i) \\
     &=  \textstyle\sum_{i=1}^m \EMD((1-\alpha)x_i, (1-\alpha)y_i) 
     =  \textstyle(1-\alpha) \sum_{i=1}^m \EMD(x_i,y_i) = (1-\alpha)\POS(X,Y).
  \end{align*}
  The last equality follows by our assumption regarding $X$ and
  $Y$. By an analogous reasoning we also have that
  $\POS(Z,Y) \leq \alpha \POS(X,Y)$. By putting these two inequalities
  together, we have that:
  \[
    \POS(X,Z) + \POS(Z,Y) \leq \POS(X,Y).
  \]
  By the triangle inequality, we have that $\POS(X,Y) \leq \POS(X,Z) + \POS(Z,Y)$
  and, so, we conclude that
  $\POS(X,Z) + \POS(Z,Y) = \POS(X,Y)$.
\end{proof}

Using  Proposition~\ref{pro:paths}, for each two compass matrices,
we can generate a sequence of matrices that form a path between them.
For example, matrix $0.5\ID + 0.5\UN$ is exactly at the same distance
from $\ID$ and from $\UN$.  Note that by the proof of
Proposition~\ref{pr:calc}, it holds that the positionwise distance
between any two of our four compass matrices is witnessed by the identity
mapping, as required by Proposition~\ref{pro:paths}.

\begin{figure}[t]
    \centering

        \includegraphics[width=6.0cm, trim={0.2cm 0.2cm 0.2cm 0.2cm}, clip]{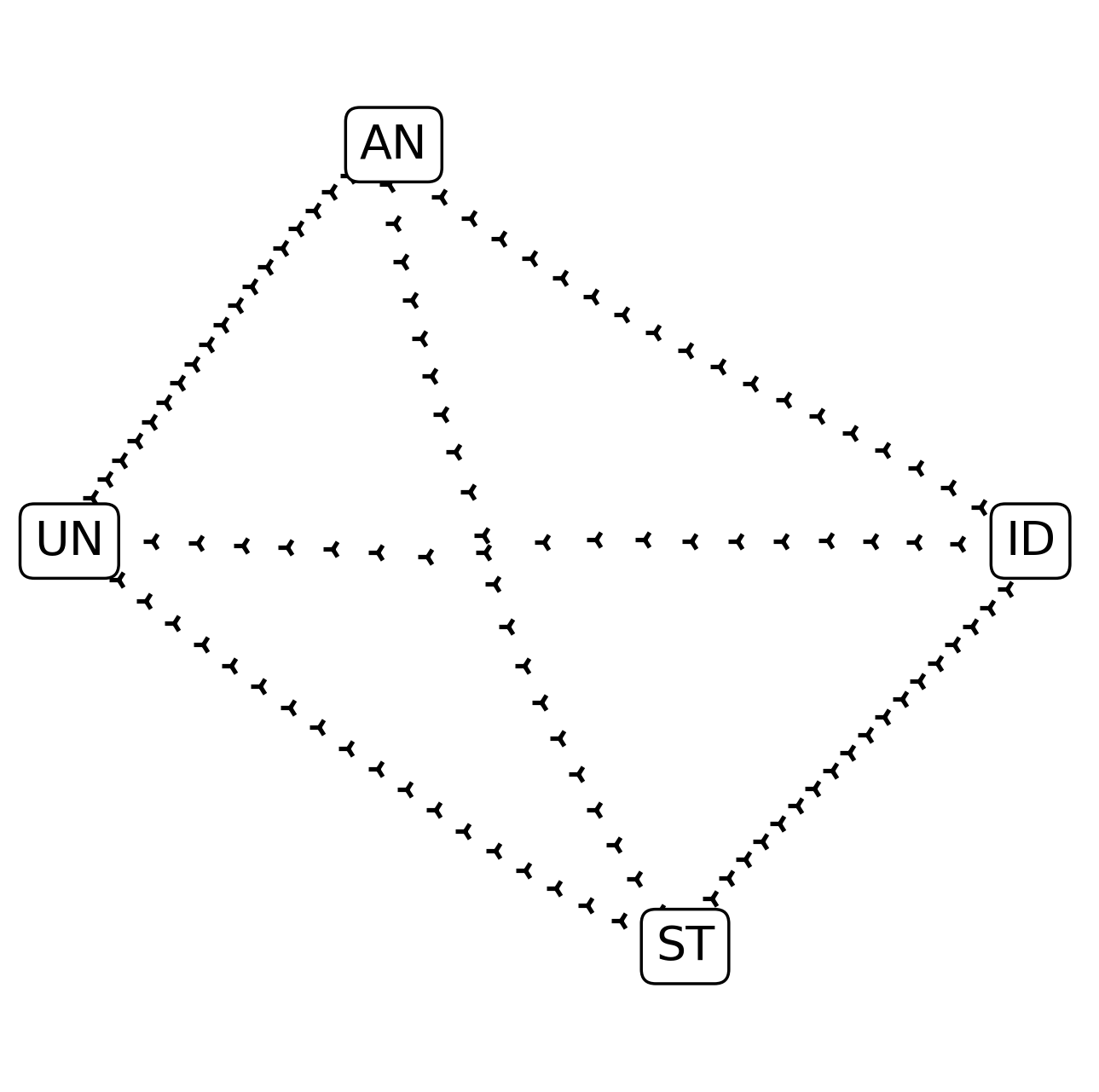}
    
        \caption{The backbone map, consisting of the four compass
          matrices and their connecting paths.  Each compass matrix is
          represented by a rectangle with a label. For each two
          compass matrices we have 20 path matrices, formed as their
          convex combinations.}
    \label{fig:paths}
\end{figure}

\subsection{The Backbone Map}\label{sec:basic-map}

Finally, we gather all the tools introduced in this section and
form our first map consisting of the four compass matrices and their convex combinations. While it
will be very simplistic, it will illustrate how such maps are made.

To form a map, we first need to select a dataset of elections (or,
frequency matrices). In our case, we take the four compass matrices
and for each two of them, we take $20$ path matrices formed as their
convex combinations (specifically, for each two compass matrices $X$
and $Y$ and each
$\alpha \in \{\nicefrac{1}{21}, \nicefrac{2}{21}, \ldots,
\nicefrac{20}{21}\}$, we have matrix $\alpha X + (1-\alpha)Y$).
Altogether, we have $124$ matrices. We generate the matrices for the
case of $10$ candidates, but for other numbers of candidates the map
would look similarly (except for very small candidate sets).  Since we
are working with frequency matrices only, the number of voters is
irrelevant. For each pair of our matrices, we compute their
positionwise distance (in the current setting, we could even derive
these distances analytically, but generally we invoke the algorithm
from \Cref{pro:pos-poly}).  Finally, given these distances, we invoke
an embedding algorithm (in this case we use Kamada-Kawai algorithm,
KK~\citep{kam-kaw:j:embedding,mt:sapala}). This algorithm associates
each election from the dataset with a point on a 2D plane in such a
way that the Euclidean distance between two points resembles the
positionwise distance between the corresponding elections as accurate
as possible.

We show the result in \Cref{fig:paths}. We can see that the picture is
very similar to the fully analytical diagram shown at the end of
\Cref{sec:four}. Indeed, the compass matrices are located at similar
distances as in the diagram. Further, the matrices from each path are
equally spaced (since the distance between $\UN$ and $\ID$ is larger
than between, say, $\UN$ and $\AN$, the matrices from the former path
are more spread, whereas those from the latter are located closer to
each other). This reinforces our hope that when there is a good
embedding, then KK finds it. In \Cref{sec:map-cultures} we will build a
full-fledged map of elections, with a much more diverse dataset, and
we will provide stronger arguments for its credibility.

\subsection{Why Is Positionwise Distance a Good Choice?}

Before we move on to the next section, where we will create our
full-fledged maps, let us take a pause to consider why positionwise
distance is a good choice for the map framework. To this end, we will
largely summarize the results of
\citet{boe-fal-nie-szu-was:c:metrics}, where the authors compared
several possible choices of distances.

Ideally, we would prefer to use the isomorphic swap or Spearman
distances instead of the positionwise one (recall
\Cref{sec:distances}, where these distances are briefly discussed, and
the work of \citet{fal-sko-sli-szu-tal:j:isomorphism}). The reasons
for this are as follows. First, these two distances distinguish as
many elections as possible while being invariant to renaming the
candidates and voters; in other words, under these two distances two
elections are at distance zero if and only if they are
isomorphic. Second, if we have two elections $E$ and $F$, then their
swap and Spearman distances change at most very mildly if we introduce
a small local change to one of them (such as swapping two candidates
in some vote). In other words, these distance measures are robust---a
small change in the input cannot lead to a large change in the
output. The third argument is that both distance measures treat all
positions in the votes identically: A swap of two top candidates in
some vote affects the distance in qualitatively the same way as
swapping two bottom candidates. While there certainly are applications
where it would be beneficial to pay attention to tops of the votes
more closely than to their bottoms, such an agnostic approach is
natural when developing a generic framework that can be used for many
different settings and for general comparison of elections. On the
negative side, the huge (practical) downside of the isomorphic swap
and Spearman distances is that they are NP-hard to compute. Indeed,
\citet{boe-fal-nie-szu-was:c:metrics} report that computing a map of
elections with $10$ candidates and $50$ voters based on these two
distances took several weeks on a powerful computing cluster.
Consequently, generating maps with $100$ candidates and $100$ voters
(or even more), as we do in the following sections, is completely
infeasible using these distances.

Let us now compare how the positionwise distance fares against the
isomorphic swap and Spearman ones. Foremost, there is a fast,
polynomial-time algorithm for computing the positionwise
distance. Hence, it is possible to use it for much larger elections
than the two isomorphic distances; computing the maps shown in this
paper takes at most a few hours on a modern desktop computer and not
weeks on a powerful computing cluster. Further, as the positionwise
distance operates on frequency matrices, it can directly work on
elections with different numbers of voters (although we never use this
feature in this paper). On the other hand, applying isomorphic swap
and Spearman distances to elections with different numbers of voters
would require additional work.  The positionwise distance also has a
similar robustness property as the swap and Spearman ones: Introducing
a small change to one of the elections in its input can only mildly
affect the output. Just like the two isomorphic distances, it is also
agnostic to which parts of the votes are changed; it treats all the
positions in the votes similarly.

The big difference between the positionwise distance and the swap and
Spearman ones lies in the equivalence classes they induce. Given an
election $E$, we say that its equivalence class under a distance $d$
consists of all elections that are at distance zero from $E$ under
$d$. Since the positionwise distance is defined over frequency
matrices, by definition it has fewer equivalence classes (hence, it is
in some sense less precise) than the swap and Spearman
distances. However, \citet{boe-fal-nie-szu-was:c:metrics} have shown
that among all distances that they tried (that are invariant to
renaming the candidates and voters), the positionwise distance still
has the largest number of equivalence classes, with a fairly large
advantage over the other distances, including one defined on top of
weighted majority relations of input elections. Hence, while the
positionwise distance loses precision as compared to the swap and
Spearman distances, the loss is the smallest among distances that were
considered for the map framework to date.\footnote{Note that
  \citet{boe-fal-nie-szu-was:c:metrics} only looked at equivalence
  classes of very small elections, with a few candidates and voters.}

Finally, we mention that \citet{boe-fal-nie-szu-was:c:metrics} also
computed Pearson correlation coefficient (PCC, see \Cref{sec:pcc} for
a formal introduction of this measure) between distances computed
using the isomorphic swap distance and several other ones. They found
that the positionwise distance had the highest correlation (e.g., on
a dataset of $340$ diverse elections, formed in a similar spirit as
the dataset that we describe in the next section, the PCC between the
positionwise distance and isomorphic swap distance was 0.745, which
means a strong correlation).

Overall, we conclude that the positionwise distance is a good choice
for the map framework, for regimes where neither the isomorphic
swap nor the isomorphic Spearman ones can be used.

\section{Maps of Statistical Cultures}\label{sec:map-cultures}

The main goal of this section is to form a map of elections generated
using various statistical cultures, to analyze its contents, and to
argue that it is accurate and, hence, credible.  We consider all the
statistical cultures from Section~\ref{sec:cultures}, but, naturally,
we use a limited set of parameters.  Specifically, in
\Cref{sec:datasets}, we describe the precise composition of our
diverse synthetic dataset and present its visualization as a map.
Subsequently, in \Cref{sec:general_observations}, we analyze where
elections generated by different statistical cultures land on the map
and which elections end up being placed close to each other and why.
Next, in \Cref{sec:robustness}, we verify whether the produced maps
are credible in the sense that distances between points on the map
accurately reflect the positionwise distances between the
corresponding elections.

Overall, this section presents the first use case of the map
(\emph{Finding Relations Between Elections}), as presented
in \Cref{sec:usecases}.

\begin{table}[]
  \centering
{
  \begin{tabular}{lcc}
    \toprule
    Model & Number of Elections \\
    
    \midrule
    Impartial Culture           & 20 \\
    \midrule
    Single-Peaked (Conitzer)  & 20 \\
    Single-Peaked (Walsh)      & 20 \\
    SPOC                       & 20 \\
    Single-Crossing           & 20 \\
    \midrule
    1D        & 20 \\
    2D         & 20 \\
    3D         & 20 \\
    5-Cube         & 20 \\
    10-Cube         & 20 \\
    20-Cube         & 20 \\
    \midrule
    Circle         & 20 \\
    Sphere         & 20 \\
    4-Sphere   & 20 \\
    \midrule
    Group-Separable (Balanced)     & 20 \\
    Group-Separable (Caterpillar)    & 20 \\
    \midrule
    Urn       & 80 \\  
    Mallows    & 80 \\
    \midrule
    Compass ($\ID$,~$\AN$,~$\UN$,~$\ST$) & 4 \\
    \midrule
    Paths & 20$\times$4 \\
    \bottomrule
  \end{tabular} }
    \caption{\label{tab:embed_setup}Composition of the datasets used in Section~\ref{sec:map-cultures}.}
\end{table}

\subsection{Choosing the Dataset and Drawing the Map}\label{sec:datasets}

All datasets that we use in this section consist of $480$ elections
from various statistical cultures---their exact composition is given in
\Cref{tab:embed_setup}---and we describe the parameters for the urn
and Mallows model a bit later. Additionally, we include the four
compass matrices and $80$ path matrices ($20$ matrices per path; we
omit the paths between $\UN$ and $\ID$ and between $\AN$ and $\ST$,
which would cross in the middle of the map and would make it more
cluttered).  All the elections that we consider include $100$ voters
and, typically, $100$ candidates, but occasionally we consider
candidate sets with different cardinalities.  We refer to our main
dataset as the $100 \times 100$ one, but we also consider, e.g.,
$4 \times 100$, $10 \times 100$ or $20 \times 100$ datasets, whose
elections have, respectively, $4$, $10$, or $20$ candidates and $100$
voters. All these datasets consist of elections coming from the same
cultures, as described in~\Cref{tab:embed_setup}. For each number of
candidates, we only generate a single dataset. Thus, for example, all
maps that depict the $100 \times 100$ dataset regard the same $480$
elections (plus the $84$ matrices for the compass and the
paths).\footnote{The only exception to this rule appears in
  \Cref{sec:robustness} where we generate several different datasets
  of each size (but with the same composition), to analyze the
  robustness of the maps that we obtain.  Nonetheless, even in that
  section the figures depict our main $100 \times 100$ dataset, that
  also appears in other sections.}

Most of the statistical cultures that we use either do not have
parameters, as in the case of, e.g., impartial culture or
single-crossing models, or have fairly simple ones, as in the case of,
e.g., the Euclidean models for which we only select the
dimension.\footnote{Naturally, one could argue that for the Euclidean
  model we should also view the distribution of the points as a
  parameter.  This certainly is a valid point of view and it would be
  interesting to consider what types of elections one
  obtains by varying the distributions of the candidate and voter
  points, but it goes beyond the scope of this paper. We focus on the
  simple distributions described in \Cref{sec:cultures}.}  Yet, the
choice of the parameters for the urn and Mallows models requires some care:
\begin{description}
\item[Urn model.]  Recall that the urn model has the parameter of
  contagion $\alpha$, which takes values between~$0$ and $\infty$. The
  larger is this parameter, the more similar are the generated
  elections to the identity one.  To generate an urn election, we
  choose $\alpha$ according to the Gamma distribution with shape
  parameter $k=0.8$ and scale parameter $\theta=1$. This ensures that
  about half of the urn elections are closer to $\UN$ than to $\ID$,
  and about half are closer to $\ID$ than to $\UN$ (we have verified
  this experimentally).

\item[Mallows model.] Regarding the Mallows model, we have the central
  vote $v^*$ and the dispersion parameter $\phi$, which takes values
  between $0$ and~$1$. Using $\phi = 0$ leads to generating $\ID$
  elections (where all the votes are equal to $v^*$) and using
  $\phi = 1$ leads to generating IC ones.  Hence, on the surface, it
  seems natural to choose $\phi$ uniformly at random from the interval
  $[0,1]$.  We do not take this approach and, instead, we use the
  normalization introduced by
  \citet{boe-bre-fal-nie-szu:c:compass}.\footnote{Note that the work
    of \citet{boe-bre-fal-nie-szu:c:compass} is one of the two
    conference papers on which the current work is based. We omit a
    detailed discussion of the proposed normalization of the Mallows model here because we consider
    it to be sufficiently important that it deserves its own paper,
    focused explicitly on its features and properties (which we intend
    to write later, and whose conference version already
    exists~\citep{boe-fal-kra:c:mallows-normalization}). Consequently,
    here we only refer to the initial definition of the normalization
    and we briefly explain why we find it useful.}  Specifically, they
  introduced a new parameter $\normphi$, which is converted to the original
  dispersion parameter as follows: For an election with $m$ candidates
  and a value of $\normphi \in [0,1]$, we compute $\phi$ such that the
  expected swap distance between a generated vote and the central
  one (normalized by ${m(m-1)}/{2}$, i.e., the largest possible swap
  distance between two votes) is equal to $\normphi/2$.
  Consequently, using $\normphi=0$ still leads to generating identity
  elections, using $\normphi=1$ still gives the Impartial Culture
  model.  However, now using $\normphi = x\in [0,1]$ leads to
  generating elections whose expected swap distance to $\ID$ is an~$x$
  fraction of the distance between $\ID$ and $\UN$, irrespective of
  the number of candidates (\citet{boe-bre-fal-nie-szu:c:compass}
  observed that the same holds if we use the positionwise distance
  instead of the isomorphic swap one).\footnote{The reader may worry
    that UN is a matrix rather than election. Indeed, for the swap
    distance, to compute the distance between $\ID$ and $\UN$, we
    would measure a distance between an election where all votes are
    identical (ID) and an election where each possible vote appears
    the same number of times (UN).}  The problem with the classic
  dispersion parameter is that if we fixed it as $\phi = x$,
  $0 < x < 1$, and generated elections with more and more candidates,
  their swap distance would decrease relative to the distance between
  $\ID$ and $\UN$~\citep{boe-bre-fal-nie-szu:c:compass}.
  Consequently, using the Mallows model with the classic $\phi$ parameter
  selected uniformly at random (for a large number of candidates, such
  as $100$) would lead to a large cluster of elections near $\ID$ and
  far fewer elections near $\UN$. Using values of the $\normphi$ parameter chosen
  uniformly at random generates Mallows elections that, roughly
  speaking, are spread uniformly between $\ID$ and $\UN$. Hence, we
  use the latter approach (see also the arguments for the normalized
  model provided by \citet{boe-fal-kra:c:mallows-normalization}).
  In the following, whenever we discuss the Mallows model, we use
  the $\normphi$ parameter.
\end{description}

\begin{figure}
\centering
  \includegraphics[width=7cm]{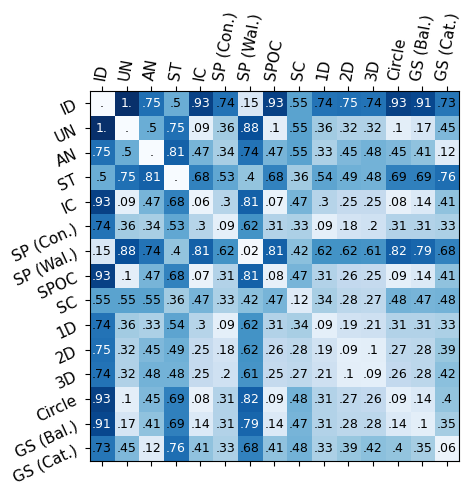}
  \caption{\label{fig:distances-100x100}Matrix of normalized positionwise distances between elections from the
    $100 \times 100$ dataset.  The number in each cell gives the average
    distance between elections from respective cultures (cells on the
    diagonal give average distances between elections from the given
    culture).}
\end{figure}

\begin{figure}[]
  \centering

  \includegraphics[width=10cm, trim={0.2cm 0.2cm 0.2cm
      0.2cm}, clip]{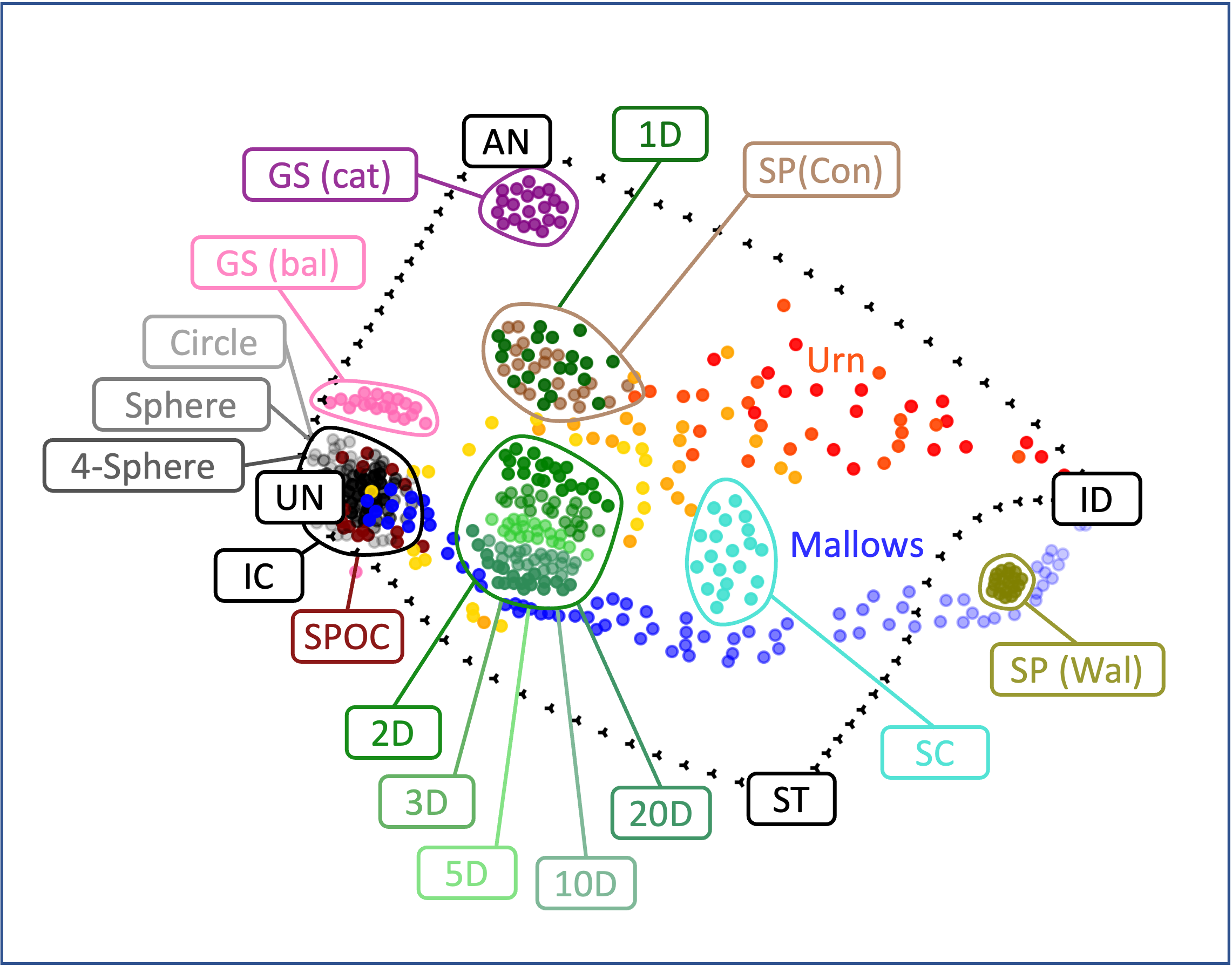}
      
  \caption{Map of elections obtained using the Kamada-Kawai algorithm (KK). The
    colors of the dots correspond to the statistical culture from
    which the elections are generated. Elections generated from the Mallows
    model use the blue color, and the more pale they are, the closer
    is the $\normphi$ parameter to $0$. The urn elections use 
    yellow-red colors, and the more red they are, the larger is the
    contagion parameter $\alpha$.}
    \label{fig:maps100x100}   
\end{figure}

We have computed the positionwise distance between each pair of
elections from the $100 \times 100$ dataset (see
\Cref{fig:distances-100x100}). Next, we have prepared the map of
this dataset using the same methodology as in \Cref{sec:basic-map}. In
particular, this means that we used the Kamada-Kawai embedding
algorithm~\citep{kam-kaw:j:embedding}.  In \Cref{app:embeddings} we
compare this map with two other ones, formed using the
Fruchterman-Reingold (FR) embedding
algorithm~\citep{fru-rei:j:graph-drawing} and the MDS
algorithm~\citep{kru:j:mds,lee:j:mds}.

\subsection{General Observations}\label{sec:general_observations}

In this section, we describe and analyze the map from
\Cref{fig:maps100x100}. We start with some more general observations
before we discuss different areas of the map in more detail.  One of
the most glaring observations is that elections coming from the same
statistical culture are typically clustered together and do not have
elections from other cultures in between (two main exceptions include
elections that are located close to the $\UN$ matrix, which are quite
mixed, and the 1D-Interval and single-peaked elections generated
according to the Conitzer model; see our discussion below).  Indeed,
this clustering seems appropriate given the (underlying) positionwise
distances between the elections, as shown in
\Cref{fig:distances-100x100}: Elections from the same statistical
cultures are, on average, closer to each other than to the other
elections.
Another general observation is that the compass matrices and the paths
between them (mostly) encircle the other elections (with the exception
of the area close to the Walsh single-peaked elections). This suggests
that, indeed, the compass matrices are quite extreme in the space of
elections induced by the positionwise distance (or that our set of
statistical cultures is missing some distinguished type of elections).

Throughout the rest of this section we discuss specific statistical
cultures and areas of the map. We stress that the maps themselves are
visualization tools and only provide intuitions. Accordingly, we verified  all the comments
and observations that we made below against the
actual (underlying) positionwise distances.

\subsubsection{Impartial Culture and Elections Around UN}
The area near the $\UN$ matrix contains a number of very different
elections, ranging from the completely unstructured ones, generated
using the impartial culture model, to highly structured SPOC ones (and
closely related Sphere ones). The reason why these very different
elections take this position is an inherent flaw in the
positionwise distance and frequency matrices. Indeed, both under IC
and under SPOC, each candidate appears at each position in a
generated vote with the same probability. The difference between IC and
SPOC is that under IC there are almost no correlations between the
positions of two given candidates in a single vote (except that they cannot
be equal), whereas under SPOC the correlations can be quite strong
(e.g., if two candidates are next to each other on the societal axis,
then they are likely located nearby in the vote). Yet, the
positionwise distance is blind to such correlations as they are not
captured in election's frequency matrices. This is the price that we need to pay for
using a fast-to-compute distance (using, e.g., the isomorphic swap
distance would avoid this issue, but would prevent us from considering
large candidate sets).
We point to the work of
\citet{boe-bre-elk-fal-szu:c:frequency-matrices} for a detailed
discussion of expected frequency matrices of various statistical
cultures, and to the work of \citet{boe-fal-nie-szu-was:c:metrics} for
a comparison of several distances that one could use in the maps
(including the positionwise one and the isomorphic swap one).

Another interesting issue is that balanced group-separable elections
are near the $\UN$ matrix, but still at some distance from it. If
the number of candidates were a power of two, then each candidate in a
vote generated for a balanced group-separable election would be
located at each possible position with the same probability and group-separable elections would be mixed
together with the IC and SPOC ones. However, since we consider $100$
candidates, the symmetry is broken to some extent and balanced
group-separable elections are a bit further away.

Finally, we note that elections generated from the impartial culture
model cover a relatively small area of the map. This confirms the
well-accepted intuition that limiting one's experiments to this model
is likely to produce biased results, as we would ignore a huge number
of possible elections that are not similar to the IC ones.

\subsubsection{Urn and Mallows Elections}

The elections generated using the urn and Mallows models form
``paths'' that link the $\UN$ and $\ID$ matrices. Importantly, these
paths are, roughly, equally dense. For example, if we consider the urn
``path'' then irrespective of whether we are looking at the area near the
$\UN$ matrix, or the area near the $\ID$ matrix, or anywhere in
between, there is a similar number of elections in the vicinity. The
same applies to the Mallows elections. This confirms that our way of
generating the $\alpha$ and $\normphi$ parameters is reasonable (in
\Cref{sec:scalability} we will see that this property is maintained for
different numbers of candidates).

\subsubsection{Single-Peaked and 1D-Interval Elections}
Single-peaked elections generated using the Conitzer model are nearly
indistinguishable from those generated using the 1D-Interval model (as
far as their positionwise distances go). 
This is because both models generate elections in a somewhat similar manner.
Indeed, generating 1D-Interval elections, if we draw the ideal
points of the candidates and voters uniformly at random from an
interval, then the process of forming the preference orders of the
voters is similar to the one used in the Conitzer model.
The difference is that 
in the 1D-Interval model there are more correlations: A given voter
$v$ ranks the candidate with the closest ideal point on top, the next
ranked candidate is either to the left or to the right of the first
one, and this happens with probability close to $\nicefrac{1}{2}$, and
so on (as in the Conitzer model). The correlations occur because the decisions regarding which
candidate should be ranked next are not made independently for each
voter, but are derived from the positions of the ideal points; still,
apparently, these correlations are sufficiently small not to be easily detectable
using the positionwise distances.

Elections generated using the Walsh model are very different from
those generated using the Conitzer one. To get a feeling as to why
this is a natural result, let us consider a candidate set
$C = \{l_m, \ldots, l_1, c, r_1, \ldots, r_m\}$ and the corresponding
societal axis
$l_m \lhd \cdots \lhd l_1 \lhd c \lhd r_1 \lhd \cdots \lhd r_m$. Under
the Conitzer model, the probability of generating a vote with a given
candidate on top is $\frac{1}{2m+1}$ (by definition of the model), but
under the Walsh's model these probabilities differ drastically.  The
probability of a vote with $l_m$ on top (or, with $r_m$ on top) is
$\frac{1}{2^{2m}}$ (because there are $2^{2m}$ different single-peaked
votes for this axis, each of them is drawn uniformly at random, and
only one of them starts with $l_m$), whereas the probability of
generating a vote with $c$ on top is $\Theta(\frac{1}{\sqrt{m}})$ (we
omit the calculations).  Generally, Walsh's model generates votes that
are similar to
$c \pref \{l_1,r_1\} \pref \{l_2, r_2\} \pref \cdots \pref \{l_m,
r_m\}$; i.e., they are close to having a center order, as in the
Mallows model. This explains both the distance of the Walsh elections
from the Conitzer ones, and their proximity to some Mallows elections.
\citet{boe-bre-elk-fal-szu:c:frequency-matrices} give a detailed
analysis of the frequency matrices of Conitzer, Walsh, and Mallows
distributions, which supports this observation. The differences between the Conitzer and Walsh
elections are also very clearly visible in ``microscope'' maps
provided by \citet{fal-kac-sor-szu-was:c:microscope}, which depict
inner structures of single elections.

\subsubsection{Euclidean Elections} 
We observe that, for a given $x$, the $x$-dimensional hypercube
elections are quite similar to each other and also fairly similar to
hypercube elections generated for other dimensions. One exception is
that the 1D-Interval elections are more different from the other
hypercube ones. This is understandable since 1D-Interval elections are
single-peaked and single-crossing, whereas the other hypercube
elections are not. Interestingly, the computed positionwise distances
were sufficient to recognize this difference.  Generally, hypercube
elections seem to form a large and diverse class, which means that
they may be useful in experiments. Indeed, in \Cref{sec:real-life} we
observe that many real-life elections land in the same area of the map
as high-dimensional hypercube ones.

Similar observations as for the hypercube elections apply to
hypersphere ones, except that---as discussed before---they are quite similar to the impartial culture
elections under the
positionwise distance. This shows that the distribution of the ideal points in the
Euclidean models has a strong impact on the generated
elections. Further studies are needed to recommend distributions that
should be used in experiments (as some may lead to particularly
appealing classes of elections, or, perhaps, to elections that are
close to those appearing in reality).
The similarity between SPOC  and hypersphere elections is
reassuring. Indeed, Circle elections are, by their nature, a subset
of SPOC ones, and for higher dimensions we would not expect
big changes. 

\subsubsection{Single-Crossing Elections}
Elections generated using our single-crossing model are fairly close
to both the Mallows ones and the urn ones, but are at quite some distance
from the 1D-Interval elections, which also are
single-crossing. Indeed, the map certainly does not show all kinds of
similarities between elections, but only those captured by the
positionwise distance.

\subsection{Accuracy of the Maps}\label{sec:robustness}

Now that we have seen that the maps can provide  some
interesting insights regarding the nature of the depicted elections, let us
analyze how accurate is the map from
Figure~\ref{fig:maps100x100},
in the sense of capturing the positionwise distances between the
elections most precisely.  To this end, we will evaluate it using
three measures. First, in \Cref{sec:pcc} we will evaluate the
correlation between the positionwise distance between two elections
and the Euclidean distance between their points in the embeddings.
Second, in \Cref{sec:distortion} we will have a closer look at the
extent to which distances between elections are distorted.  Finally,
in \Cref{app:monotonicity} we will consider whether relations between
positionwise distances are preserved.

In this section, whenever we write \textit{original distance} we refer
to the positionwise distance between two elections, and whenever we
write \textit{embedded distance} we refer to the Euclidean distance
between the points on a given map (which correspond to these
elections). Whenever we write \emph{normalized embedded (original) distance}, we refer to
the embedded (original) distance divided by the embedded (original) distance between the $\ID$ and $\UN$
matrices (as it is the largest possible one).  Further, we introduce
the notion of an \emph{embedding summary}: By an
embedding summary~$Q=(\mathcal{E}, d_\mathcal{M}, d_\Euc)$ we refer to a
triple that consists of a set of elections~$\mathcal{E}$, original
distances~$d_\mathcal{M}$ between these elections according to
metric~$\mathcal{M}$, and Euclidean distances~$d_\Euc$ between these
elections after the embedding.

In the analysis below, we will be interested in two kinds of results.
Foremost, we will analyze the accuracy of the embedding shown in
\Cref{fig:maps100x100}.  However, to also evaluate if the results
are not due to random chance, for each
dataset size (i.e., $4 \times 100$, $10 \times 100$, $20 \times 100$,
and $100 \times 100$) we generate $10$ datasets (as described in
\Cref{sec:datasets}) and report averages and standard deviations for
them (given the low variance of the results we observed, we find using
10 datasets sufficient).

\newcommand{\numberbarPCC}[2]{\tikz{
    \fill[purple!17] (0,0) rectangle (#1*200mm-167mm,10pt);
    \node[inner sep=0pt, anchor=south west] at (0,0) {#1 $\pm$ #2};}
}

\begin{table}
  \centering
  \begin{tabular}{c|l}
    \toprule
                     & \multicolumn{1}{c}{average PCC values} \\
    dataset          &  Kamada-Kawai \\
    \midrule
    $4   \times 100$ & \numberbarPCC{0.9661}{0.0044} \\
    $10  \times 100$ & \numberbarPCC{0.9686}{0.0061} \\
    $20  \times 100$ & \numberbarPCC{0.9745}{0.0015} \\
    $100 \times 100$ & \numberbarPCC{0.9735}{0.0115} \\
    \bottomrule
  \end{tabular}
  \caption{\label{tab:pcc}Average values of the PCC between the
    original and embedded distances for collections of datasets of
    various sizes. After the $\pm$ signs we report the standard
    deviations.}
\end{table}

\subsubsection{Correlation Between the Positionwise and Euclidean Distances}\label{sec:pcc}

Given an embedding summary $Q=(\mathcal{E}, d_\mathcal{M}, d_\Euc)$
associated with some embedding, we would like to evaluate the
correlation between the original distances and the Euclidean ones. To
this end, we use the Pearson correlation coefficient (PCC), a
classical measure of linear correlation used in statistics.

For two vectors $x = (x_1, \ldots, x_t)$ and $y = (y_1, \ldots, y_t)$,
their PCC is defined as follows ($\overline{x}$ is the arithmetic
average of the values from $x$; $\overline{y}$ is defined
analogously):
\[
  \mathrm{PCC}(x,y) =
  \frac{(\sum_{i=1}^t(x_i-\overline{x})(y_i-\overline{y}))}{\sqrt{\sum_{i=1}^t(x_i-\overline{x})^2
      \sum_{i=1}^t(y_i-\overline{y})^2}}.
\]
PCC measures the level of linear correlation between two random
variables and takes values between $-1$ and $1$ (its absolute value
gives the level of correlation and the sign indicates positive or
negative correlation; in our case, the closer a value to $1$, the more
correlation there is).

For the embedding from Figure~\ref{fig:maps100x100} we get very high
PCC, equal to~$0.9805$. In \Cref{tab:pcc} we report average PCC values
and standard deviations for our datasets of various sizes (note that
for our map in Figure~\ref{fig:maps100x100} we chose a particularly
fortunate embedding, so its PCC value is notably better than the
average reported in \Cref{tab:pcc}).  All in all, we see that the PCC
values are very high for all datasets. In \Cref{app:pcc} we compare
the performance of Kamada-Kawai algorithm with that of FR and MDS and
show that it has strong advantage over the former and slight over the
latter.

\subsubsection{Distortion}\label{sec:distortion}

\begin{figure}[t]
    \centering
    
        \includegraphics[width=8.cm, trim={0.2cm 0.2cm 0.2cm 0.2cm}, clip]{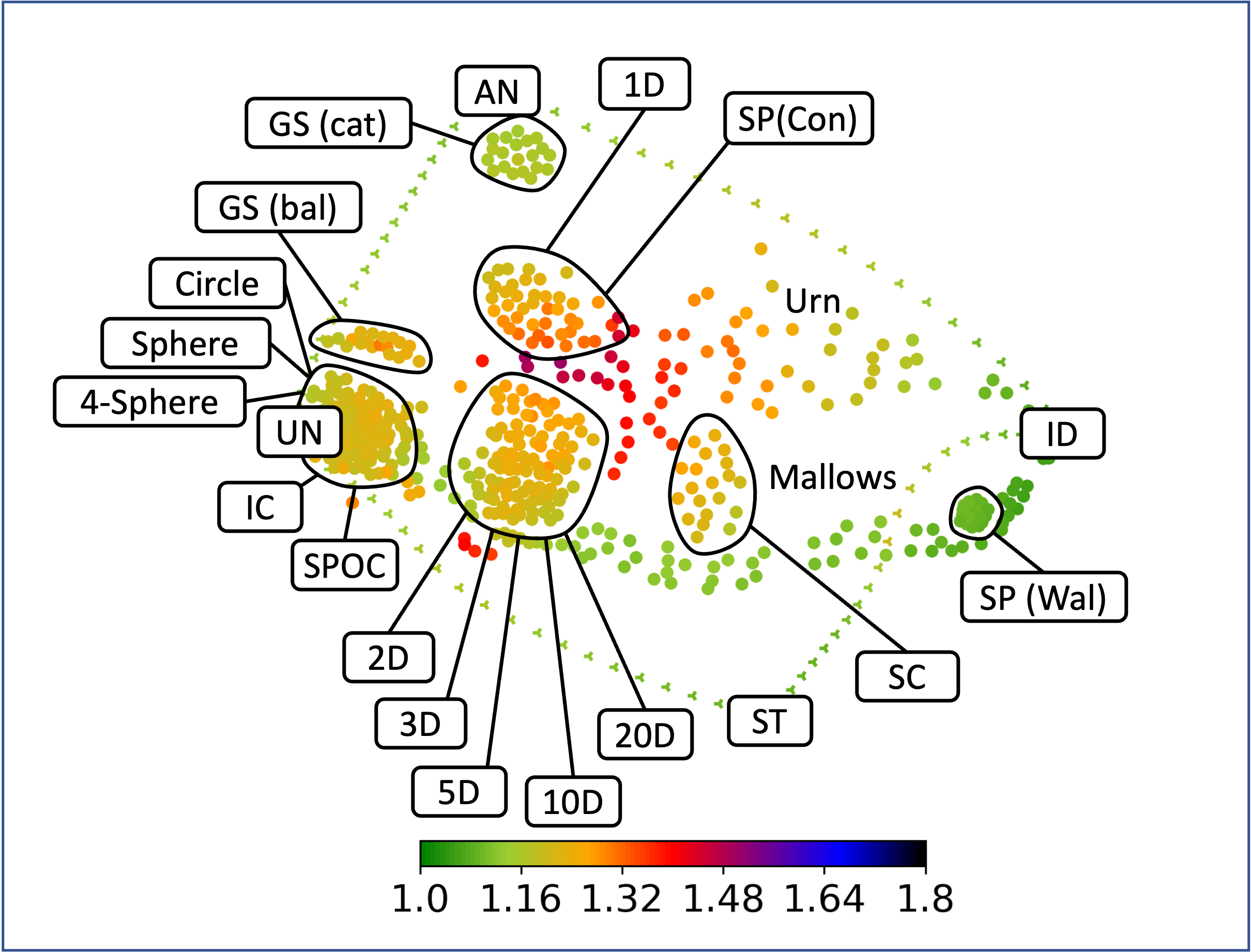}

        \caption{\label{fig:distortion} Distortion coloring for the
          emdedding from \Cref{fig:maps100x100}. Each election $X$
          (i.e., each point $X$) has a color reflecting its value
          $\AMR(X)$ for the respective embedding.}
\end{figure}

We analyze distortion of the distances introduced by the embedding.
The intuition is that the normalized embedded distance between two points
should be similar to the normalized original one. Formally, for a
given pair of elections~$X$ and~$Y$ the \emph{distortion} is defined
as:
\[
\MR(X,Y) = \frac{\max(\bar{d}_\Euc(X,Y), \bar{d}_\mathcal{M}(X,Y))}{\min(\bar{d}_\Euc(X,Y), \bar{d}_\mathcal{M}(X,Y))},  
\]
where~$\bar{d}(X,Y)$ means that the distance between~$X$ and~$Y$ is
normalized by the distance between~$\ID$ and~$\UN$.  Note that
$\MR(X,Y) \geq 1$ for all elections $X$ and $Y$. Its value can be
interpreted as the relative error between the original and the embedded
distance. For example, if $\MR(X,Y) = 1.2$ then it means that the larger
among the original and the embedded distance between $X$ and $Y$ is $20\%$
larger than the other one.
For a given embedding summary~$Q=(\mathcal{E}, d_\mathcal{M}, d_\Euc)$ and a given election~$X$, we define the
average
\emph{distortion} of this election in this embedding summary as:
\[
\AMR_{Q}(X) = \frac{1}{|\mathcal{E}|-1}\sum_{Y \in \mathcal{E} \setminus \{X\} } \MR(X,Y).
\]
The closer is the~$\mathrm{AMR}$ value to one, the better---this means
that the embedded distances are proportional to the original ones.

In \Cref{fig:distortion}, we present the map from
\Cref{fig:maps100x100} with elections colored according to their
distortion (i.e., according to their $\AMR$ vaules).  In
\Cref{app:distortion} we compare these distortion values to those
achieved by FR and MDS and find that our KK embedding is superior.
Wee see that urn elections, as well as 1D-Interval and Conitzer ones,
have the highest average distortion. Another way of phrasing this
observation would be that elections in the center of the map tend to
have the highest distortion.

\newcommand{\numberbarDistortion}[2]{\tikz{
    \fill[red!17] (0,0) rectangle (#1*60mm-60mm,10pt);
    \node[inner sep=0pt, anchor=south west] at (0,0) {#1 $\pm$ #2};}
}

\begin{table}
  \centering
  \begin{tabular}{c|l}
    \toprule
                     & \multicolumn{1}{c}{total average distortion values} \\
    dataset          &  Kamada-Kawai \\
    \midrule
    $4   \times 100$ & \numberbarDistortion{1.2612}{0.0158} \\
    $10  \times 100$ & \numberbarDistortion{1.2625}{0.0125} \\
    $20  \times 100$ & \numberbarDistortion{1.2406}{0.0060} \\
    $100 \times 100$ & \numberbarDistortion{1.2119}{0.0123} \\
    \bottomrule
  \end{tabular}
  \caption{\label{tab:distortion}Total average distortion values
    (i.e., $\AMR$ values) for collections of datasets of various
    sizes. After the $\pm$ signs we report the standard deviations.}
\end{table}

In \Cref{tab:distortion} we present the average values of the average
distortions for the KK embedding and datasets with different numbers
of candidates. For each dataset, we compute the average $\AMR$ value
for its elections, as well as the standard deviation.  We observe that
the more candidates there are, the lower is the distortion
value. Overall, the distances in the embedding are, on average, off by
$20-26\%$.

One reason why the distortion values may be high is that it is
particularly difficult to represent distances between elections within
a single statistical culture. For example, all elections in the
$100 \times 100$ dataset generated from the impartial culture model
are at roughly the same distance from each other, but it is impossible
to reflect this accurately in an embedding. Hence, in
\Cref{tab:distortion-omitting-own-culture} we report average total
distortion values for elections from each of our statistical cultures,
but computed only against elections that are sufficiently far away from them.
Specifically, for each election $X$ from a given statistical culture we consider only
those other election $Y$ from the dataset for which:
\[
  \POS(X,Y) \geq 0.1\cdot \POS(\UN,\ID).
\]
This way we disregard embedding errors that occur close to a given
election and focus on the broader picture.  We see that the
distortion values in \Cref{tab:distortion-omitting-own-culture} are
much lower than those reported in \Cref{tab:distortion}. For example,
the overall result for the KK embedding is $1.159$ and for some
cultures (such as, e.g., the Mallows model or the high-dimensional
Euclidean models) the results are notably lower. We interpret this as
saying that, all in all, our maps are sufficiently accurate, but
certainly not perfect.

\newcommand{\numberbarRange}[1]{\tikz{
    \fill[orange!17] (0,0) rectangle (#1*60mm-60mm,10pt);
    \node[inner sep=0pt, anchor=south west] at (0,0) {#1};}
}

\begin{table}
  \centering
  \begin{tabular}{c|lll}
    \toprule
                     & \multicolumn{1}{c}{average distortion values} \\
    Model          &  Kamada-Kawai \\
    \midrule
    Impartial Culture         &  \numberbarRange{1.07} \\
    \midrule
    Single-Peaked (Conitzer)  &\numberbarRange{1.244} \\
    Single-Peaked (Walsh)     &\numberbarRange{1.071} \\
    SPOC                      &\numberbarRange{1.081} \\
    Single-Crossing           &\numberbarRange{1.225} \\
    \midrule
    1D         &\numberbarRange{1.233} \\
    2D         &\numberbarRange{1.203} \\
    3D         &\numberbarRange{1.146} \\
    5-Cube          &\numberbarRange{1.114} \\
    10-Cube         &\numberbarRange{1.094} \\
    20-Cube         &\numberbarRange{1.097} \\
    \midrule
    Circle          &\numberbarRange{1.101} \\
    Sphere          &\numberbarRange{1.077} \\
    4-Sphere        &\numberbarRange{1.072} \\
    \midrule
    Group-Separable (Balanced)     &\numberbarRange{1.204} \\
    Group-Separable (Caterpillar)  &\numberbarRange{1.14} \\
    \midrule
    Urn        &\numberbarRange{1.285} \\  
    Mallows    &\numberbarRange{1.094} \\
    \midrule
    All &\numberbarRange{1.159} \\
    \bottomrule
  \end{tabular}
  \caption{\label{tab:distortion-omitting-own-culture} Total average
    distortion values for elections from given cultures in the
    $100 \times 100$ dataset, where for each election $X$ we compute
    the average distortion only with respect to elections whose
    distance from $X$ is at least $0.1 \cdot \POS(\ID,\UN)$.}

\end{table}

\subsubsection{Varying the Number of
  Candidates}\label{sec:scalability}

Next, we check the robustness of the map with respect
to varying the number of candidates in the depicted elections.  In
\Cref{fig:scale} we compare the maps of the $4 \times 100$,
$10 \times 100$, $20 \times 100$, and $100 \times 100$ datasets,
created using the KK algorithm.  As we can see, the maps
for~$10$,~$20$, and~$100$ candidates are surprisingly similar. The
largest difference we can observe is for the balanced group-separable
elections---for the case of~$100$ candidates they are clearly
separated, while for~$10$ and~$20$ candidates they are mingling with
the Circle elections (one possible explanation for this could be that,
relatively speaking, for $100$ candidates the underlying tree is
closer to being balanced than for $10$ and $20$ candidates). In
addition, Euclidean elections of different dimensions are better
separated from each other for~$100$ candidates than for~$10$ or~$20$
candidates.  Another interesting observation is that the Walsh
elections are shifting toward the right side of the map as we increase
the number of candidates. At the same time, the caterpillar
group-separable elections are shifting toward~$\AN$ (these two effects
are symmetric as the frequency matrices of Walsh elections are
transposed frequency matrices of the caterpillar group-separable ones;
for details, see the work of
\citet{boe-bre-elk-fal-szu:c:frequency-matrices}).

Only the map with four candidates is different, even if the main shape is
still maintained. Note that for~$4$ candidates there are only~$24$
possible different votes, which likely explains why the map is not as
meaningful. For example, \citet{fal-sor-szu:j:subelection-isomorphism}
have shown that if we generate two elections with 4 candidates and 50
voters using two, possibly different, statistical cultures (including
many of those that we consider), it often suffices to delete between
20\% and 50\% of the voters (from each of the elections) to make them
isomorphic. For the case of 10 candidates, one would typically have to
delete between 85\% and 97\% of the voters to achieve isomorphism. Hence it
is not surprising that our map for~4 candidates is not as similar to
the other ones (even if we use 100 voters and not 50).

\begin{figure}[t]
    \centering
    
    \begin{subfigure}[b]{0.49\textwidth}
        \centering
        \includegraphics[width=6.cm, trim={0.2cm 0.2cm 0.2cm 0.2cm}, clip]
        {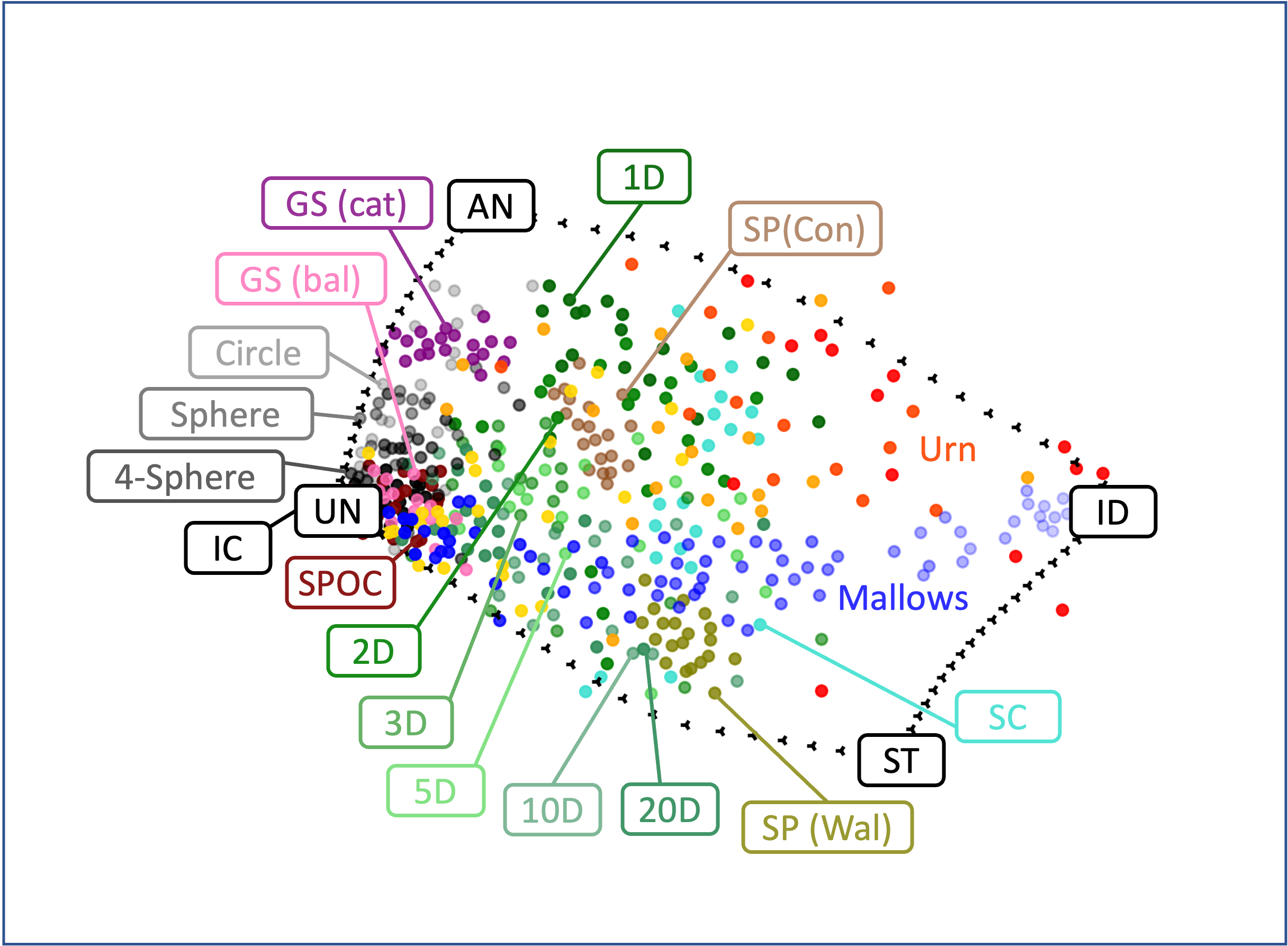}
        \caption{4 candidates and 100 voters}
    \end{subfigure}
    \begin{subfigure}[b]{0.49\textwidth}
        \centering
        \includegraphics[width=6.cm, trim={0.2cm 0.2cm 0.2cm 0.2cm}, clip]
        {images/scale/thesis_10x100.png}
        \caption{10 candidates and 100 voters}
    \end{subfigure}

    \vspace{1em}

    \begin{subfigure}[b]{0.49\textwidth}
        \centering
        \includegraphics[width=6.cm, trim={0.2cm 0.2cm 0.2cm 0.2cm}, clip]
        {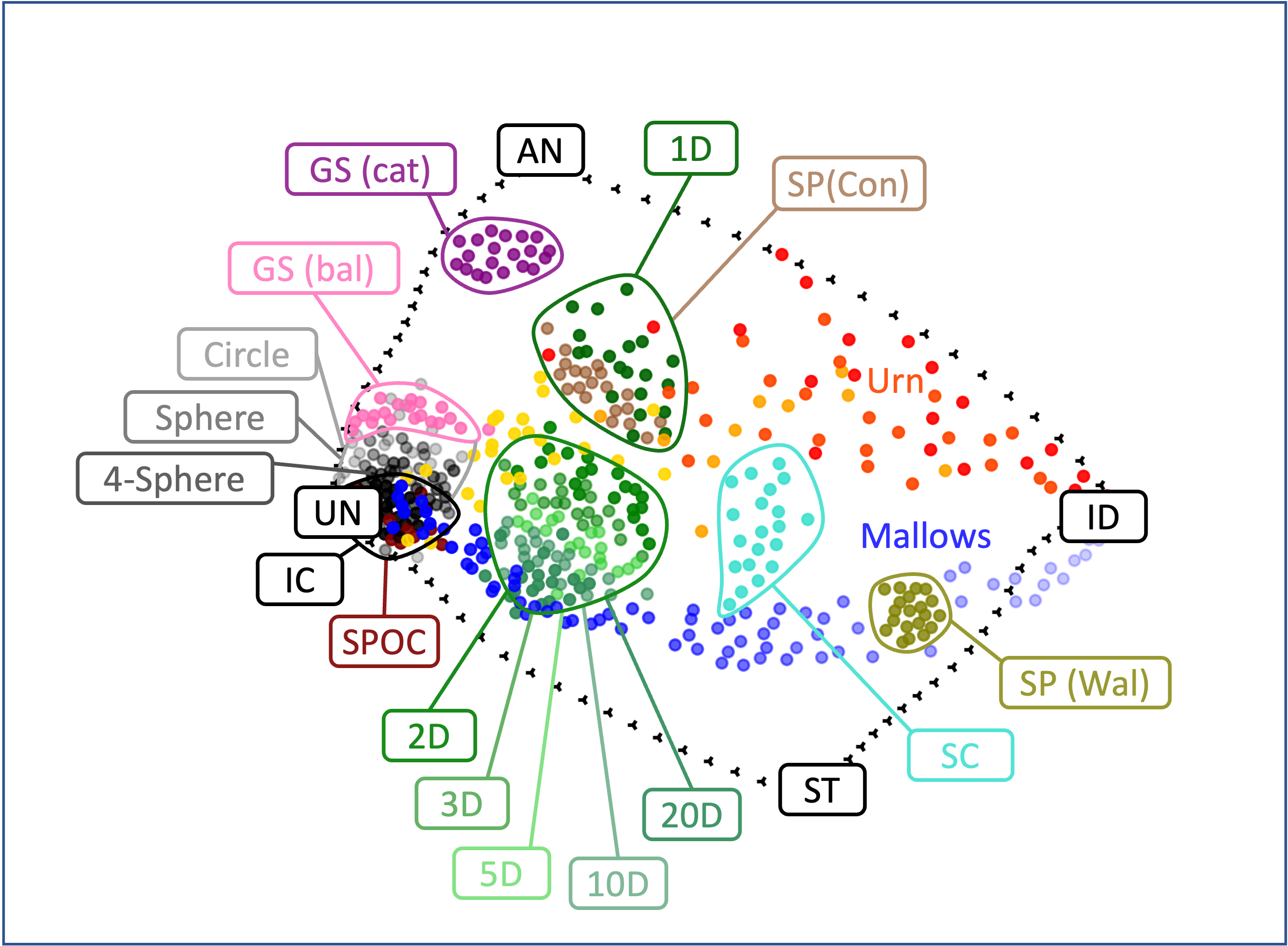}
        \caption{20 candidates and 100 voters}
    \end{subfigure}
    \begin{subfigure}[b]{0.49\textwidth}
        \centering
        \includegraphics[width=6.cm, trim={0.2cm 0.2cm 0.2cm 0.2cm}, clip]
        {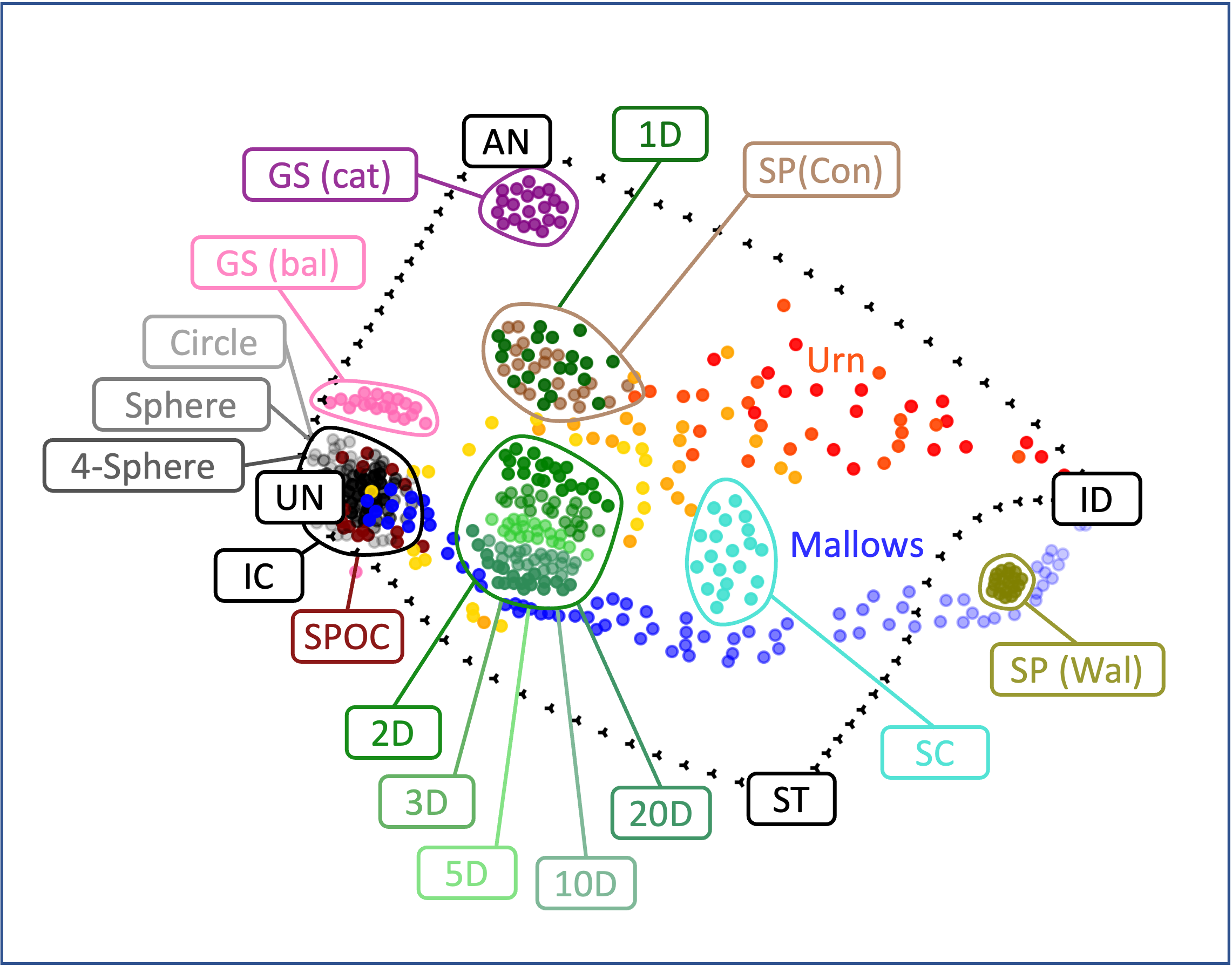}
        \caption{100 candidates and 100 voters}
    \end{subfigure}
    
    \caption{Maps of elections for datasets with different numbers of candidates.}
    \label{fig:scale}
\end{figure}

\begin{figure}[t]
    \centering
    
    \begin{subfigure}[b]{0.31\textwidth}
        \centering
        \includegraphics[width=5.cm, trim={0.2cm 0.2cm 0.2cm 0.2cm}, clip]
        {images/scale/thesis_100x100.png}
        \caption{100 candidates and 100 voters}
    \end{subfigure}
    \begin{subfigure}[b]{0.31\textwidth}
        \centering
        \includegraphics[width=5.cm, trim={0.2cm 0.2cm 0.2cm 0.2cm}, clip]
        {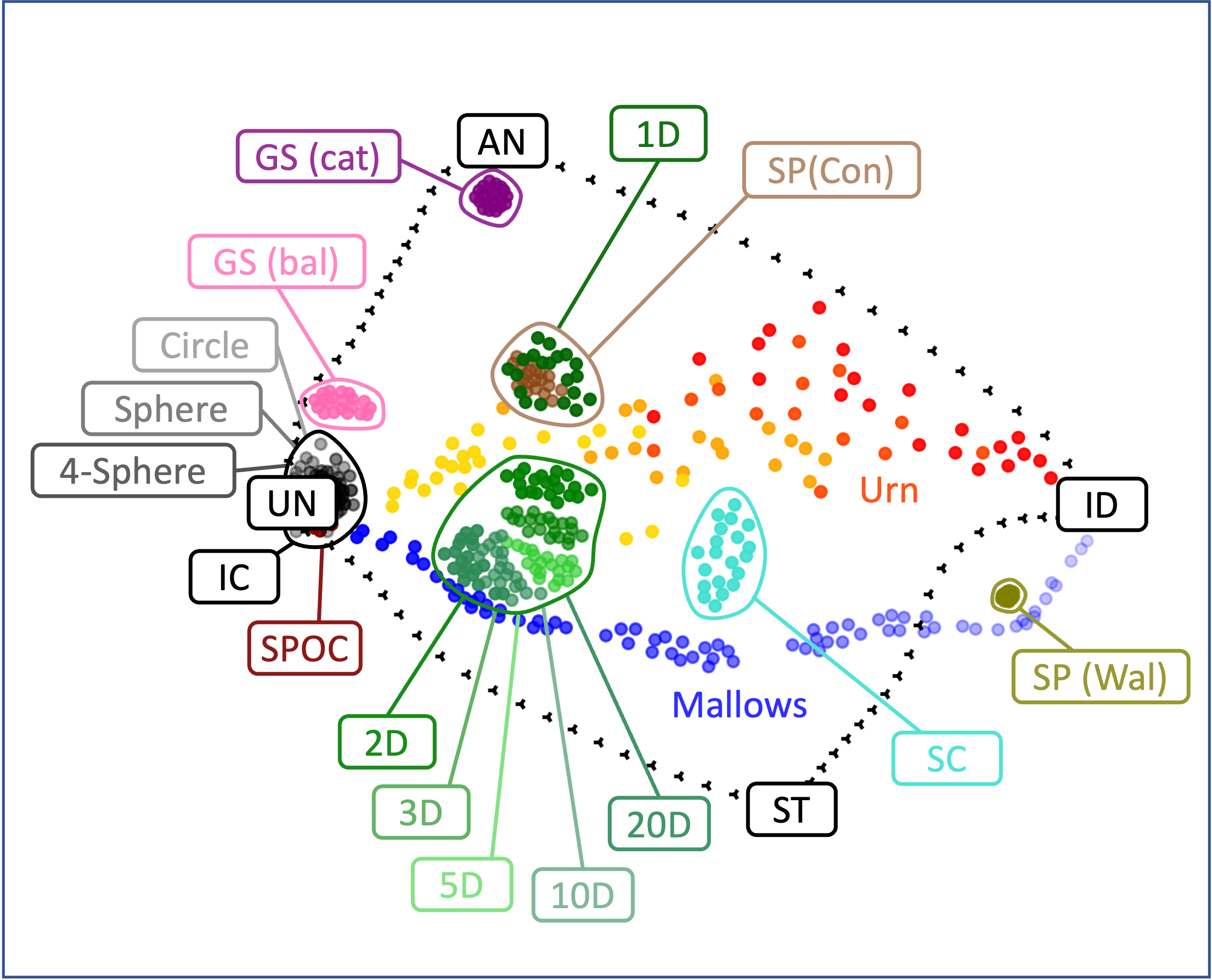}
        \caption{100 candidates and 1000 voters}
    \end{subfigure}
    \begin{subfigure}[b]{0.31\textwidth}
        \centering
        \includegraphics[width=5.cm, trim={0.2cm 0.2cm 0.2cm 0.2cm}, clip]
        {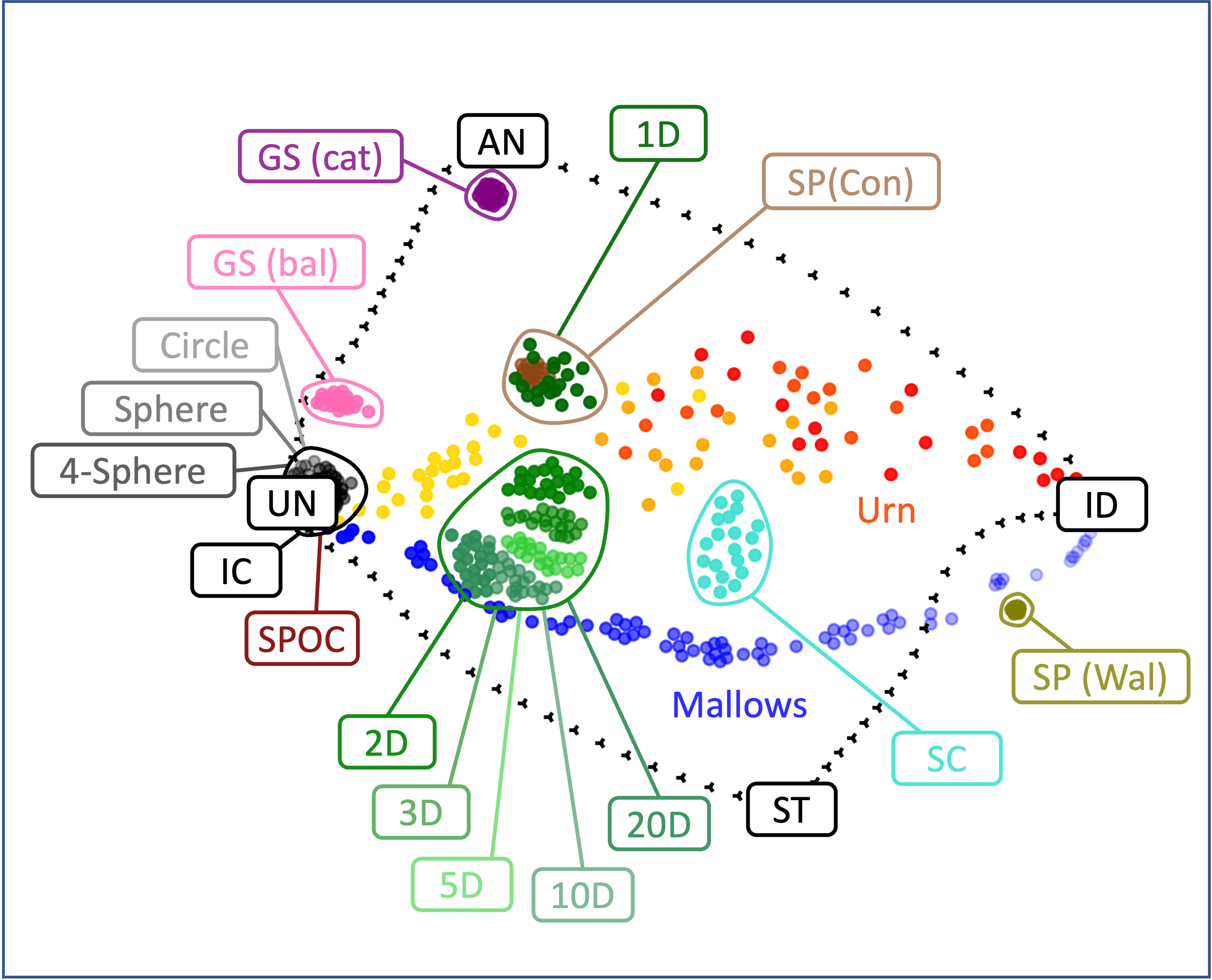}
        \caption{100 candidates and 2000 voters}
    \end{subfigure}
    
    \caption{Maps of elections for datasets with different numbers of voters.}
    \label{fig:scale_voters}
\end{figure}

\subsubsection{Varying the Number of
  Voters}\label{sec:voter-scalability}
One may also ask about the influence of the number of voters on the
shapes of our maps, especially in relation to the number of
candidates. The answer to this question lies in the nature of the
positionwise distance and the work of
\citet{boe-bre-elk-fal-szu:c:frequency-matrices}: As the
positionwise distance operates on elections' frequency matrices, the
more voters there are, the closer are the frequency matrices of elections to the expected
frequency matrices of the culture from which the elections were generated.  As a result, the more voters
there are, the more will our maps resemble the map of these expected
matrices, provided by
\citet{boe-bre-elk-fal-szu:c:frequency-matrices}. This map is very
similar to those we show in \Cref{fig:scale} (except for the one for
four candidates). In particular, this implies that if we were to
consider more and more voters, the location of the elections from the
respective statistical cultures would converge to the locations of
their expected frequency matrices. While this would result in lower
spread of elections from one culture on the map, the general relations
between the statistical cultures would be maintained.  We confirm this
in \Cref{fig:scale_voters}, where we compare the map of the
$100 \times 100$ dataset with maps of elections that have $100$
candidates, but where the numbers of voters are $1000$ and $2000$,
respectively. We see that, indeed, elections generated from each
culture (except for the urn one) are placed more compactly on maps
for higher numbers of voters (this does not apply to the urn
model because urn elections do not have an expected frequency matrix).

\section{Examples of Experiments Using the Map}\label{sec:experiments}

To demonstrate the usefulness of our map framework, we show its
application in several experiments related to the analysis of voting
rules. To this end, we use the $100 \times 100$ dataset from the
previous section. Indeed, some of our experiments involve measuring
running times and, hence, we need a dataset with large enough votes
for the results to be meaningful.  First, we focus on the scores
obtained by winning candidates for two prominent single-winner
rules. Then, we look more closely at the running times of ILP-based
algorithms for selected $\np$-hard rules. Finally, we analyze several
approximation algorithms for finding high-scoring committees.
Overall, the first two groups of experiments focus on the second use
case described in \Cref{sec:usecases} (\emph{Visualizing Election
  Properties}), and the final group focuses on the third one
(\emph{Comparing Algorithm Performance}).

As each of our experiments regards a different collection of voting
rules and/or algorithms, each of them includes a short description of
the necessary preliminaries, where the preliminary part for the first
experiment is also partially applicable to the other experiments.

\subsection{Maps of Winner Scores: Borda and Copeland}\label{sec:map:borda-copeland}

In this section we visualize the scores of the winning candidates
under the Borda and Copeland single-winner voting rules for the
$100 \times 100$ dataset.  Overall, we find that the positions on the
map give relatively good information about the winner score that one
might expect (however, as typical in experimental evaluation, nothing
is perfect).  Knowing the winner scores is useful as it confirms many
natural intuitions (such as that the closer we are to the identity
election, the higher scores we should expect) and, hence, provides
additional arguments for the credibility of our map approach.  We
stress that the conclusions we get here about the scores are not
necessarily universal: For other voting rules the results and
conclusions might be quite different.  Nonetheless, the presented
experiments---especially for the Borda winners---show some of the most
successful applications of the map (the running time experiment for
Harmonic Borda in the next section is similar in this respect).

\subsubsection{Borda and Copeland Rules}\label{sec:borda-copeland}

A single-winner voting rule is a function $f$ that given an election
$E = (C,V)$ outputs a set $W \subseteq C$ of the winners of
$E$. Typically, we expect $W$ to be a singleton, but ties may happen
and then one needs some tie-breaking rule. Fortunately, in our
analysis such tie-breaking will not be necessary. Below we describe
Borda and Copeland voting rules, on which we focus in this experiment.
Both rules are similar in the sense that we compute a score for each
candidate and output those with the highest one:
\begin{description}
\item[Borda.] Each voter assigns~$m-1$ points to the candidate that he
  or she ranks on the top position, $m-2$ points to the second one,
  $m-3$ to the third one, and so on, until the bottom-ranked candidate
  who receives $0$ points from the voter.

\item[Copeland.] We examine all pairs of candidates. In each pair, the
  candidate who is preferred by more than half of the voters gets a
  point. If neither of the candidates is preferred to the other one by
  more than half of the voters, then both of them receive half a point.

\end{description}
Candidate $c$ is a Condorcet winner if for every other candidate $d$,
more than half of the voters rank $c$ ahead of $d$.  Note that if a
Condorcet winner exists, then he or she will always be selected by the
Copeland rule (such rules are called Condorcet extensions).  The
scores of the winning candidates can be computed in polynomial time
for both our rules.

\subsubsection{Score Maps}\label{sec:score-map}

For each of the elections in our dataset, we computed the score of the
winning candidate(s) for both Borda and Copeland.  We present the
results in \Cref{fig:score_single_winner}, where the color of each
point corresponds to the score obtained by the winning candidate(s) in
a given election.
Note that for the Borda map, the paths, which consist of frequency
matrices, are colored as well.  This is so, because for Borda
frequency matrices (together with the number of voters) contain enough
information to compute the candidate scores. For Copeland, we left the
paths' uncolored because the frequency matrices are insufficient to
determine the scores.

\begin{figure}[]
    \centering
    
    \begin{subfigure}[b]{0.49\textwidth}
        \centering
        \includegraphics[width=6.5cm, trim={0.2cm 0.2cm 0.2cm 0.2cm}, clip]
        {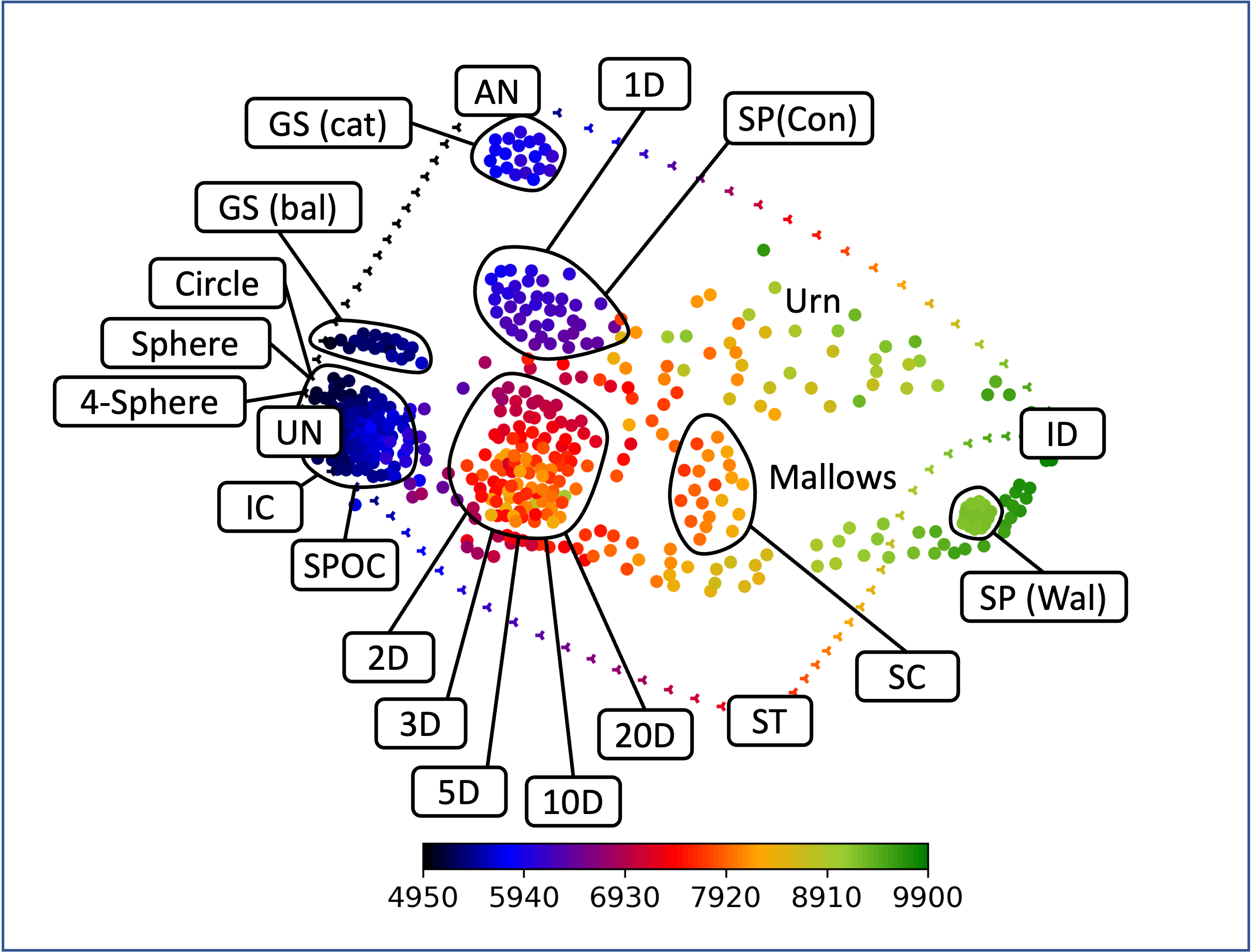}
        \caption{Highest Borda score}
    \end{subfigure}
    \begin{subfigure}[b]{0.49\textwidth}
        \centering
        \includegraphics[width=6.5cm, trim={0.2cm 0.2cm 0.2cm 0.2cm}, clip]
        {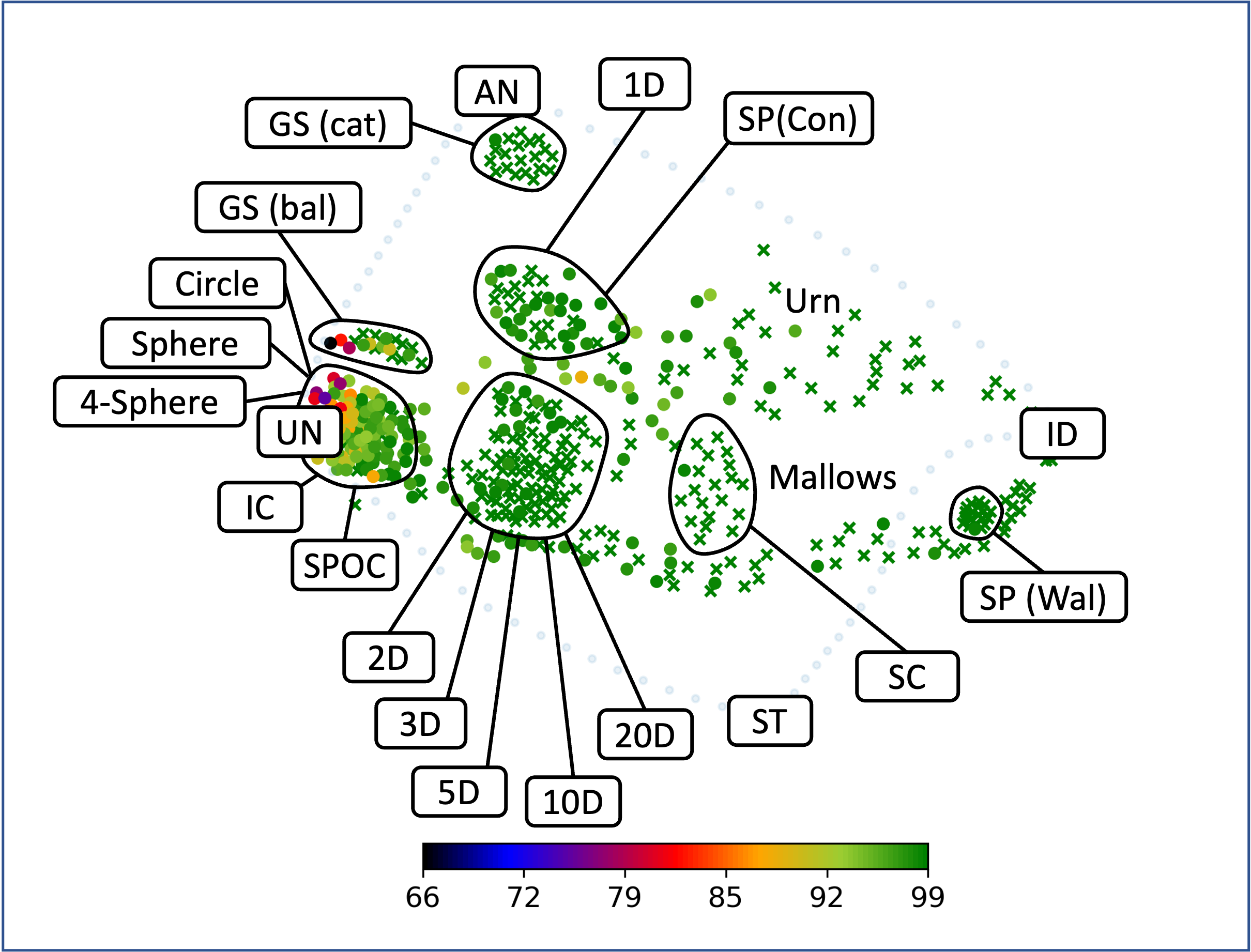}
        \caption{Highest Copeland score}
    \end{subfigure}
    
    \caption{Maps of $100 \times 100$ elections colored according to the score obtained by the winner.}
    \label{fig:score_single_winner}
\end{figure}

As for the Borda coloring, we observe that elections on the line
between UN and AN have the same color. As we move from this line
toward ID, we see a smooth shading.  Consequently, Borda scores of the
winners nicely correlate with the positions of our elections on the
map. As we move closer to ID, the score of the Borda winner is
increasing, and as we move toward UN/AN path, the score is decreasing.
Indeed, the AN-UN path is interesting, as Borda winners have nearly
the lowest possible scores for elections located there.  This means
that elections placed in this part of the map are very close to a tie
according to the Borda rule.

Overall, the smooth shading and interpretability of the map for Borda
is very appealing, but hardly surprising, as the Borda rule---and the
score system it uses---seems particularly well aligned with the nature
of the positionwise distance.

For the Copeland map, with crosses we mark elections that have a
Condorcet winner. In our map~$55\%$ (264 out of 480) of the generated
elections have a Condorcet winner.\footnote{Elections that are
  single-peaked, single-crossing, or group-separable, always have a
  Condorcet winner for an odd number of voters, and always have a weak
  Condorcet winner (or winners) for an even number of voters. A
  candidate $c$ is a weak Condorcet winner if for every candidate $d$,
  at least half of the voters prefer $c$ to $d$.} In particular,
almost all the single-crossing, Walsh, caterpillar group-separable,
3-Cube, 5-Cube, 10-Cube, 20-Cube, and around half of 1D-Interval,
2D-Square, Conitzer, and balanced group-separable elections have a
Condorcet winner. Even one IC election has a Condorcet winner. For the
urn and Mallows elections, having a Condorcet winner is strongly
related to their parameters. However, there are also elections without
Condorcet winners that are located near those that have them. This is
quite natural as relations between pairs of candidates are not
reflected directly in frequency matrices. Yet, it is reassuring that
for such elections Copeland scores of the winners still tend to be
high (except for a few urn elections), so---in a certain formal
sense---they are close to having Condorcet winners.

It is striking that for most elections the Copeland score of the
winner is very high. Indeed, it is greater or equal to 90 in 449
elections out of the 480 present in the $100 \times 100$ dataset.  The lowest
Copeland score is witnessed by a balanced group-separable election. In
general, the vast majority of lowest values are obtained by Circle,
Sphere and~$4$-Sphere, as well as some SPOC and balanced
group-separable elections. Note that all values are much larger than
the lowest possible score, which is~$49.5$. 
Keeping all that in mind, the position on the map is still a good
indication for the score of the Copeland winner: The score is very
high, unless the election is very close to the UN compass matrix
(however, a few of the urn elections have lower Copeland scores while
being located at some distance from UN). Further, the elections in the
vicinity of ID tend to have Condorcet winners (which is hardly
surprising) and even those as far as 3/4 of the UN-ID distance
from ID are likely to have Condorcet winners (which is far more
interesting).

\subsection{Maps of ILP Running Times: Dodgson and Harmonic-Borda}\label{sec:map:ilp-times}

Next, we consider two rules that are $\np$-hard to compute, namely the
single-winner Dodgson rule and the multiwinner Harmonic Borda rule
that we mentioned in our motivating example in \Cref{sec:mot-example}.
For both of these rules there are natural ILP formulations and we will
evaluate how much time an ILP solver needs to compute Dodgson winners
and Harmonic Borda winning committees on our elections.  As throughout
this section and most of the paper, we focus on the 100x100 dataset.

All the times that we report were obtained using the CPLEX ILP solver
on a single thread (Intel(R) Xeon(R) Platinum 8280 CPU @ 2.70GH) of a
448 thread machine with 6TB of RAM.

\subsubsection{Dodgson and Harmonic Borda}\label{sec:dec-dod-hb}
The score of a candidate under the Dodgson rule is the smallest number
of swaps of adjacent candidates that need to be performed in the votes
to make him or her the Condorcet winner. The candidate(s) for whom
this value is lowest win(s).
It is well-known that deciding if a given candidate is a Dodgson
winner is $\np$-hard~\citep{bar-tov-tri:j:who-won} and even
$\Theta_2^p$-complete~\citep{hem-hem-rot:j:dodgson}. However, one can
compute Dodgson winners using an ILP solver and an ILP
formulation~\citep{bar-tov-tri:j:who-won,car-cov-fel-hom-kak-kar-pro-ros:j:dodgson},
or use one of the available $\fpt$
algorithms~\citep{bet-guo-nie:j:dodgson-parametrized}, or---if perfect
precision is not required---use approximation
algorithms~\citep{car-cov-fel-hom-kak-kar-pro-ros:j:dodgson,car-kak-kar-pro:c:dodgson-acceptable}.

Harmonic Borda (HB) is a multiwinner voting rule, which means that given an
election $E$ and committee size $k$ it outputs size-$k$ candidate
subsets as the (tied) winning committees. The rule proceeds as
follows.  Let $E = (C,V)$ be an election and let $k$ be the committee
size.  For a committee $S$ of $k$ candidates and a vote $v$, we let
the HB score that $v$ assigns to $S$ be the result of the following
procedure:
\begin{enumerate}
\item[] For each candidate $c \in S$, we compute this candidate's
  position $\pos_v(c)$ in $v$ and sort these values from lowest to highest,
  obtaining vector $(p_1, \ldots, p_k)$. The HB score assigned to $S$
  by $v$ is
  $(p_1-1) + \frac{1}{2}(p_2-1) + \frac{1}{3}(p_3-1) + \cdots +
  \frac{1}{k}(p_k-1)$.
\end{enumerate}
The HB score of committee $S$ is the sum of the HB scores that each
voter assigns to it. HB outputs all the size-$k$ committees with the
lowest HB score (the HB score we define here is sometimes called a
\emph{dissatisfaction} score, and some authors consider a
\emph{satisfaction} version instead, where we replace candidates'
positions in the above computations with Borda scores and choose the
committee with the highest score; both definitions are equivalent). HB
is a variant of the Proportional Approval Voting rule (PAV) adapted to
the setting of ordinal
voting~\citep{Thie95a,kil:chapter:approval-multiwinner,lac-sko:b:multiwinner-approval}.

\begin{example}\label{ex:hb}
  Consider an election~$E = (C,V)$ with committee size $k=2$, where~$C = \{a,b,c,d\}$, $V = (v_1,v_2,v_3,v_4)$, and the votes are:
  \begin{align*}
    \small
    v_1\colon&  a \pref b \pref c \pref d \\
    v_2\colon&  a \pref b \pref c \pref d, \\
    v_3\colon&  a \pref b \pref c \pref d, \\
    v_4\colon&  d \pref c \pref b \pref a.
 \end{align*}
 Under HB rule,~$\{a, b\}$ is the unique winning committee with the score
 of~$5$: Each of the first three voters contributes score
 $\nicefrac{1}{2}$, and the last one contributes $2+\nicefrac{3}{2}$.
\end{example}

Unfortunately, identifying a winning committee under HB is
NP-hard~\citep{fal-sko-sli-tal:c:paths,sko-fal-lan:j:collective}.  We
can try to overcome this issue, for example, by formulating the
problem as an integer linear program (ILP) and solving it with an
off-the-shelf ILP solver, or by using $\fpt$
algorithms~\citep{bet-sli-uhl:j:mon-cc,bre-fal-kac-kno-nie:c:fpt-multiwinner},
or by designing efficient approximation algorithms (see, e.g., the
works of \citet{sko-fal-lan:j:collective},
\citet{fal-lac-pet-tal:c:csr-heuristics} and
\citet{mun-she-wan:c:approx-cc}).  For the case of HB, it is also
possible to use algorithms (both approximation or exact ones) designed
for the PAV
rule~\citep{fal-sko-sli-tal:c:top-k-counting,byr-sko-sor:c:pav-approx,dud-man-mar-sor:c:pav-tight-approx},
which can be adapted to the case of HB using a trick described by
\citet{bre-fal-kac-kno-nie:c:fpt-multiwinner}.
For a general overview of multiwinner rules we point to the work of
\citet{fal-sko-sli-tal:b:multiwinner-voting}.

\subsubsection{ILP Running Time Maps}\label{sec:running-time}
\begin{figure}[]
    \centering
    
    \begin{subfigure}[b]{0.99\textwidth}
        \centering
        \includegraphics[width=6.5cm, trim={0.2cm 0.2cm 0.2cm 0.2cm}, clip]{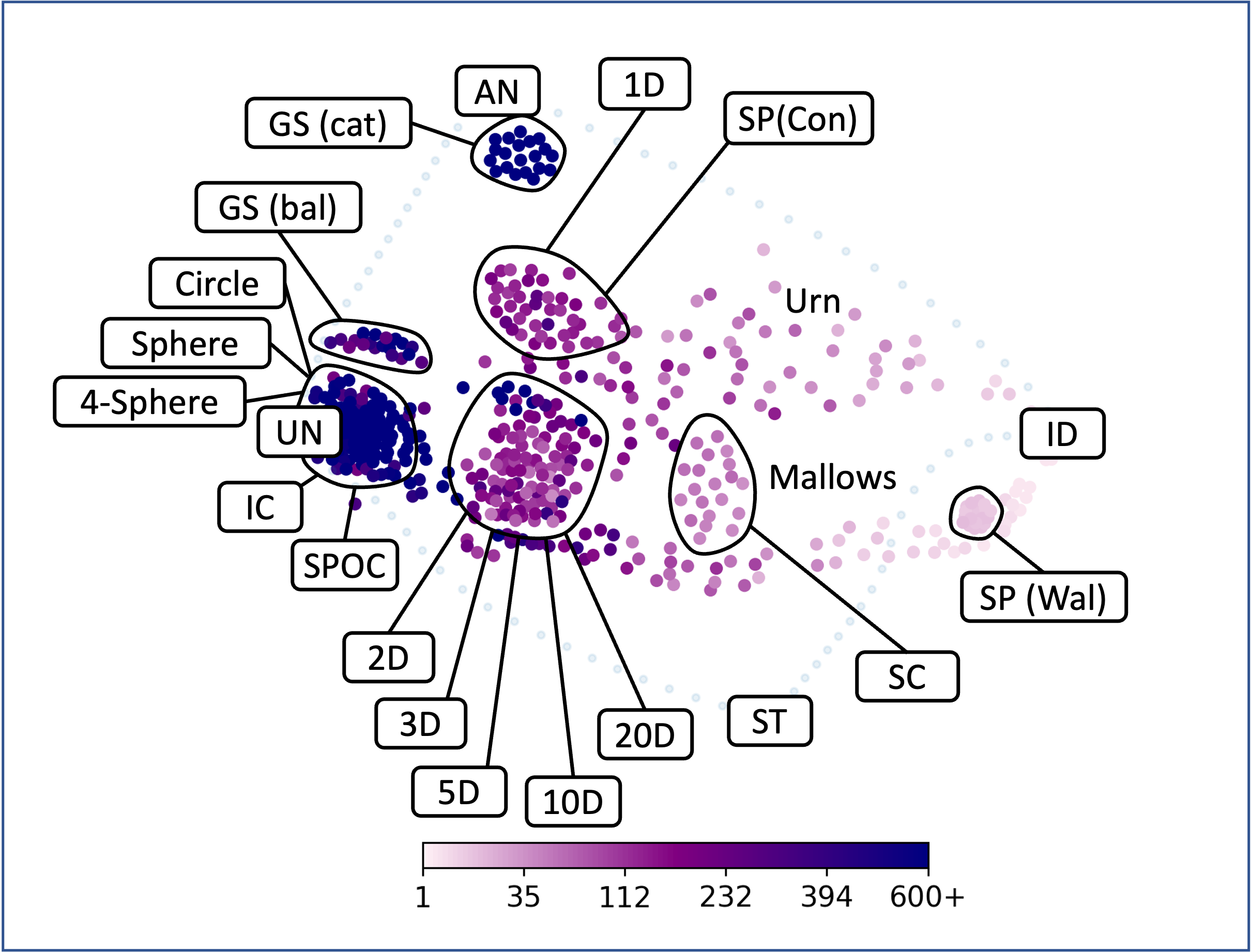}
        \includegraphics[width=6.5cm, trim={0.2cm 0.2cm 0.2cm 0.2cm}, clip]{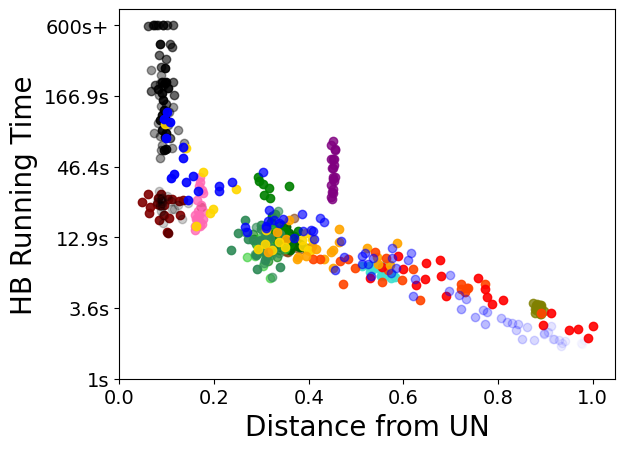}
        \caption{HB running time (the color scale on the left is quadratic; the $y$-axis on the right has logarithmic scale).}
    \end{subfigure}
    \vspace{4mm}

    \begin{subfigure}[b]{0.99\textwidth}
        \centering
        \includegraphics[width=6.5cm, trim={0.2cm 0.2cm 0.2cm 0.2cm}, clip]{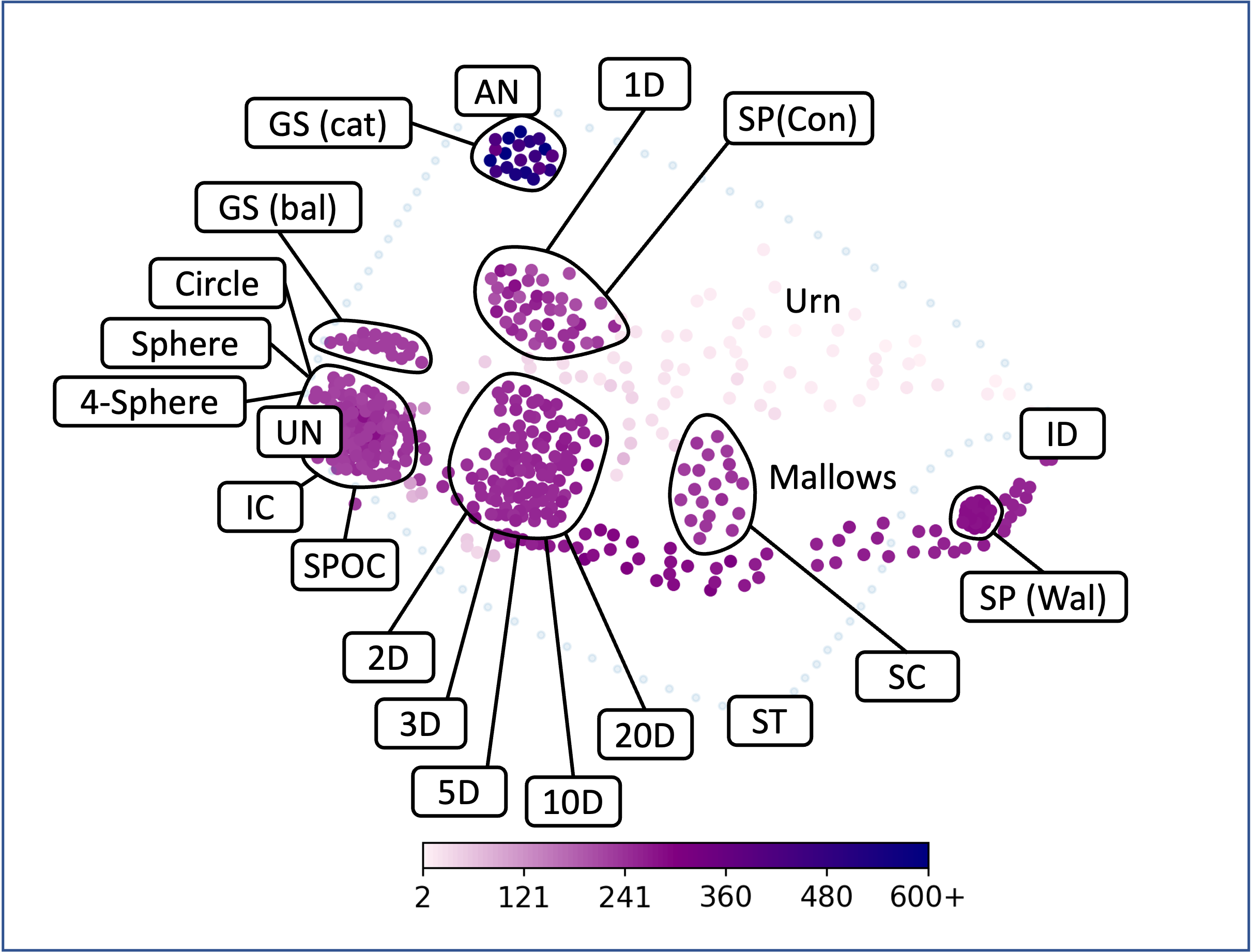}
        \includegraphics[width=6.5cm, trim={0.2cm 0.2cm 0.2cm 0.2cm}, clip]{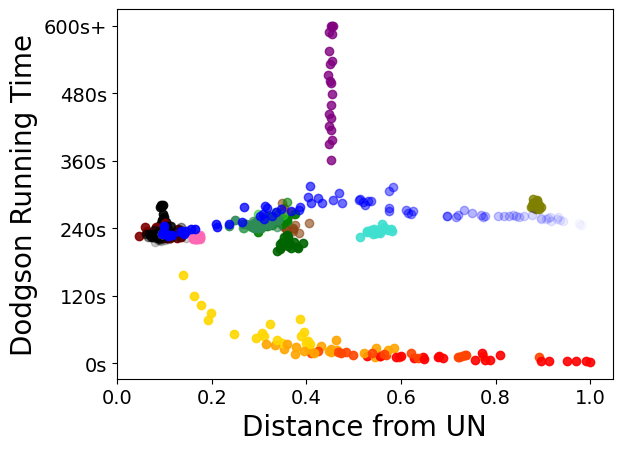}
        \caption{Dodgson running time (the color scale on the left is linear; the $y$-axis on the right has linear scale).}
    \end{subfigure}
    
    \caption{ILP running times (in seconds) for Dodgson and HB for the
      $100 \times 100$ dataset. On the left, we show maps where each
      election is colored according to the time required by the ILP
      solver. On the right, each point is an election (colored
      according to its statistical culture; the color scheme is as in
      \Cref{fig:maps100x100}) and its $x/y$ coordinates indicate its
      normalized positionwise distance from $\ID$ and the time taken
      by the ILP solver.}
    \label{fig:voting_rules_time}
\end{figure}

We now analyze the time needed to compute the outcomes of the Dodgson
and HB rules using ILP formulations and the CPLEX solver. We report
the achieved running times in Figure~\ref{fig:voting_rules_time}. On
the left we show maps, where the colors give the running times (the
darker the color, the longer was the computation time). On the right,
we plot the relation between the running times needed by particular
elections (on the $y$ axis) and their normalized positionwise distance
from UN (on the $x$ axis). The points in these plots are elections,
colored according to the statistical culture from which they were
generated, using the same color scheme as in
\Cref{fig:maps100x100}. Note that for HB in these plots the $y$-axis
has logarithmic scale, but for Dodgson it is linear.
We use the ``presentation limit'' of $600$ seconds.  All instances
that required more time than the limit have the same color in our
maps. Moreover, the coloring scale for the HB map is quadratic,
whereas for Dodgson it is linear.\footnote{We would prefer to have the
  same scale for both pictures, but then it would have been hard to
  see anything interesting either on one or the other.}
In~\Cref{table:time_hb_dodgson} we present average running times and
their standard deviations for each statistical culture and each rule
that we considered. 

We first consider the HB rule.  The most challenging instance took
five hours to solve (and, hence, exceeded the ``presentation limit''
and is reported as ``600+'' on the figure). The simplest instance
required less than two seconds. Ten worst cases were witnessed
by~$4$-Sphere elections, which suggests that it is particularly hard
to find optimal winning committees under the HB rule for elections
from this model.  Perhaps the most visible phenomenon on the map is
that the ILP solver tends to take most time on elections similar to
those generated by the impartial culture, and the farther elections
are away from $\UN$, the less time is needed.  Consequently, we see
that in this experiment the position on the map is highly meaningful.

\begin{table}
    \centering
    \footnotesize
    \begin{tabular}{c | c | c | c | c }
     \toprule
        Culture &  \multicolumn{2}{c|}{HB} & \multicolumn{2}{c}{Dodgson} \\
          & avg. time & std. dev.  & avg. time & std. dev. \\
    \midrule
    Impartial Culture  & 155.7s & 85.5 & 251.4s & 17.2 \\
    \midrule
    Conitzer SP  & 12.2s & 2.2 & 252.7s & 14.9 \\
    Walsh SP  & 3.6s & 0.2 & 281.1s & 5.1 \\
    SPOC  & 22.3s & 4.1 & 230.0s & 8.0 \\
    Single-Crossing  & 7.1s & 0.6 & 235.6s & 4.4 \\
    \midrule
    1D & 11.8s & 1.4 & 213.2s & 7.8 \\
    2D & 20.6s & 9.1 & 249.9s & 7.3 \\
    3D & 11.4s & 1.8 & 254.1s & 10.6 \\
    5-Cube  & 10.8s & 2.0 & 251.0s & 6.7 \\
    10-Cube & 12.0s & 3.4 & 249.3s & 5.2 \\
    20-Cube & 12.1s & 3.0 & 249.7s & 6.0 \\
    \midrule
    Circle & 22.9s & 3.7 & 222.3s & 2.9 \\
    Sphere & 126.8s & 64.9 & 230.5s & 5.0 \\
    4-Sphere & 1614.1s & 3912.1 & 232.3s & 6.0 \\
    \midrule
    Balanced GS & 22.2s & 6.9 & 223.0s & 2.1 \\
    Caterpillar GS & 44.7s & 14.3 & 513.0s & 106.1 \\
    \midrule
    Urn & 11.1s & 13.1 & 30.0s & 35.1 \\
    Mallows & 17.3s & 24.1 & 264.2s & 19.7 \\
    	\bottomrule

        \end{tabular}
    \caption{\label{table:time_hb_dodgson} Analysis of ILPs running time for HB and Dodgson rules.}
    
\end{table}

For the Dodgson rule the situation is quite different. Most instances
need the same amount of time (i.e., around four minutes on
average). Two exceptions are urn elections (which needed only half a
minute per election on average) and caterpillar group-separable
elections (which needed eight and a half minutes per election on
average). 

As we can see on the plots on the right side of
Figure~\ref{fig:voting_rules_time}, for HB all of the hardest
instances were very close to $\UN$, while for the Dodgson rule this is
not the case. Indeed, the Dodgson required most time on caterpillar
group-separable elections. This is very interesting as these are
highly structured elections and one would expect that finding winners
for them would be easy. This is a clear indication that there is value
in considering models producing elections with special structure that
at first may seem quite arbitrary (indeed, it is unlikely that
caterpillar group-separable preferences would ever arise in practice)
and looking at maps containing elections that are as diverse as
possible.  On the downside, we have no guarantee that Dodgson (or any
other of our rules) would not take even more time on yet different
kinds of elections, not present in our dataset.
Altogether, the Dodgson ILP running-time experiment illustrates a
situation where elections' positions on the map do not necessarily
correspond to the analyzed feature, but the fact that maps encourage
the use of diverse datasets helps in spotting interesting phenomena.

\subsection{Maps of Approximation Ratios: Chamberlin--Courant}\label{sec:map:approx}

In our final example of an experiment in this section, we compare two
approximation algorithms for the Chamberlin--Courant multiwinner rule.
Here we will not see any particularly strong connection between
experiment results and the positions of the elections on the map, but
still the maps will be helpful in getting intuitions and observing the
results.

\subsubsection{Chamberlin--Courant}
Chamberlin--Courant (CC) is a multiwinner voting rule that proceeds
similarly to HB, but with a different notion of a score. Specifically,
given an election $E = (C,V)$ and a committee $S$ of size $k$, the CC
score of $S$ is defined as:
\begin{align}
    \sum_{v \in V} \min_{c \in S}\big(\pos_v(c)-1\big).
\end{align}
The intuition here is that each voter picks the member of the
committee that he or she ranks highest (i.e., who has the lowest
position) and views this candidate as his or her ``representative.''
The voter adds this candidate's position (minus one) to the committee
score. The rule selects the committee with the lowest score, i.e.,
each voter prefers to be represented by the candidate that he or she
ranks as highly as possible. As in the case of HB, this is the
dissatisfaction variant of the score and there is also a satisfaction
variant, which leads to the same rule.

\begin{example}
  Consider the same election as in \Cref{ex:hb}.  According to the CC
  rule,~$\{a,d\}$ is the winning committee with a score of~$0$
  (which is the smallest possible in any election).
\end{example}

Computing a CC winning committee is
NP-hard~\citep{pro-ros-zoh:j:proportional-representation,
  bou-lu:c:chamberlin-courant, bet-sli-uhl:j:mon-cc}, but as for HB,
there are ILP formulations, $\fpt$
algorithms~\citep{bet-sli-uhl:j:mon-cc,bre-fal-kac-kno-nie:c:fpt-multiwinner},
and efficient approximation algorithms (see, e.g., the works of
\citet{sko-fal-lan:j:collective},
\citet{fal-lac-pet-tal:c:csr-heuristics} and
\citet{mun-she-wan:c:approx-cc}). In particular, we are interested in
the following two algorithms:
\begin{description}
\item[Sequential~CC.] Sequential~CC starts with an empty
  committee 
  and works in~$k$ iterations, where in each of them it adds to the
  committee a single candidate, so that the resulting committee has as
  low total (dissatisfaction) score as possible.
  \citet{mun-she-wan:c:approx-cc} presents a detailed analysis of this
  algorithm, and \citet{nem-wol-fis:j:submodular} give a general
  discussion of such algorithms (in particular, if we are interested
  in voter satisfaction rather than dissatisfaction, then the former
  paper shows that Sequential~CC is a polynomial-time approximation
  scheme, PTAS, and the latter shows a general approximation ratio of
  $1-\nicefrac{1}{e}$ of such greedy algorithms for a whole class of
  related problems). The first explicit mention of this algorithm in
  the context of CC was in the work of
  \citet{bou-lu:c:chamberlin-courant}.

\item[Removal~CC.] Removal~CC proceeds similarly to Sequential~CC, but
  it starts with a committee containing all the candidates and works
  in~$m-k$ iterations, in each of them removing a single candidate, so that
  the resulting committee has as low total dissatisfaction as possible.
  This algorithm belongs to the folklore.
\end{description}

Sequential~CC and Removal~CC, as well as their variants for other
multiwinner rules, are well-known in the literature and are often used
to compute winning
committees~\citep{bou-lu:c:chamberlin-courant,sko-fal-lan:j:collective,sko-lac-bri-pet-elk:c:proportional-rankings,fal-lac-pet-tal:c:csr-heuristics}.

\subsubsection{Approximation Quality Comparison Maps}\label{sec:approx-map}

\begin{table}
  \centering
  \begin{tabular}{c|cc}
    \toprule
    Model              & \multicolumn{2}{c}{average approximation ratio} \\
            &   Sequential~CC & Removal~CC \\
    \midrule
    Impartial Culture           &  1.142 & 1.17 \\
    \midrule
    Single-Peaked (Conitzer)  & 1.108 & 1.061 \\
    Single-Peaked (Walsh)      &  1.084 & 1.028 \\
    SPOC                       & 1.115 & 1.060 \\
    Single-Crossing           & 1.120 & 1.000 \\
    \midrule
    1D         & 1.121 & 1.056 \\
    2D         & 1.188 & 1.131 \\
    3D         & 1.228 & 1.126 \\
    5-Cube         & 1.150 & 1.135  \\
    10-Cube         & 1.048 & 1.101 \\
    20-Cube         & 1.052 & 1.084 \\
    \midrule
    Circle         & 1.108 & 1.081  \\
    Sphere         &  1.265 & 1.148 \\
    4-Sphere   & 1.263 & 1.174 \\
    \midrule
    Group-Separable (Balanced)       &  1.045 & 1.030 \\
    Group-Separable (Caterpillar)    & -- & -- \\
    \midrule
    Urn        &  2.375 & 1.192 \\
    Mallows    & 1.037 & 1.053  \\
    \bottomrule
  \end{tabular}
  \caption{\label{tab:mw-approximation} Average approximation ratios
    achieved by two algorithms for the CC rule on the
    elections from the $100 \times 100$ dataset.}
\end{table}

We evaluate our two approximation algorithms by computing the
approximation ratios that they achieve on each of the elections from
the $100 \times 100$ dataset.  The approximation ratio (that a given
algorithm achieves on a given election for a given committee size) is
the ratio between the (dissatisfaction) score of the committee provided
by the algorithm and the lowest possible one.  The smaller (and closer
to $1$) an approximation ratio is, the better.  One complication is
that for some elections the lowest possible dissatisfaction score is
zero. We omit such elections in this experiment as they are easy to
identify and it is unreasonable to use an approximation algorithm in
their case. Indeed, to form optimal committees for such elections it
suffices to take each voter's top-ranked candidate and, possibly,
arbitrary other candidates to fill-in the committee size.

\begin{figure}[t]
    \centering

    \begin{subfigure}[t]{0.45\textwidth}
      \includegraphics[width=8cm, trim={0.2cm 0.2cm 0.2cm 0.2cm}, clip]{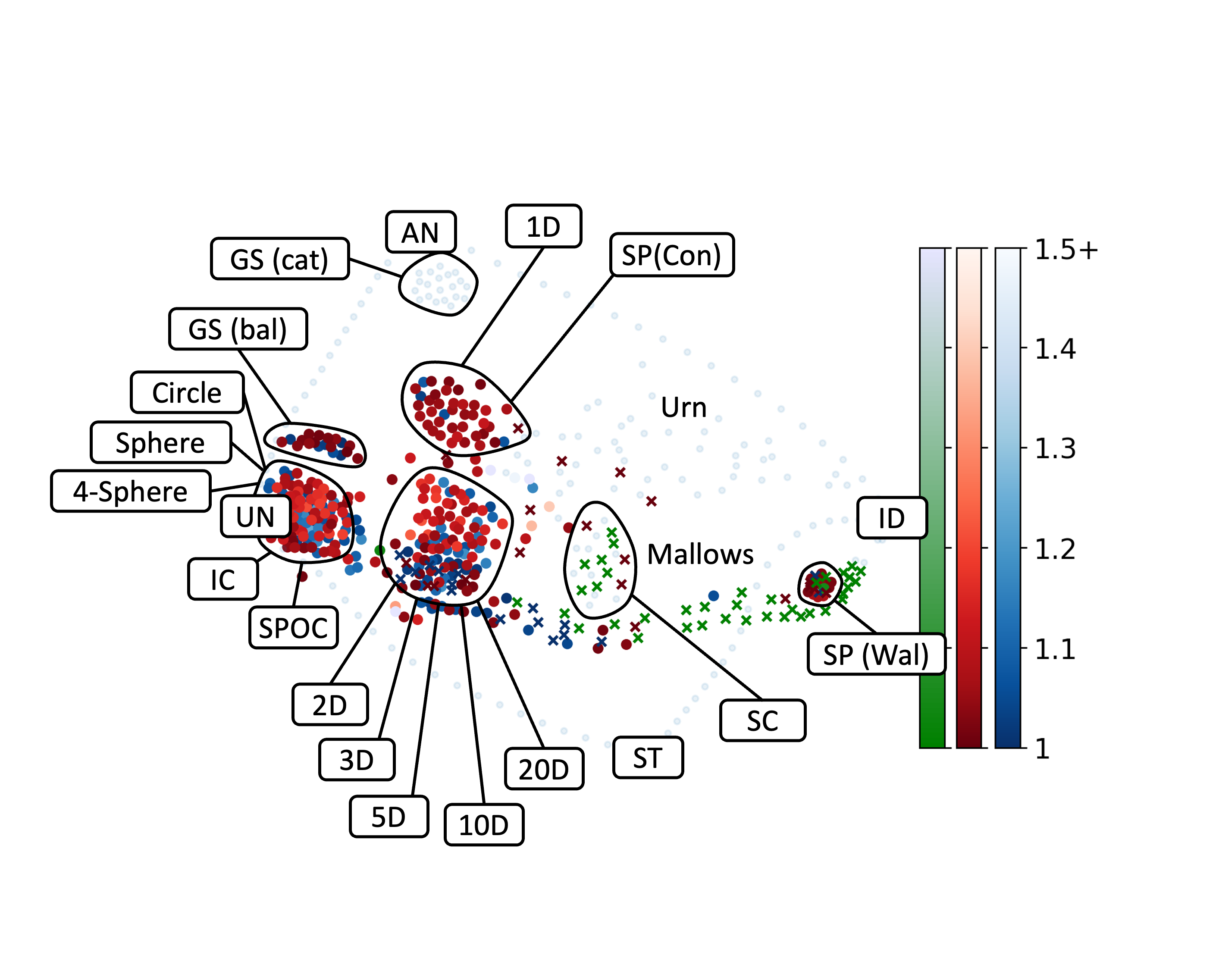}%
      \caption{\label{fig:approx-ratio} Approximation ratios of the
        better among the Sequential CC (blue) and Removal~CC (red)
        algorithms. Green points indicate elections where both
        algorithms achieved the same approximation ratios. Crosses
        indicate elections for which the respective algorithm found an
        optimal solution.}
    \end{subfigure}
    \quad
    \begin{subfigure}[t]{0.45\textwidth}
      \includegraphics[width=8cm, trim={0.2cm 0.2cm 0.2cm 0.2cm}, clip] {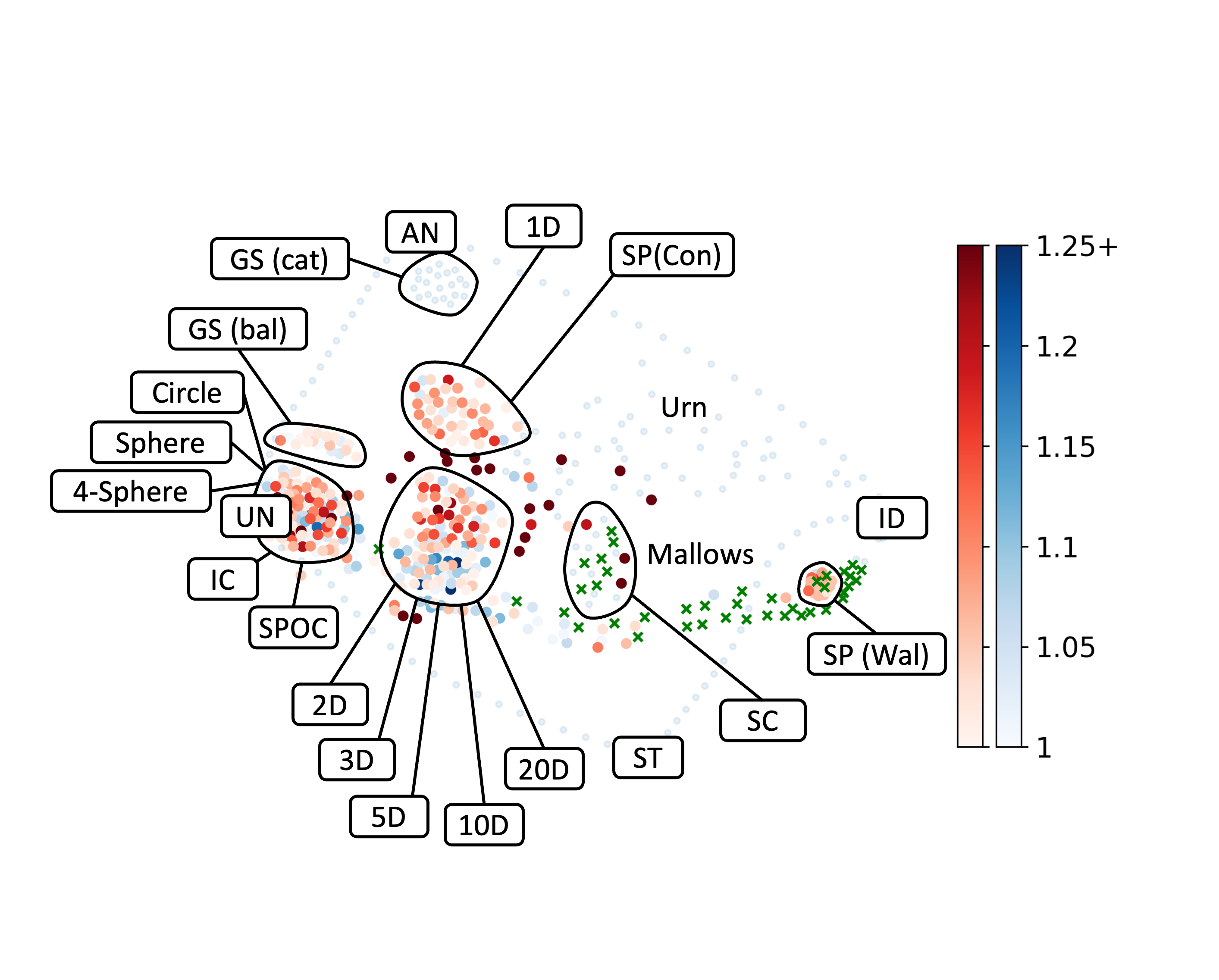}
      \caption{\label{fig:approx-discr} Indication how much better was
        the algorithm that achieved lower (better) approximation ratio
        than the other.  For each election, the intensity of the color
        corresponds to the value $\max(A,B)/\min(A,B)$, where $A$ is
        the approximation ratio achieved by Sequential CC and $B$ is
        the approximation ratio achieved by Removal CC. Colors specify
        which algorithm performed better (blue means Sequential CC,
        red means Removal CC, and green means that both algorithms
        achieved the same result.}
    \end{subfigure}
    
    \caption{Analysis of the approximation algorithms for CC.}

    \label{fig:approx_2in1}
\end{figure}

We present the approximation ratios achieved by our algorithms in an
aggregate form in \Cref{tab:mw-approximation} Generally, both
algorithms perform very well, except for urn elections where
Sequential~CC is notably worse than Removal~CC.  Turning to a
comparison between them, the results in \Cref{tab:mw-approximation}
either suggest that both Sequential~CC and Removal~CC are, more or
less, equally good, or point marginally in the direction of Removal~CC
(e.g., if one were to naively count the number of entries in the table
where its approximation ratio is smaller than that of
Sequential~CC). It turns out that the truth is a bit more
complicated. Indeed, while on many elections both algorithms perform
comparably well, on a number of others their outcomes differ
substantially.  We illustrate this effect in \Cref{fig:approx-ratio},
where the blue points refer to elections where Sequential~CC is better
at approximating CC, the red points mark the elections where
Removal~CC is better, and the green points depict elections where
there is a draw (crosses indicate elections where the algorithms found
optimal solutions). Removal~CC often achieves better approximation
ratios, even though theoretical arguments suggest that Sequential~CC
is a better algorithm (i.e., there are stronger guarantees for its
approximation ratio in the satisfaction-based model). Further, we see
that in the $\UN$ area, both algorithms are sometimes superior, and as
far as hypercube elections are concerned, the higher dimension, the
more successful Sequential~CC becomes; indeed, this is also visible in
\Cref{tab:mw-approximation}. Finally, it is quite striking how often
both algorithms give the same result. This is not as surprising close
to $\ID$, where all the votes are similar, but the phenomenon appears
also at some nontrivial distance from $\ID$.

Next, in \Cref{fig:approx-discr}, we present a different way for
comparing the two algorithms. For each election we compute the value
$\max(A,B)/\min(A,B)$, where $A$ is the approximation ratio achieved
by Sequential~CC and $B$ is the approximation ratio achieved by
Removal~CC, and color the elections according to this value (and the
algorithm that performed better). This value indicates how much better
the performance of the better approximation algorithm was. We see that
while for many elections both algorithms achieve similar approximation
ratios, there also are cases where either of the algorithms performs
notably better than the other one (this is particularly visible for
some urn elections and for Euclidean elections with higher
dimensions).

Overall, we believe that the maps from \Cref{fig:approx_2in1} add
interesting insights that are not possible to observe in
\Cref{tab:mw-approximation} (but, of course, one could spot them by
looking at raw results or by a more careful statistical analysis). For
example, we see that even though Removal~CC is slightly better on
elections from many statistical culture, Sequential~CC still obtains
better results in a nonnegligible number of cases.

\subsection{Conclusions From Using the Map in Experiments}

Let us take a moment to reflect on how the map approach was helpful in
our experiments. While, in principle, each of the experiments could
have been performed without the map, and could have led to the same
high-level conclusions, we believe that using the map it was far
easier to reach them. Further, the sheer fact of being able to look at
the results for all elections, in a non-aggregate way, motivated
particular questions and further lines of inquiry. For example, if we
were analyzing the running time of the ILP formulation for Dodgson in
\Cref{sec:running-time} without the map, it would be quite likely that
we would have simply not thought of looking at caterpillar
group-separable elections. This way, we would have lost the
observation that they are much more challenging for the ILP solver
than other, more typical models. The map aims to bring a varied set of
elections together and, so, helps in finding special cases where
algorithms behave in unusual ways.  In particular, caterpillar
group-separable elections are in the part of the map where no other
elections reside, so if we looked at the map without them, we would be
prompted to look for a way of generating data in this area.
In \Cref{sec:approx-map} looking at the map helped in understanding
and justifying that Sequential~CC and Removal~CC complement each
other.
Finally, the score maps from \Cref{sec:score-map} gave quite some
insight regarding the nature of the elections. For example, it is
striking how many elections have very high Copeland winner scores and
how many elections with Condorcet winners there are in our dataset,
even quite far away from $\ID$.

Overall, we found that if two elections were close on the map, then in
all of our experiments---with the possible exception of the last one,
in \Cref{sec:approx-map}---they also had similar properties. Sometimes
this was because positions on the maps had strong connection to the
analyzed features, and sometimes this seemed more connected to the
fact that the two elections were from the same statistical culture.
We should also stress that this is might be connected to us
considering fairly high-level properties that are robust to small
changes in the elections (for example, swapping two candidates can
only affect their Borda scores in a fairly insignificant way, and may
not change the Borda score of the election's winner). If we considered
more fragile properties, the outcome might be different. Nonetheless,
even in such cases the maps could help in identifying general trends.

\section{Map of Real-Life Elections} \label{sec:real-life} \label{preflib_info}

Our final goal is to demonstrate the usefulness of the map for
analyzing real-life elections. To this end, we consider several
datasets coming from Preflib \citep{mat-wal:c:preflib} and collected
by \citet{boe-sch:c:real-life-elections}.
There are two main problems with real-life election data. One is that
these elections typically involve only a few candidates. The second
one is that in many cases numerous votes are incomplete. Thus we will
first describe our general strategy of preprocessing the available
data (e.g., how we extend incomplete preference orders) and, then, we
will describe our datasets together with the exact preprocessing that
we applied to each of them. Indeed, some datasets needed some special
treatment. Next, we (i)~present where these real-life elections land
on our maps of elections, (ii)~color these maps according to the Borda
and Copeland scores and, finally, (iii)~we find ($\normphi$)
parameters of the Mallows model that best approximate our real-life
elections.
Whenever we speak of real-life elections in this section, we mean
elections from our datasets. Other ``real-life'' elections, from other
sources, may behave differently (even if we hope that our datasets are
at least somewhat representative, we cannot provide any guarantees).

\subsection{Forming the Datasets}
In \Cref{tab:preflib_selected}, we present a detailed description of
the real-life datasets that we decided to use. All of them are
available in Preflib~\citep{mat-wal:c:preflib} (the TDF, GDI, and
Speed Skating data was collected by
\citet{boe-sch:c:real-life-elections}).  We chose eleven real-life
datasets of different types, and we divided them into three
categories.  The first group contains \textit{political} elections:
city council elections in Glasgow and
Aspen~\citep{openstv}, elections from Dublin North and
Meath constituencies (Irish), and elections held by non-profit
organizations, trade unions, and professional organizations (ERS). The
second group consists of \textit{sport} elections: Tour de France
(TDF)~\citep{boe-sch:c:real-life-elections}, Giro d'Italia
(GDI)~\citep{boe-sch:c:real-life-elections}, speed
skating~\citep{boe-bre-fal-nie-szu:c:compass}, and figure skating. The
last group consists of \textit{surveys}: preferences over
Sushi~\citep{kam:c:sushi}, T-Shirt designs, and costs of living and
population in different cities~\citep{car-cha-kri-vou:j:optimizing}.
For TDF and GDI, each race is a vote, and each season is an
election. For speed skating, each lap is a vote, and each competition
is an election. For figure skating, each judge's opinion is a vote,
and each competition is an election.

We are interested in elections that have at least~$10$
candidates.\footnote{The more candidates, the more interesting is the
  data and the resulting maps. On the other hand, if we had required too many candidates, we
  would end up having very few election instances. In this sense,~$10$
  candidates is a tradeoff between the number of candidates and the
  number of instances that include as many candidates.}  As our map
framework requires complete votes without ties, we are interested in
instances where votes are as complete as possible and contain only a
few ties. In some datasets, only parts of the data meets our criteria
(i.e., complete votes without ties over at least~$10$ candidates). For
example, in the dataset containing Irish elections, we have three
different elections, but one of them (an election from the Dublin West
constituency) contains only nine candidates. We remove all 
elections not meeting our criteria from consideration.

We further reduce the number of elections by considering only selected
ones.  We based our decisions on the number of voters and
candidates. That is, for ERS, we only take elections with at
least~$500$ voters, for Speed Skating with at least~$80$ voters, for
TDF with at least~$20$ voters, and for Figure Skating with at
least~$9$ (these numbers are in accordance with the sizes of the
elections in the particular datasets). In addition to that, for TDF,
we only selected elections with no more than~$75$ candidates.
Naturally, these choices are somewhat arbitrary, but some such
decisions are necessary to get a manageable collection of data.
In \Cref{tab:preflib_selected}, we include in the column
\textit{\#~Selected Elections} the number of elections we selected
from each dataset.

\begin{table*}[t]
\centering
\resizebox{\textwidth}{!}{\begin{tabular}{c  c  c  c  c  c}
			\toprule
			Category & Name & \# Selected Elections & Avg.~$m$ & Avg.~$n$ & Description\\	
			\midrule
			Political & Irish &~$2$ &~$13$ &~$\sim$~$54011$ & Elections from Dublin North and Meath\\
			Political & Glasgow	&~$13$ &~$\sim$~$11$ &~$\sim$~$8758$ & City council elections \\
            Political & Aspen &~$1$ &~$11$ &~$2459$	& City council elections\\
			Political & ERS	&~$13$ &~$\sim$~$12$ &~$\sim$~$988$ & Various elections held by non-profit organizations,\\ 
			        & & & & & trade unions, and professional organizations  \\
            \midrule
			Sport & Figure Skating &~$40$ &~$\sim$~$23$ &~$9$ & Figure skating  \\
			Sport & Speed Skating &~$13$ &~$\sim$~$14$ &~$196$ & Speed skating  \\
            Sport & TDF &~$12$ &~$\sim$~$55$ &~$\sim$~$22$ & Tour de France\\	
            Sport & GDI &~$23$ &~$\sim$~$152$ &~$20$  & Giro d’Italia	\\	
            \midrule
            Survey & T-Shirt &~$1$ &~$11$ &~$30$ & Preferences over T-Shirt logo \\	
            Survey & Sushi &~$1$ &~$10$ &~$5000$ & Preferences over Sushi\\	
            Survey & Cities &~$2$ &~$42$ &~$392$ & Preferences over cities	\\			
			\bottomrule
	\end{tabular}}
      \caption{\label{tab:preflib_selected} Each row contains a
        description of one of the real-life datasets we consider. In
        the column \textit{\# Selected Elections}, we denote the
        number of elections from the respective dataset that we
        consider.}
\end{table*}

\subsubsection{Preprocessing of Datasets}
There are two types of problems that we encounter in the selected
datasets. First, some datasets include ties (i.e., pairs or
larger sets of candidates that are reported as equally good by
particular voters). We break any ties that appear by ordering the
involved candidates uniformly at random.  Second, many of the votes
are incomplete in the sense that the reported preference orders do not
rank all the candidates but only some top ones (note that while an
incomplete vote can be seen as if the unranked candidates were tied on
the bottom of the vote, we address the incompleteness differently than
having a tie in a vote). Sometimes votes are both incomplete and
include ties.

For all the elections from our selected datasets that contain
incomplete votes, we need to fill-in all the missing data. For the
decision how to complete each vote, we use the other votes as
references, assuming that voters that rank the same candidates on top
also continue to rank candidates similarly toward the bottom.
For each incomplete vote~$v$, we proceed as follows. Let us assume
that vote~$v$ is over~$m$ candidates. Let~$V_P$ be the set of all
original votes of which~$v$ is a prefix. We uniformly at random select
one vote~$v_p$ from~$V_P$ and then at the end of vote~$v$ we add the
candidate which appears in position $m+1$ in vote $v_p$. We repeat the
procedure until vote~$v$ is complete. If the set~$V_P$ is empty, then
we choose~$c$ uniformly at random (from those candidates that are not
part of~$v$ yet).

Our approach to filling-in missing information is reminiscent of what
\citet{dou:thesis:partial-social-choice} calls \emph{imputation
  plurality}. We point to his work for detailed analysis of various
approaches to filling-in incomplete votes.

After applying these preprocessing steps, we arrive at a collection of
datasets containing elections with ten or more candidates and complete
votes without ties. As we focus on ten candidates, we need to select a
subset of ten candidates for each election. In a given election, we
compute each candidate's average position in the votes and select ten
candidates who, on average, are ranked highest (this is equivalent to
choosing ten candidates with the highest Borda scores). The rationale
is that these are the most relevant candidates in the elections. In
case there is a tie, we break it randomly.  We refer to the datasets
resulting from this preprocessing as \emph{intermediate} datasets.
Each intermediate dataset internally contains one or more elections
(e.g., the Sushi dataset contains only one election, whereas TDF contains
twelve).

\subsubsection{Sampling Elections from the Intermediate Datasets}
We treat each of our intermediate datasets as a separate election
model from which we sample~$15$ elections to create the final dataset
that we use. For each intermediate dataset, we sample elections as
follows. First, we randomly select one of the elections present
internally in it.
Notably, we may select the same election multiple
times. Second, we sample~$100$ votes from this election uniformly at
random with replacement (this implies that for elections with less
than~$100$ votes, we select some votes multiple times, and for
elections with more than~$100$ votes, we do not select some votes at
all). We do so to make full use of elections with far more than~$100$
votes. For instance, our Sushi intermediate dataset contains only one
election consisting of~$5000$ votes. Sampling an election from the
Sushi intermediate dataset thus corresponds to drawing~$100$ votes
uniformly at random from the set of~$5000$ votes. On the other hand,
for intermediate datasets containing a higher number of elections,
e.g., the Tour de France intermediate dataset, most of the sampled
elections come from different original elections.

After executing this procedure, we arrive at eleven sets, each
containing~$15$ elections consisting of~$100$ complete and strict
votes over~$10$ candidates, which we use for our experiments.

\begin{figure}[t]
    \centering
    \includegraphics[width=8.5cm, trim={0.2cm 0.2cm 0.2cm 0.2cm}, clip]{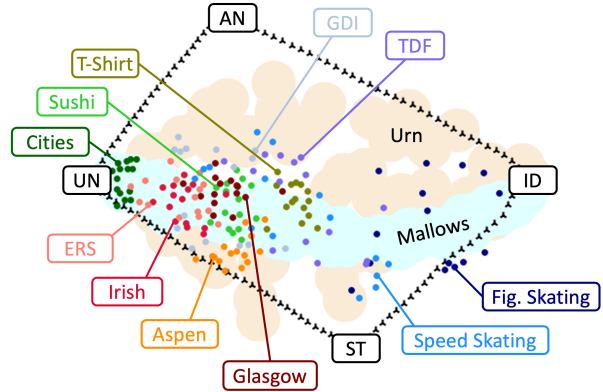}

    \caption{Map of real-life elections.}
    \label{fig:main_preflib_map}
\end{figure}

\begin{figure}[t]
    \centering

        \begin{subfigure}[b]{0.4\textwidth}
        \centering
        \includegraphics[width=5cm, trim={0.2cm 0.2cm 0.2cm 0.2cm}, clip]{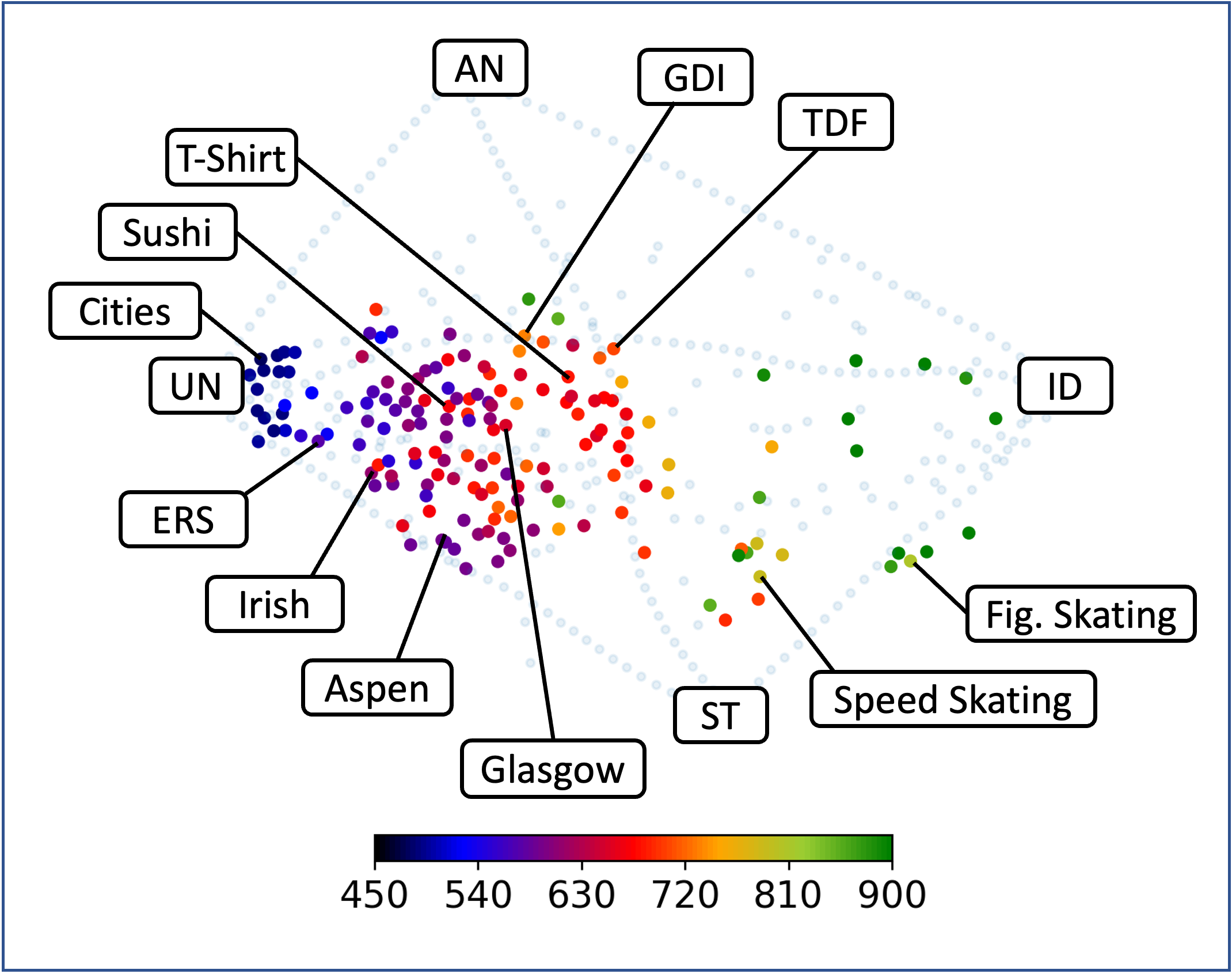}
        \caption{Borda (Real-Life)}
    \end{subfigure}
    \begin{subfigure}[b]{0.4\textwidth}
        \centering
        \includegraphics[width=5cm, trim={0.2cm 0.2cm 0.2cm 0.2cm}, clip]{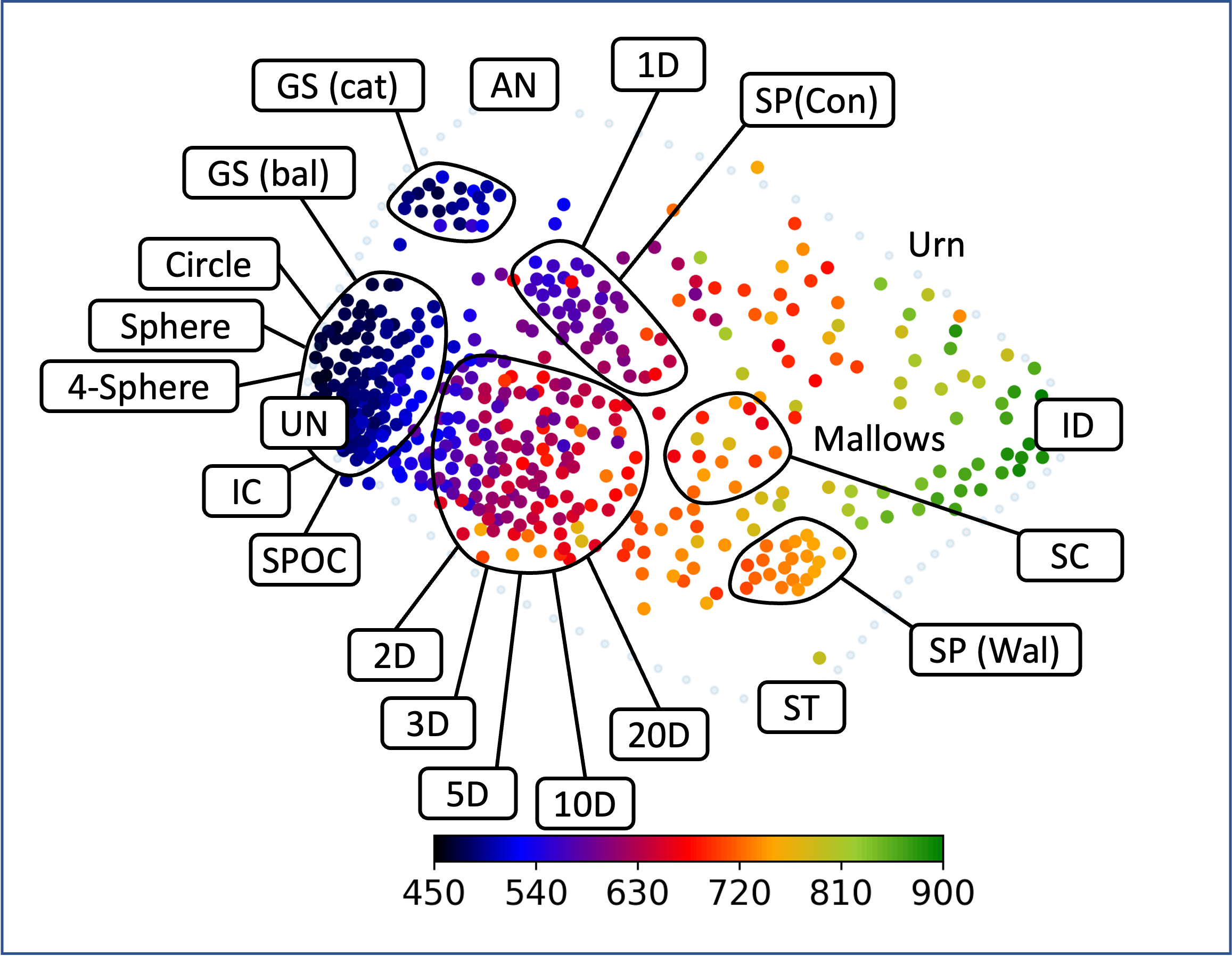}
        \caption{Borda ($10 \times 100$)}
    \end{subfigure}
    
    \vspace{1em}

    \begin{subfigure}[b]{0.4\textwidth}
        \centering
        \includegraphics[width=5cm, trim={0.2cm 0.2cm 0.2cm 0.2cm}, clip]{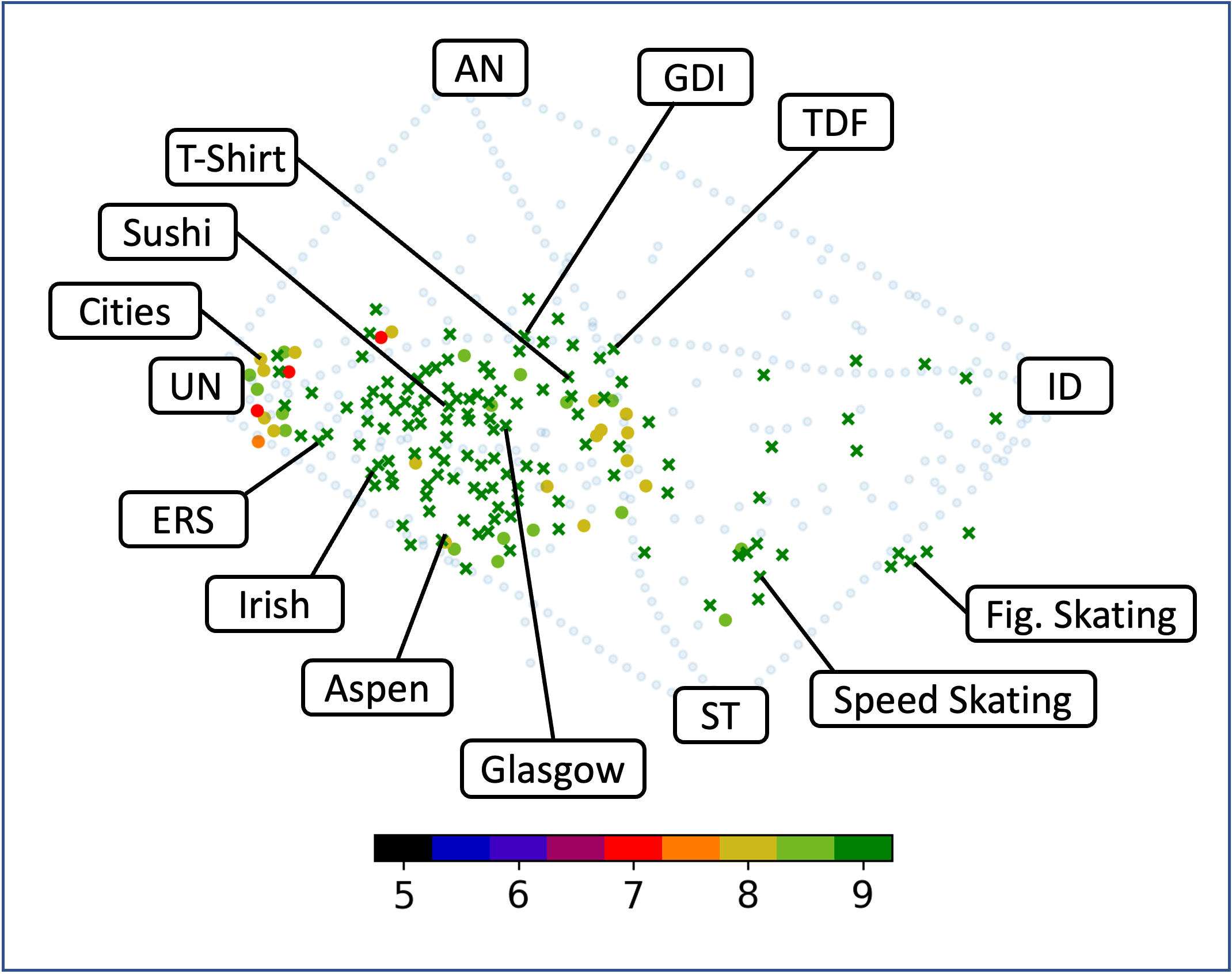}
        \caption{Copeland (Real-Life)}
    \end{subfigure}
    \begin{subfigure}[b]{0.4\textwidth}
        \centering
        \includegraphics[width=5cm, trim={0.2cm 0.2cm 0.2cm 0.2cm}, clip]{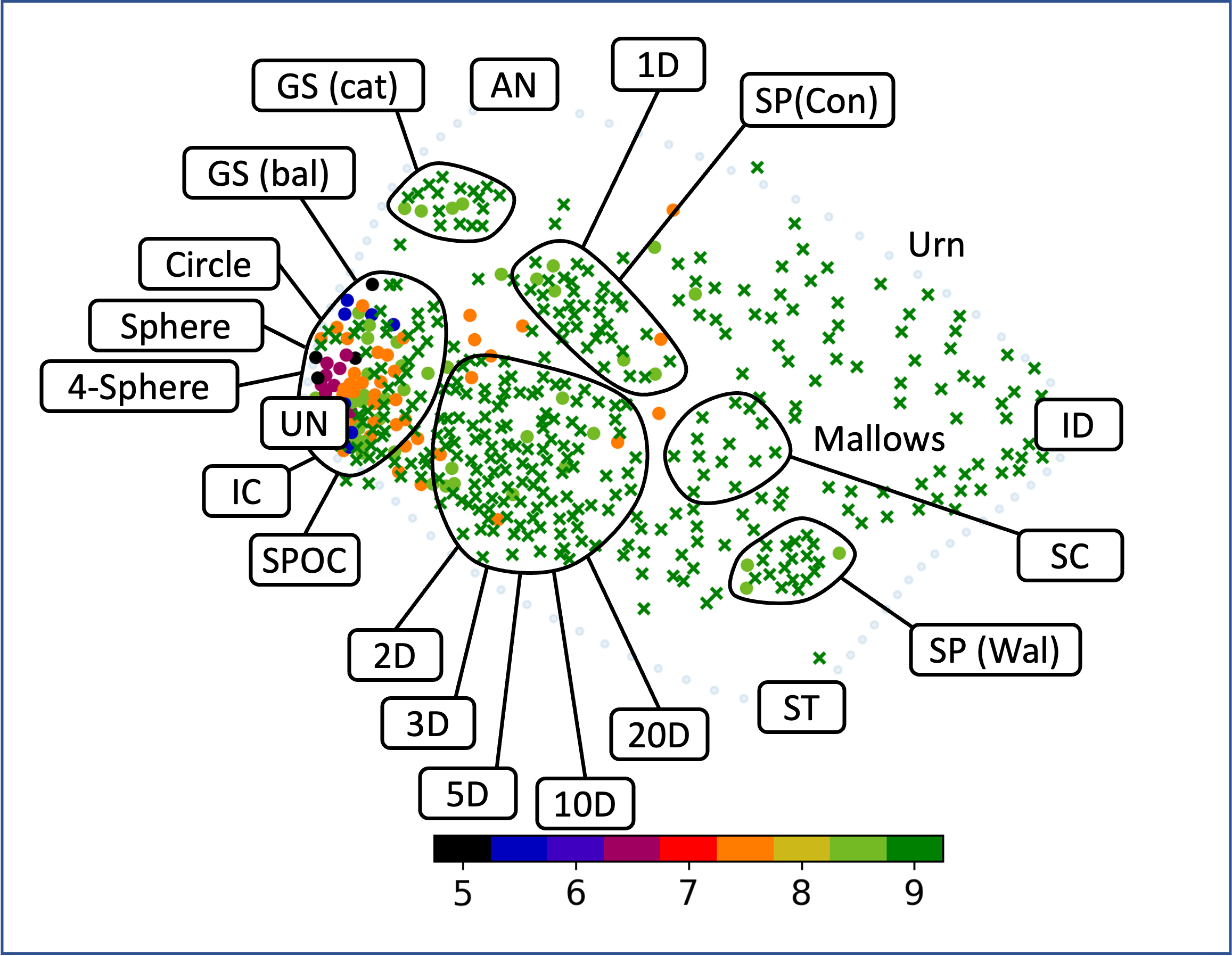}
        \caption{Copeland ($10 \times 100$)}
    \end{subfigure}
    
    \caption{Maps of the real-life elections (on the left) and the
      $10 \times 100$ dataset (on the right), colored according to the
      scores of the winning candidate under Borda and
      Copeland. On the real-life elections map, urn
      and Mallows elections are depicted as pale blue points, in the
      same way as the paths between the compass matrices.}
    \label{fig:preflib_coloring}
\end{figure}

\subsection{Real-Life Elections on the Map} \label{sub:maprel}

In Figure~\ref{fig:main_preflib_map}, we show a map of our real-life elections
along with the compass, Mallows, and urn elections. For readability, we
present the Mallows and urn elections as large, pale-colored areas. Not
all real-life elections form clear clusters, hence the labels refer to
the largest compact groupings.

While the map is not a perfect representation of distances among
elections (see \Cref{sec:robustness}),
analyzing it nevertheless leads to many conclusions.
Most strikingly, real-life elections occupy a very limited area of the
map. This is especially true for political elections and surveys,
which appear in the bottom-left quarter of the map. Except for several
sport elections, all elections are closer to~$\UN$ than to~$\ID$, and
none of the real-life elections falls in the top-right part of the
map. Another observation is that Mallows elections go right through
the real-life elections, while urn elections are on average further
away. This means that for most real-life elections there exists a
parameter~$\phi$ such that elections generated according to the
Mallows model with that parameter are relatively close (see
\Cref{sub:recom} for specific recommendations).
    
Most of the political elections lie close to each other and are
located next to the Mallows elections (and high-dimensional hypercube
ones, had we shown them on the map). At the same time, sport elections
are spread over a larger part of the map and, with the exception of
GDI, are shifted toward~$\ID$. Regarding the surveys, the Cities
survey is very similar to a sample from IC.\footnote{In the survey
  people were casting votes in the form of truncated ballots, ranking
  only their six favorite options. Hence, our completion method
  filled-in the rest of the votes uniformly at random. This is partly
  the reason why it is so similar to IC. Nevertheless, we have not
  observed any particular structure within these votes, hence its
  similarity to IC is not only accidental.}  The Sushi survey is
surprisingly similar to political elections. The T-Shirt surveys are
located in the middle of the Mallows cloud, closer to $\ST$ than to
$\AN$.
We mention that \citet{boe-sch:c:real-life-elections} show maps of
many further survey and sport-based elections. They find that the
former are occasionally closer to $\AN$, whereas the latter often
appear in the vicinity of $\ID$ (they also consider ``ground-truth''
elections, where the votes can be viewed as noisy estimates of the
objectively correct rankings, and these elections are even closer to
$\ID$).

\subsection{Winner Scores Under Borda and Copeland}\label{sec:real-scores}

In \Cref{sec:map:borda-copeland} we have shown maps of the
$100 \times 100$ dataset where each election was colored according to
the score of the winner under Borda and Copeland.  In
\Cref{fig:preflib_coloring} we show analogous maps for the real-life
elections and, for comparison, for the $10 \times 100$ dataset of synthetic elections.  We
make two observations. First, generally the colorings of the maps for the
$10 \times 100$ dataset are very similar to the ones for the
$100 \times 100$ dataset.  Second, the colorings of the real-life maps
are consistent with the colorings of the respective $10 \times 100$
maps, in the sense that similar colors appear in similar areas of the
maps.

All in all, this is yet another argument that the maps provide some
useful information regarding the elections they include and their absolute positions on the map carry some inter-map meaning.

\subsection{Capturing Real-Life Elections} \label{sub:recom}

As we have observed that real-life data is scarce and naturally hard to control, we now make some recommendations on how the Mallows model can be used to generate semi-realistic data. 
For this, we now analyze how to choose the~$\normphi$ parameter so that
elections generated using the Mallows model with our normalization
resemble the real-life ones. In this section we include only a fairly
simple experiment. In a follow-up work, we have explored the
possibility of learning Mallows parameters using the map framework
more deeply~\citep{boe-bre-elk-fal-szu:c:frequency-matrices}.

We consider four different datasets, each
consisting of elections with~$10$ candidates and~$100$ voters (created
as described in \Cref{sub:maprel}): the set of all political
elections, the set of all sport elections, the set of all survey
elections, and the combined set, i.e., the union of the three
preceding ones. For each of these four datasets, to find the value
of~$\normphi$ that produces elections that are as similar as possible
to the respective real-life elections, we conducted the following
experiment.  For each~$\normphi\in \{0,0.001,0.002,...,0.999,1\}$, we
generated~$100$ elections with~$10$ candidates and~$100$ voters from
the Mallows model with the given~$\normphi$ parameter. Subsequently,
we computed the average distance between these elections and the
elections from the respective dataset. Finally, we selected the value
of~$\normphi$ that minimized this distance. We present the results of
this experiment in \Cref{ta:realphiRL}.

\begin{table*}
\centering
\small
\begin{tabular}{l|c|c|c|c}
    \toprule
    Type of elections& Value of & Avg. Norm. & Norm. Std.& Num. of \\
   &~$\normphi$ & Distance & Dev. & Elections \\
    \midrule
    Political elections &~$0.750$ &~$0.15$ &~$0.036$ &~$60$ \\
    Sport elections &~$0.534~$ &~$0.27$ &~$0.080$ &~$60$ \\
    Survey elections &~$0.730$ &~$0.20$ &~$0.034$ &~$45$ \\ 
    All real-life elections &~$0.700$ & ~$0.22$ &~$0.106$ &~$165$ \\
    \bottomrule
  \end{tabular}
  \caption{\label{ta:realphiRL}Values of~$\normphi$ such that elections
    generated with the Mallows model for~$m=10$ are, on average, as close
    as possible to elections from the respective dataset. We include
    the average positionwise distance of the elections generated with the Mallows model for
    this parameter~$\normphi$ from the elections from the dataset as
    well as the standard deviation, both normalized by the positionwise distance between the uniformity and identity.
    The last column gives the number of elections in the respective
    real-life dataset.}
\end{table*}

Recall that in the previous section we have observed that a majority
of real-life elections are close to some elections generated from the
Mallows model with a certain dispersion parameter. However, we have
also seen that the real-life datasets consist of elections that differ
to a certain extent from one another (in particular, this is very
visible for the sports elections). Thus, it is to be expected that
elections drawn from the Mallows model for a fixed dispersion
parameter are at some nonzero (average) distance from the real-life
ones. Indeed, this is the case here. However, the more
homogeneous political elections and survey elections can be captured quite well using the Mallows model with parameter~$\normphi=0.750$ and~$\normphi=0.730$, respectively.  Generally speaking, if one wants to
generate elections that should be particularly close to elections from
the real world, then choosing a~$\normphi$ value between~$0.7$ and~$0.8$ seems like a good strategy. If, however, one wants to capture the full
spectrum of real-life elections, then we recommend using the Mallows
model with different values of~$\normphi$ from the interval~$[0.5,0.8]$. 

\section{Conclusions and Future Work}\label{sec:conclusions}
Our main conclusion is that the map of elections framework is both
credible and useful. In particular, regarding the credibility of the
map, in \Cref{sec:map-cultures} we have shown that the Euclidean
distances between points representing elections on the map are similar to the
positionwise distances between these elections, and that relations
between the distances are often preserved.  This is reinforced by the
fact that elections located close to each other on the map often have
similar high-level properties, such as, e.g., the scores of the
winning candidates and committees (see \Cref{sec:experiments}).
Regarding the usefulness of the map, in \Cref{sec:experiments} and in
\Cref{sec:real-life} we have shown a number of experiments where our
framework proved to be helpful.

Generally speaking, the strategy of using the map of elections in
one's own experiments is as follows: Take a particular dataset of
elections (e.g., our $100 \times 100$ dataset), compute the feature of
interest for each of the elections, and then draw the map, coloring
each election according to its computed feature. This way it is
possible to get a bird's eye view at the space of elections and see
how the considered feature depends on the election's position on the
map. Importantly, this way we see results for all the elections at the
same time, without the need to, e.g., compute average values or other
similar statistics, which often lose important information.  These
results can then guide more focused experiments. For example, one may
realize which statistical cultures lead to the most extreme values of
the feature, or which statistical cultures lead to elections with the
most diverse values.

\subsection{Advantages and Shortcomings of the Maps}
The most important conclusions of our paper relate to the
advantages and disadvantages of using the maps. Below we list a number
of observations that we either made throughout the paper or that
become apparent when looking at our work as a whole. We start with the
advantages of using the map framework:
\begin{enumerate}
\item Maps allow us to see relations between particular elections or
  groups of elections and, maybe even more importantly, with respect
  to compass elections. This helps to understand the nature of the
  elections that we have in our datasets. For example, we were able to
  see that elections generated from particular statistical cultures
  tend to be quite similar to each other. The two exceptions here are the urn and
  Mallows elections that cover large areas, where their placement is
  largely controlled by their parameters.

\item The maps allow us to form general intuitions about the features
  of our elections, by looking at the colorings according to these
  features. Looking at a map, we get a comprehensible bird's eye view
  of individual results for all the elections in the dataset. We saw
  this in essentially all the experiments in \Cref{sec:experiments}
  (we also describe more specific conclusions from these experiments
  in \Cref{sec:conclusions-specific} below).

\item The maps facilitate and motivate experiments conducted on
  elections from multiple, diverse data sources (such as different
  statistical cultures or different datasets from Preflib). While, in
  principle, this is not a feature of the maps---one could simply
  choose to use diverse datasets without knowing about maps at
  all---maps help one realize that given datasets are not diverse. For
  example, if one only considered one or two statistical cultures/data
  sources, a map of such data would make it apparent that there are
  areas in the election space that are not covered. Moreover, by
  observing which parts of the map remain unoccupied, one can get a
  good sense of what type of elections one's dataset is missing.  On a
  practical level, we also offer ready-made diverse datasets.
  We believe this is quite important, especially since
  \citet{boe-fal-jan-kac-lis-pie-rey-sto-szu-was:c:guide} observed
  that papers in computational social choice often use very limited
  data.
\end{enumerate}
The maps also have a number of shortcomings. We do not necessarily
perceive these shortcomings as reasons for not using the maps, but we
stress that one should take them into account. Below we list a few of
the issues:
\begin{enumerate}
\item The maps, at least when using the positionwise distance
  introduced and considered in this paper, are limited to datasets
  where all elections have the same number of
  candidates. 
  Similarly, the current map framework for ordinal elections does not
  handle the case where some preference orders are partial.

\item The maps, by necessity, are approximate. On the one hand, in
  this paper, we use the positionwise distance to measure the distance
  between two elections which, due to focusing on frequency matrices,
  disregards some information about the input elections. On the other
  hand, embedding algorithms introduce additional errors in the
  visualizations. While we put effort into evaluating the credibility
  of our maps, some errors are always present and one has to
  double-check the intuitions one gets from the maps.

\item As the positionwise distance focuses on elections' frequency
  matrices, it is incapable of differentiating elections whose
  frequency matrices are very similar. While in most cases this does
  not lead to significant issues, it should for instance be taken into
  account when analyzing elections close to the UN matrix---i.e.,
  elections where each candidate appears at each position in a vote
  with close to equal probability. For example, this is the case for
  elections generated using IC and SPOC statistical cultures. These
  elections are of very different nature---with the former one being
  quite chaotic and the latter one having a rather rigid
  structure---but generate nearly identical frequency matrices and,
  consequently, take similar positions on the maps based on the
  positionwise distance.
  
\item Due to the embedding algorithms used, the exact arrangement of
  elections in a map may be affected by the composition of the dataset
  used. If we add a large number of elections of a particular type,
  the area where these elections land on the map may be overinflated,
  with these elections taking a larger part of the picture. Hence, we
  repeat our warning about verifying the intuitions that one gets from
  the maps. That said, in our maps we do have many more elections
  close to $\UN$ than close to $\ID$, but it does not seem to affect
  the visualizations strongly under the KK embedding (but for other
  embeddings---such as the Fruchterman-Reingold---this effect is
  stronger, see \Cref{app:embeddings}).
  
\item The maps of elections with just a few candidates---such as four
  or five---do not seem to be particularly informative (see
  \Cref{sec:scalability}). Indeed, as shown by
  \citet{fal-sor-szu:j:subelection-isomorphism}, elections with such few
  candidates simply tend to be quite similar to each other
  irrespective of which data source they come from. As a result, maps that rely
  on analyzing election similarity cannot distinguish them well.

\item The results of various experiments may show, e.g., that
  elections that are close on the map have very different features. In
  some sense, this is not a disadvantage of the maps per se, but
  rather it is an indication that a given feature depends on much more
  subtle properties of elections than those captured by the considered
  distance measure. While such cases are interesting in themselves,
  they may require the use of other tools than maps.\footnote{As a
    mild form of this phenomenon, it may be the case that the value of
    a particular election feature is more closely tied to, e.g., the
    statistical culture from which the election was generated than to
    its position on the map. As statistical cultures are typically
    clustered on the maps, in such a case the maps are still useful,
    but the particular conclusions that one may draw may not be as
    strong as otherwise.}
\end{enumerate}

\subsection{Specific Conclusions}\label{sec:conclusions-specific}

Our work also leads to a number of more specific conclusions. For
example, we have found that elections generated according to
particular statistical cultures are fairly similar to each other.
This is a strong argument that when performing numerical experiments
using synthetic data, one should use a set of statistical
cultures that is as diverse as possible. This was further confirmed, e.g., by our
experiments regarding the running time of ILP-based winner determination
procedures for $\np$-hard voting rules. We observed that while for the
HB rule the time needed to find a winning committee in a given
election is correlated with its distance from $\ID$ (or, on a higher
level of abstraction, it is correlated with the diversity of the votes
in the election), for Dodgson this relation is less clear. In
particular, we have observed that computing the Dodgson winner for
caterpillar group-separable elections usually takes much more time
than for elections from other statistical cultures. This illustrates
that even if some statistical culture does not seem particularly
likely to generate realistic elections, it is still useful to include
it in experiments to get a sense for all possible behaviors that might occur.

Further, in our experiments regarding approximation algorithms, we
have compared the Sequential~CC and Removal~CC
algorithms for the CC rule. In the satisfaction-based model, the
former algorithm has much stronger approximation guarantees than the
latter, yet our experiments show that Removal~CC performs very well on
some data, more often than not achieving better results than
Sequential~CC.\footnote{Interestingly, there are two other
  approximation algorithms for CC, namely Ranging~CC and Banzhaf~CC,
  that also have equally strong approximation guarantees as
  Sequential~CC. Our initial results, not shown in the paper, indicate
  that in practice they perform notably worse than the other two
  algorithms on our datasets.}
This illustrates the value of experimental evaluation, which can give
quite different insights than theoretical studies.  More importantly
from our point of view, we have shown that the map of elections is
very useful in analyzing for which elections a particular algorithm
performs better.

Finally, in Section~\ref{sec:real-life} we have shown that the map of
elections framework can also be used for learning parameters of
statistical cultures that lead to elections most similar to real-life
ones. The idea is that we seek such a parameter of a given culture
under which the average distance between the generated elections and
the input real-life ones is as small as possible. In particular, we
have shown which normalized dispersion parameters for the Mallows
model are best for generating political, sport, and survey elections.
This idea is further developed in the work of
\citet{boe-bre-elk-fal-szu:c:frequency-matrices}.

\subsection{Follow-Up and Future Work}

The work presented in this paper has already lead to a number of
follow-up works, which we have discussed in \Cref{sec:related}. One of
the most pressing directions for future research is to extend the map
framework with the ability to deal with elections of different sizes
and with partial preference data. In other words, instead of filling
in incomplete votes, as we did in \Cref{sec:real-life}, it would be
far better to have distances that could work with partial preference
orders ``natively.''  Consequently, it would be possible to prepare a
map of all elections from Preflib~\citep{mat-wal:c:preflib} and see
how they relate to each other.  It would be very interesting to see
if, for example, there are particularly many elections in certain
areas of the map, if nearby elections have similar features, and if
conclusions we found regarding real-life elections would still hold.

It would also be interesting to extend the framework beyond ordinal
elections. This has already been done for approval
elections~\citep{szu-fal-jan-lac-sli-sor-tal:c:map-approval}, stable
roommates and stable marriages
problems~\citep{boe-hee-szu:c:map-stable-roommates}, voting
rules~\citep{fal-lac-sor-szu:c:map-of-rules}, and to visualize
internal structure of
elections~\citep{fal-kac-sor-szu-was:c:microscope}. However, there are
many more types of objects studied in computational social choice (and beyond) for
which appropriate maps could help in designing and conducting
experiments.

\subsubsection*{Acknowledgments}
We are very grateful to the AAMAS and IJCAI reviewers who provided
comments on the papers on which this work is based, and to the AIJ
reviewers. Their comments greatly helped in improving this paper.
This project has received funding from the European Research Council
(ERC) under the European Union’s Horizon 2020 research and innovation
programme (grant agreement No 101002854), and from the French
government under the management of Agence Nationale de la Recherche as
part of the France 2030 program, reference ANR-23-IACL-0008. Niclas
Boehmer was supported by the DFG project MaMu (NI 369/19).  In part,
Stanisław Szufa was supported by the Foundation for Polish Science
(FNP). \medskip

\begin{center}
  \includegraphics[width=3cm]{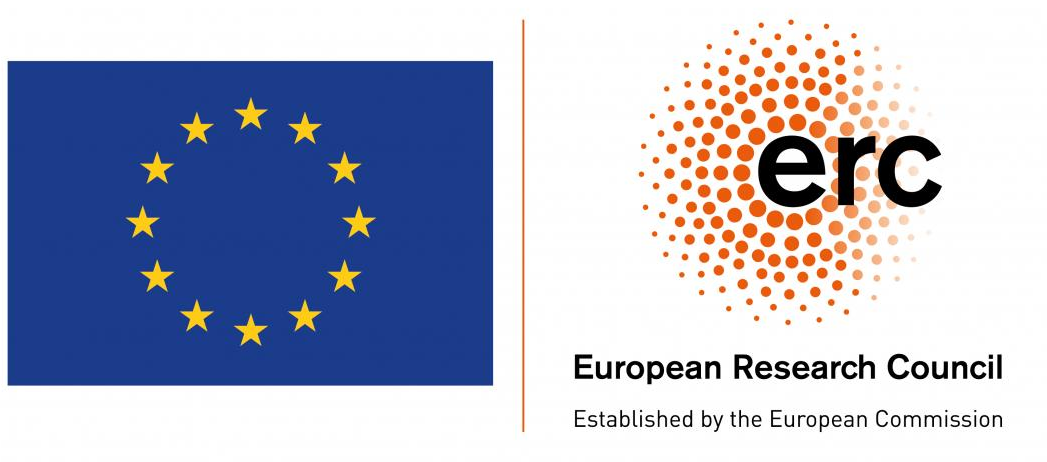}
\end{center}

\bibliographystyle{plainnat}

\begin{thebibliography}{129}
\providecommand{\natexlab}[1]{#1}
\providecommand{\url}[1]{\texttt{#1}}
\expandafter\ifx\csname urlstyle\endcsname\relax
  \providecommand{\doi}[1]{doi: #1}\else
  \providecommand{\doi}{doi: \begingroup \urlstyle{rm}\Url}\fi

\bibitem[Ahuja et~al.(1993)Ahuja, Magnanti, and Orlin]{ahu-mag-orl:b:flows}
R.~Ahuja, T.~Magnanti, and J.~Orlin.
\newblock \emph{Network Flows: Theory, Algorithms, and Applications}.
\newblock Prentice-Hall, 1993.

\bibitem[Alós-Ferrer and Granić(2012)]{alo-gra:j:german-approval-voting}
C.~Alós-Ferrer and D.~Granić.
\newblock Two field experiments on approval voting in {G}ermany.
\newblock \emph{Social Choice and Welfare}, 39\penalty0 (1):\penalty0 171--205,
  2012.

\bibitem[Anand and Dey(2021)]{ana-dey:j:distance-manipulation}
A.~Anand and P.~Dey.
\newblock Distance restricted manipulation in voting.
\newblock \emph{Theoretical Computer Science}, 891:\penalty0 149--165, 2021.

\bibitem[Ayadi et~al.(2019)Ayadi, Amor, Lang, and
  Peters]{aya-ben-lan-pet:c:stv-incomplete}
M.~Ayadi, N.~Ben Amor, J.~Lang, and D.~Peters.
\newblock Single transferable vote: {Incomplete} knowledge and communication
  issues.
\newblock In \emph{Proceedings of AAMAS-19}, pages 1288--1296, 2019.

\bibitem[Barrot et~al.(2017)Barrot, Lang, and
  Yokoo]{bar-lan-yok:c:approval-hamming-manipulation}
N.~Barrot, J.~Lang, and M.~Yokoo.
\newblock Manipulation of hamming-based approval voting for multiple referenda
  and committee elections.
\newblock In \emph{Proceedings of AAMAS-17}, pages 597--605, 2017.

\bibitem[{{Bartholdi}} et~al.(1989){{Bartholdi}}, Tovey, and
  Trick]{bar-tov-tri:j:who-won}
J.~{{Bartholdi}}, III, C.~Tovey, and M.~Trick.
\newblock Voting schemes for which it can be difficult to tell who won the
  election.
\newblock \emph{Social Choice and Welfare}, 6\penalty0 (2):\penalty0 157--165,
  1989.

\bibitem[Barzilai and Borwein(1988)]{bar-bor:j:kam-kaw}
J.~Barzilai and J.~Borwein.
\newblock {Two-Point Step Size Gradient Methods}.
\newblock \emph{IMA Journal of Numerical Analysis}, 8\penalty0 (1):\penalty0
  141--148, 1988.

\bibitem[Baujard and Igersheim(2010)]{bau-ige:b:french-elections-2007}
A.~Baujard and H.~Igersheim.
\newblock Framed-field experiment on approval voting and evaluation voting.
  some teachings to reform the french presidential electoral system.
\newblock In \emph{In Situ and Laboratory Experiments on Electoral Law Reform},
  pages 69--89. 2010.

\bibitem[Baujard et~al.(2014)Baujard, Lebon, Gavrel, and
  Laslier]{bau-ige-leb-gav-las:j:french-election-2012}
A.~Baujard, H.~Igersheim~I. Lebon, F.~Gavrel, and J.-F. Laslier.
\newblock Who's favored by evaluative voting? an experiment conducted during
  the 2012 french presidential election.
\newblock \emph{Electoral Studies}, 34:\penalty0 131--145, 2014.

\bibitem[Baumeister et~al.(2019)Baumeister, Hogrebe, and
  Rey]{bau-hog:c:generalized-distance-bribery}
D.~Baumeister, T.~Hogrebe, and L.~Rey.
\newblock Generalized distance bribery.
\newblock In \emph{Proceedings of AAAI-2019}, pages 1764--1771, 2019.

\bibitem[Benade et~al.(2017)Benade, Nath, Procaccia, and
  Shah]{ben-nat-pro-sha:c:pb-preference-elicitation}
G.~Benade, S.~Nath, A.~Procaccia, and N.~Shah.
\newblock Preference elicitation for participatory budgeting.
\newblock In \emph{Proceedings of AAAI-2017}, pages 376--382, 2017.

\bibitem[Bentert and Skowron(2020)]{ben-sko:c:comparing-rules}
M.~Bentert and P.~Skowron.
\newblock Comparing election methods where each voter ranks only few
  candidates.
\newblock In \emph{Proceedings of AAAI-2020}, pages 2218--2225, 2020.

\bibitem[Berg(1985)]{berg1985paradox}
S.~Berg.
\newblock Paradox of voting under an urn model: {The} effect of homogeneity.
\newblock \emph{Public Choice}, 47\penalty0 (2):\penalty0 377--387, 1985.

\bibitem[Betzler et~al.(2010)Betzler, Guo, and
  Niedermeier]{bet-guo-nie:j:dodgson-parametrized}
N.~Betzler, J.~Guo, and R.~Niedermeier.
\newblock Parameterized computational complexity of {D}odgson and {Y}oung
  elections.
\newblock \emph{Information and Computation}, 208\penalty0 (2):\penalty0
  165--177, 2010.

\bibitem[Betzler et~al.(2013)Betzler, Slinko, and
  Uhlmann]{bet-sli-uhl:j:mon-cc}
N.~Betzler, A.~Slinko, and J.~Uhlmann.
\newblock On the computation of fully proportional representation.
\newblock \emph{Journal of Art. Int. Research}, 47:\penalty0 475--519, 2013.

\bibitem[Black(1958)]{bla:b:polsci:committees-elections}
D.~Black.
\newblock \emph{The Theory of Committees and Elections}.
\newblock Cambridge University Press, 1958.

\bibitem[Boehmer and Schaar(2023)]{boe-sch:c:real-life-elections}
N.~Boehmer and N.~Schaar.
\newblock Collecting, classifying, analyzing, and using real-world ranking
  data.
\newblock In \emph{Proceedings of AAMAS-23}, pages 1706--1715, 2023.

\bibitem[Boehmer et~al.(2021{\natexlab{a}})Boehmer, Bredereck, Faliszewski, and
  Niedermeier]{boe-bre-fal-nie:c:counting-bribery}
N.~Boehmer, R.~Bredereck, P.~Faliszewski, and R.~Niedermeier.
\newblock Winner robustness via swap- and shift-bribery: Parameterized counting
  complexity and experiments.
\newblock In \emph{Proceedings of IJCAI-2021}, pages 52--58,
  2021{\natexlab{a}}.

\bibitem[Boehmer et~al.(2021{\natexlab{b}})Boehmer, Bredereck, Faliszewski,
  Niedermeier, and Szufa]{boe-bre-fal-nie-szu:c:compass}
N.~Boehmer, R.~Bredereck, P.~Faliszewski, R.~Niedermeier, and S.~Szufa.
\newblock Putting a compass on the map of elections.
\newblock In \emph{Proceedings of IJCAI-2021}, pages 59--65,
  2021{\natexlab{b}}.

\bibitem[Boehmer et~al.(2022{\natexlab{a}})Boehmer, Bredereck, Elkind,
  Faliszewski, and Szufa]{boe-bre-elk-fal-szu:c:frequency-matrices}
N.~Boehmer, R.~Bredereck, E.~Elkind, P.~Faliszewski, and S.~Szufa.
\newblock Expected frequency matrices of elections: Computation, geometry, and
  preference learning.
\newblock In \emph{Proceedings of NeurIPS-2022}, 2022{\natexlab{a}}.

\bibitem[Boehmer et~al.(2022{\natexlab{b}})Boehmer, Bredereck, Faliszewski, and
  Niedermeier]{boe-bre-fal-nie:c:counting-bribery-real-life}
N.~Boehmer, R.~Bredereck, P.~Faliszewski, and R.~Niedermeier.
\newblock A quantitative and qualitative analysis of the robustness of
  (real-world) election winners.
\newblock In \emph{Proceedings of EAAMO-2022}, pages 7:1--7:10,
  2022{\natexlab{b}}.

\bibitem[Boehmer et~al.(2022{\natexlab{c}})Boehmer, Faliszewski, Niedermeier,
  Szufa, and Was]{boe-fal-nie-szu-was:c:metrics}
N.~Boehmer, P.~Faliszewski, R.~Niedermeier, S.~Szufa, and T.~Was.
\newblock Understanding distance measures among elections.
\newblock In \emph{Proceedings of IJCAI-2022}, pages 102--108,
  2022{\natexlab{c}}.

\bibitem[Boehmer et~al.(2023{\natexlab{a}})Boehmer, Cai, Faliszewski, Fan,
  Janeczko, Kaczmarczyk, and
  W\k{a}s]{boe-cai-fal-fan-jan-kac:c:position-matrices}
N.~Boehmer, J-Y. Cai, P.~Faliszewski, A.~Fan, Ł. Janeczko, A.~Kaczmarczyk, and
  T.~W\k{a}s.
\newblock Properties of position matrices and their elections.
\newblock In \emph{Proceedings of AAAI-2023}, pages 5507--5514,
  2023{\natexlab{a}}.

\bibitem[Boehmer et~al.(2023{\natexlab{b}})Boehmer, Faliszewski, and
  Kraiczy]{boe-fal-kra:c:mallows-normalization}
N.~Boehmer, P.~Faliszewski, and S.~Kraiczy.
\newblock Properties of the {Mallows} model depending on the number of
  alternatives: {A} warning for an experimentalist.
\newblock In \emph{Proceedings of ICML-2023}, 2023{\natexlab{b}}.

\bibitem[Boehmer et~al.(2023{\natexlab{c}})Boehmer, Heeger, and
  Szufa]{boe-hee-szu:c:map-stable-roommates}
N.~Boehmer, K.~Heeger, and S.~Szufa.
\newblock A map of diverse synthetic stable roommates instances.
\newblock In \emph{Proceedings of AAMAS-23}, pages 1003--1011,
  2023{\natexlab{c}}.

\bibitem[Boehmer et~al.(2024{\natexlab{a}})Boehmer, Faliszewski, Janeczko,
  Kaczmarczyk, Lisowski, Pierczynski, Rey, Stolicki, Szufa, and
  Was]{boe-fal-jan-kac-lis-pie-rey-sto-szu-was:c:guide}
N.~Boehmer, P.~Faliszewski, L.~Janeczko, A.~Kaczmarczyk, G.~Lisowski,
  G.~Pierczynski, S.~Rey, D.~Stolicki, S.~Szufa, and T.~Was.
\newblock Guide to numerical experiments on elections in computational social
  choice.
\newblock In \emph{Proceedings of IJCAI-2024}, pages 7962--7970,
  2024{\natexlab{a}}.

\bibitem[Boehmer et~al.(2024{\natexlab{b}})Boehmer, Faliszewski, Janeczko,
  Peters, Pierczynski, Schierreich, Skowron, and
  Szufa]{boe-fal-jan-pet-pie-sch-sko-szu:c:pb-performance}
N.~Boehmer, P.~Faliszewski, L.~Janeczko, D.~Peters, G.~Pierczynski,
  S.~Schierreich, P.~Skowron, and S.~Szufa.
\newblock Evaluation of project performance in participatory budgeting.
\newblock In \emph{Proceedings of IJCAI-2024}, pages 2678--2686,
  2024{\natexlab{b}}.

\bibitem[Bouveret et~al.(2019)Bouveret, Blanch, Baujard, Durand, Igersheim,
  Lang, Laruelle, Laslier, Lebon, and
  Merlin]{bou-bla-bau-dur-ige-lan-lar-las-leb-mer:t:french-elections-2017}
S.~Bouveret, R.~Blanch, A.~Baujard, F.~Durand, H.~Igersheim, J.~Lang,
  A.~Laruelle, J.-F. Laslier, I.~Lebon, and V.~Merlin.
\newblock Voter autrement 2017 for the french presidential election.
\newblock working paper or preprint, November 2019.

\bibitem[Brandt et~al.(2016{\natexlab{a}})Brandt, Conitzer, Endriss, Lang, and
  Procaccia]{bra-con-end-lan-pro:b:comsoc-handbook}
F.~Brandt, V.~Conitzer, U.~Endriss, J.~Lang, and A.~Procaccia, editors.
\newblock \emph{Handbook of Computational Social Choice}.
\newblock Cambridge University Press, 2016{\natexlab{a}}.

\bibitem[Brandt et~al.(2016{\natexlab{b}})Brandt, Geist, and
  Strobel]{bra-gei-str:c:ehrhart-theory}
F.~Brandt, C.~Geist, and M.~Strobel.
\newblock Analyzing the practical relevance of voting paradoxes via ehrhart
  theory, computer simulations, and empirical data.
\newblock In \emph{Proceedings of AAMAS-16}, pages 385--393,
  2016{\natexlab{b}}.

\bibitem[Bredereck et~al.(2015)Bredereck, Chen, Niedermeier, and
  Walsh]{bre-che-nie-wal:c:parliamentary}
R.~Bredereck, J.~Chen, R.~Niedermeier, and T.~Walsh.
\newblock Parliamentary voting procedures: Agenda control, manipulation, and
  uncertainty.
\newblock In \emph{Proceedings of IJCAI-2015}, page 164–170, 2015.

\bibitem[Bredereck et~al.(2020)Bredereck, Faliszewski, Kaczmarczyk, Knop, and
  Niedermeier]{bre-fal-kac-kno-nie:c:fpt-multiwinner}
R.~Bredereck, P.~Faliszewski, A.~Kaczmarczyk, D.~Knop, and R.~Niedermeier.
\newblock Parameterized algorithms for finding a collective set of items.
\newblock In \emph{Proceedings of AAAI-2020}, pages 1838--1845, 2020.

\bibitem[Brill et~al.(2021)Brill, Schmidt{-}Kraepelin, and
  Suksompong]{bri-sch-suk:c:mov-tournaments}
M.~Brill, U.~Schmidt{-}Kraepelin, and W.~Suksompong.
\newblock Margin of victory in tournaments: Structural and experimental
  results.
\newblock In \emph{Proceedings of AAAI-2021}, pages 5228--5235, 2021.

\bibitem[Byrka et~al.(2018)Byrka, Skowron, and
  Sornat]{byr-sko-sor:c:pav-approx}
J.~Byrka, P.~Skowron, and K.~Sornat.
\newblock Proportional approval voting, harmonic $k$-median, and negative
  association.
\newblock In \emph{Proceedings of ICALP-2018}, pages 26:1--26:14, 2018.

\bibitem[Caragiannis et~al.(2010)Caragiannis, Kaklamanis, Karanikolas, and
  Procaccia]{car-kak-kar-pro:c:dodgson-acceptable}
I.~Caragiannis, C.~Kaklamanis, N.~Karanikolas, and A.~Procaccia.
\newblock Socially desirable approximations for {D}odgson's voting rule.
\newblock In \emph{Proceedings of EC-2010}, pages 253--262. ACM Press, June
  2010.

\bibitem[Caragiannis et~al.(2012)Caragiannis, Covey, Feldman, Homan,
  Kaklamanis, Karanikolas, Procaccia, and
  Rosenschein]{car-cov-fel-hom-kak-kar-pro-ros:j:dodgson}
I.~Caragiannis, J.~Covey, M.~Feldman, C.~Homan, C.~Kaklamanis, N.~Karanikolas,
  A.~Procaccia, and J.~Rosenschein.
\newblock On the approximability of {D}odgson and {Y}oung elections.
\newblock \emph{Artificial Intelligence}, 187:\penalty0 31--51, 2012.

\bibitem[Caragiannis et~al.(2019)Caragiannis, Chatzigeorgiou, Krimpas, and
  Voudouris]{car-cha-kri-vou:j:optimizing}
I.~Caragiannis, X.~Chatzigeorgiou, G.~Krimpas, and A.~Voudouris.
\newblock Optimizing positional scoring rules for rank aggregation.
\newblock \emph{Artificial Intelligence}, 267:\penalty0 58--77, 2019.

\bibitem[Celis et~al.(2018)Celis, Huang, and
  Vishnoi]{cel-hua-vis:c:fairness-multiwinner}
L.~Celis, L.~Huang, and N.~Vishnoi.
\newblock Multiwinner voting with fairness constraints.
\newblock In \emph{Proceedings of IJCAI-2018}, pages 144--151, 2018.

\bibitem[Chamberlin and Courant(1983)]{cha-cou:j:cc}
B.~Chamberlin and P.~Courant.
\newblock Representative deliberations and representative decisions:
  {Proportional} representation and the {B}orda rule.
\newblock \emph{American Political Science Review}, 77\penalty0 (3):\penalty0
  718--733, 1983.

\bibitem[Conitzer(2009)]{con:j:eliciting-singlepeaked}
V.~Conitzer.
\newblock Eliciting single-peaked preferences using comparison queries.
\newblock \emph{Journal of Artificial Intelligence Research}, 35:\penalty0
  161--191, 2009.

\bibitem[Contet et~al.(2024)Contet, Grandi, and
  Mengin]{con-gra-men:c:explanations-map}
C.~Contet, U.~Grandi, and J.~Mengin.
\newblock Abductive and contrastive explanations for scoring rules in voting.
\newblock In \emph{Proceedings of ECAI-2024}, volume 392, pages 3565--3572,
  2024.

\bibitem[Csar et~al.(2018)Csar, Lackner, and
  Pichler]{csa-lac-pic:c:schulze-method}
T.~Csar, M.~Lackner, and R.~Pichler.
\newblock Computing the schulze method for large-scale preference data sets.
\newblock In \emph{Proceedings of IJCAI-2018}, pages 180--187, 2018.

\bibitem[Davies et~al.(2014)Davies, Katsirelos, Narodytska, Walsh, and
  Xia]{dav-kat-nar-wal-xia:j:borda-nanson-baldwin-manipulation}
J.~Davies, G.~Katsirelos, N.~Narodytska, T.~Walsh, and L.~Xia.
\newblock Complexity of and algorithms for the manipulation of borda, nanson's
  and baldwin's voting rules.
\newblock \emph{Artificial Intelligence}, 217:\penalty0 20--42, 2014.

\bibitem[de~Leeuw(2005)]{lee:j:mds}
J.~de~Leeuw.
\newblock Modern multidimensional scaling: Theory and applications.
\newblock \emph{Journal of Statistical Software}, 14:\penalty0 1--2, 2005.

\bibitem[Delemazure and Peters(2024)]{del-pet:t:indifferent-stv-map}
T.~Delemazure and D.~Peters.
\newblock Generalizing instant runoff voting to allow indifferences.
\newblock Technical Report arXiv:2404.11407~[cs.GT], arXiv.org, 2024.

\bibitem[der Maaten(2010)]{maa:c:tsne}
L.~Van der Maaten.
\newblock Fast optimization for $t$-{SNE}.
\newblock In \emph{NeurIPS-2010 Workshop on Challenges in Data Visualization},
  volume 100, 2010.

\bibitem[der Maaten and Hinton(2008)]{maa-hin:j:tsne}
L.~Van der Maaten and G.~Hinton.
\newblock Visualizing data using $t$-{SNE}.
\newblock \emph{Journal of Machine Learning Research}, 9\penalty0 (11), 2008.

\bibitem[Dey(2021)]{dey:j:local-distance-bribery}
P.~Dey.
\newblock Local distance constrained bribery in voting.
\newblock \emph{Theoretical Computer Science}, 849:\penalty0 1--21, 2021.

\bibitem[Deza and Deza(2009)]{dez-dez:b:encyclopedia}
M.~M. Deza and E.~Deza.
\newblock \emph{Encyclopedia of Distances}.
\newblock Springer, 2009.

\bibitem[Donoho and Grimes(2003)]{don-gri:j:lle}
D.~Donoho and C.~Grimes.
\newblock Hessian eigenmaps: {Locally} linear embedding techniques for
  high-dimensional data.
\newblock \emph{Proceedings of the National Academy of Sciences}, 100\penalty0
  (10):\penalty0 5591--5596, 2003.

\bibitem[Doucette(2016)]{dou:thesis:partial-social-choice}
J.~Doucette.
\newblock \emph{Social Choice for Partial Preferences Using Imputation}.
\newblock PhD thesis, University of Waterloo, 2016.

\bibitem[Dudycz et~al.(2020)Dudycz, Manurangsi, Marcinkowski, and
  Sornat]{dud-man-mar-sor:c:pav-tight-approx}
S.~Dudycz, P.~Manurangsi, J.~Marcinkowski, and K.~Sornat.
\newblock Tight approximation for proportional approval voting.
\newblock In \emph{Proceedings of IJCAI-2020}, pages 276--282. ijcai.org, 2020.

\bibitem[Elkind and Slinko(2016)]{elk-sli:b:rationalization}
E.~Elkind and A.~Slinko.
\newblock Rationalizations of voting rules.
\newblock In F.~Brandt, V.~Conitzer, U.~Endriss, J.~Lang, and A.~D. Procaccia,
  editors, \emph{Handbook of Computational Social Choice}, chapter~8, pages
  169--196. Cambridge University Press, 2016.

\bibitem[Elkind et~al.(2012)Elkind, Faliszewski, and
  Slinko]{elk-fal-sli:j:distance-rational}
E.~Elkind, P.~Faliszewski, and A.~Slinko.
\newblock Rationalizations of condorcet-consistent rules via distances of
  hamming type.
\newblock \emph{Social Choice and Welfare}, 39\penalty0 (4):\penalty0 891--905,
  2012.

\bibitem[Elkind et~al.(2017)Elkind, Faliszewski, Laslier, Skowron, Slinko, and
  Talmon]{elk-fal-las-sko-sli-tal:c:2d-multiwinner}
E.~Elkind, P.~Faliszewski, J.~Laslier, P.~Skowron, A.~Slinko, and N.~Talmon.
\newblock What do multiwinner voting rules do? {An} experiment over the
  two-dimensional euclidean domain.
\newblock In \emph{Proceedings of AAAI-2017}, pages 494--501, 2017.

\bibitem[Elkind et~al.(2022)Elkind, Lackner, and Peters]{elk-lac-pet:t:domains}
E.~Elkind, M.~Lackner, and D.~Peters.
\newblock Preference restrictions in computational social choice: {A} survey.
\newblock Technical Report arXiv.2205.09092~[cs.GT], arXiv.org, 2022.

\bibitem[Enelow and Hinich(1984)]{enelow1984spatial}
J.~Enelow and M.~Hinich.
\newblock \emph{The Spatial Theory of Voting: {An} Introduction}.
\newblock Cambridge University Press, 1984.

\bibitem[Enelow and Hinich(1990)]{enelow1990advances}
J.~Enelow and M.~Hinich.
\newblock \emph{Advances in the Spatial Theory of Voting}.
\newblock Cambridge University Press, 1990.

\bibitem[Erd{\'{e}}lyi et~al.(2015)Erd{\'{e}}lyi, Fellows, Rothe, and
  Schend]{erd-fel-rot-sch:j:experimental-control}
G.~Erd{\'{e}}lyi, M.~Fellows, J.~Rothe, and L.~Schend.
\newblock Control complexity in {Bucklin} and fallback voting: {An}
  experimental analysis.
\newblock \emph{Journal of Computer and System Sciences}, 81\penalty0
  (4):\penalty0 661--670, 2015.

\bibitem[E\u{g}ecio\u{g}lu and Giritligil(2013)]{ege-gir:j:isomorphism-ianc}
{\"{O}}.~E\u{g}ecio\u{g}lu and A.~Giritligil.
\newblock The impartial, anonymous, and neutral culture model: {A} probability
  model for sampling public preference structures.
\newblock \emph{Journal of Mathematical Sociology}, 37\penalty0 (4):\penalty0
  203--222, 2013.

\bibitem[Faliszewski et~al.(2016)Faliszewski, Skowron, Slinko, and
  Talmon]{fal-sko-sli-tal:c:top-k-counting}
P.~Faliszewski, P.~Skowron, A.~Slinko, and N.~Talmon.
\newblock Multiwinner analogues of the plurality rule: {Axiomatic} and
  algorithmic views.
\newblock In \emph{Proceedings of AAAI-2016}, pages 482--488, 2016.

\bibitem[Faliszewski et~al.(2017{\natexlab{a}})Faliszewski, Skowron, Slinko,
  and Talmon]{fal-sko-sli-tal:b:multiwinner-voting}
P.~Faliszewski, P.~Skowron, A.~Slinko, and N.~Talmon.
\newblock Multiwinner voting: {A} new challenge for social choice theory.
\newblock In U.~Endriss, editor, \emph{Trends in Computational Social Choice}.
  AI Access Foundation, 2017{\natexlab{a}}.

\bibitem[Faliszewski et~al.(2017{\natexlab{b}})Faliszewski, Skowron, Slinko,
  and Talmon]{fal-sko-sli-tal:c:paths}
P.~Faliszewski, P.~Skowron, A.~Slinko, and N.~Talmon.
\newblock Multiwinner rules on paths from $k$-{Borda} to {Chamberlin--Courant}.
\newblock In \emph{Proceedings of IJCAI-2017}, pages 192--198,
  2017{\natexlab{b}}.

\bibitem[Faliszewski et~al.(2018{\natexlab{a}})Faliszewski, Lackner, Peters,
  and Talmon]{fal-lac-pet-tal:c:csr-heuristics}
P.~Faliszewski, M.~Lackner, D.~Peters, and N.~Talmon.
\newblock Effective heuristics for committee scoring rules.
\newblock In \emph{Proceedings of AAAI-2018}, pages 1023--1030,
  2018{\natexlab{a}}.

\bibitem[Faliszewski et~al.(2018{\natexlab{b}})Faliszewski, Slinko, Stahl, and
  Talmon]{fal-sli-sta-tal:j:clustering}
P.~Faliszewski, A.~Slinko, K.~Stahl, and N.~Talmon.
\newblock Achieving fully proportional representation by clustering voters.
\newblock \emph{Journal of Heuristics}, 24\penalty0 (5):\penalty0 725--756,
  2018{\natexlab{b}}.

\bibitem[Faliszewski et~al.(2022)Faliszewski, Karpov, and
  Obraztsova]{fal-kar-obr:c:group-separable}
P.~Faliszewski, A.~Karpov, and S.~Obraztsova.
\newblock The complexity of election problems with group-separable preferences.
\newblock \emph{Autonomous Agents and Multi-Agent Systems}, 36\penalty0
  (1):\penalty0 18, 2022.

\bibitem[Faliszewski et~al.(2023{\natexlab{a}})Faliszewski, Kaczmarczyk,
  Sornat, Szufa, and Was]{fal-kac-sor-szu-was:c:microscope}
P.~Faliszewski, A.~Kaczmarczyk, K.~Sornat, S.~Szufa, and T.~Was.
\newblock Diversity, agreement, and polarization in elections.
\newblock In \emph{Proceedings of IJCAI-2023}, pages 2684--2692,
  2023{\natexlab{a}}.

\bibitem[Faliszewski et~al.(2023{\natexlab{b}})Faliszewski, Lackner, Sornat,
  and Szufa]{fal-lac-sor-szu:c:map-of-rules}
P.~Faliszewski, M.~Lackner, K.~Sornat, and S.~Szufa.
\newblock An experimental comparison of multiwinner voting rules on approval
  elections.
\newblock In \emph{Proceedings of IJCAI-2023}, pages 2675--2683,
  2023{\natexlab{b}}.

\bibitem[Faliszewski et~al.(2024)Faliszewski, Sornat, and
  Szufa]{fal-sor-szu:j:subelection-isomorphism}
P.~Faliszewski, K.~Sornat, and S.~Szufa.
\newblock The complexity of subelection isomorphism problems.
\newblock \emph{Journal of Artificial Intelligence Research}, 80:\penalty0
  1343--1371, 2024.

\bibitem[Faliszewski et~al.(2025)Faliszewski, Skowron, Slinko, Sornat, Szufa,
  and Talmon]{fal-sko-sli-szu-tal:j:isomorphism}
P.~Faliszewski, P.~Skowron, A.~Slinko, K.~Sornat, S.~Szufa, and N.~Talmon.
\newblock How similar are two elections?
\newblock \emph{Journal of Computer and System Sciences}, 150:\penalty0 103632,
  2025.

\bibitem[Fruchterman and Reingold(1991)]{fru-rei:j:graph-drawing}
T.~Fruchterman and E.~Reingold.
\newblock Graph drawing by force-directed placement.
\newblock \emph{Software: Practice and Experience}, 21\penalty0 (11):\penalty0
  1129--1164, 1991.

\bibitem[Gandhi et~al.(2006)Gandhi, Khuller, Parthasarathy, and
  Srinivasan]{gan-khu-par-sri:j:dependent-rounding}
R.~Gandhi, S.~Khuller, S.~Parthasarathy, and A.~Srinivasan.
\newblock Dependent rounding and its applications to approximation algorithms.
\newblock \emph{Journal of the ACM}, 53\penalty0 (3):\penalty0 324--360, 2006.

\bibitem[Goldsmith et~al.(2014)Goldsmith, Lang, Mattei, and
  Perny]{gol-lan-mat-per:c:rank-dependent}
J.~Goldsmith, J.~Lang, N.~Mattei, and P.~Perny.
\newblock Voting with rank dependent scoring rules.
\newblock In \emph{Proceedings of AAAI-2014}, pages 698--704, 2014.

\bibitem[Hadjibeyli and Wilson(2019)]{had-wil:j:dr}
B.~Hadjibeyli and M.~Wilson.
\newblock Distance rationalization of anonymous and homogeneous voting rules.
\newblock \emph{Social Choice and Welfare}, 52\penalty0 (3):\penalty0 559--583,
  2019.

\bibitem[Hemaspaandra et~al.(1997)Hemaspaandra, Hemaspaandra, and
  Rothe]{hem-hem-rot:j:dodgson}
E.~Hemaspaandra, L.~Hemaspaandra, and J.~Rothe.
\newblock Exact analysis of {D}odgson elections: {L}ewis {C}arroll's 1876
  voting system is complete for parallel access to {NP}.
\newblock \emph{Journal of the ACM}, 44\penalty0 (6):\penalty0 806--825, 1997.

\bibitem[Inada(1964)]{ina:j:group-separable}
K.~Inada.
\newblock A note on the simple majority decision rule.
\newblock \emph{Econometrica}, 32\penalty0 (32):\penalty0 525--531, 1964.

\bibitem[Inada(1969)]{ina:j:simple-majority}
K.~Inada.
\newblock The simple majority decision rule.
\newblock \emph{Econometrica}, 37\penalty0 (3):\penalty0 490--506, 1969.

\bibitem[Kamada and Kawai(1989)]{kam-kaw:j:embedding}
T.~Kamada and S.~Kawai.
\newblock An algorithm for drawing general undirected graphs.
\newblock \emph{Information Processing Letters}, 31\penalty0 (1):\penalty0
  7--15, 1989.

\bibitem[Kamishima(2003)]{kam:c:sushi}
T.~Kamishima.
\newblock Nantonac collaborative filtering: {Recommendation} based on order
  responses.
\newblock In \emph{Proceedings of KDD-2003}, pages 583--588, 2003.

\bibitem[Karpov(2019)]{kar:j:group-separable}
A.~Karpov.
\newblock On the number of group-separable preference profiles.
\newblock \emph{Group Decision and Negotiation}, 28\penalty0 (3):\penalty0
  501--517, 2019.

\bibitem[Keller et~al.(2019)Keller, Hassidim, and
  Hazon]{kel-has-haz:j:approx-manipulation}
O.~Keller, A.~Hassidim, and N.~Hazon.
\newblock New approximations for coalitional manipulation in scoring rules.
\newblock \emph{Journal of Artificial Intelligence Research}, 64:\penalty0
  109--145, 2019.

\bibitem[Kilgour(2010)]{kil:chapter:approval-multiwinner}
M.~Kilgour.
\newblock Approval balloting for multi-winner elections.
\newblock In \emph{Handbook on Approval Voting}, pages 105--124. Springer,
  2010.
\newblock Chapter~6.

\bibitem[Kruskal(1964)]{kru:j:mds}
J.~Kruskal.
\newblock Multidimensional scaling by optimizing goodness of fit to a nonmetric
  hypothesis.
\newblock \emph{Psychometrika}, 29\penalty0 (1):\penalty0 1--27, 1964.

\bibitem[Lackner(2020)]{lac:c:perpetual-voting}
M.~Lackner.
\newblock Perpetual voting: Fairness in long-term decision making.
\newblock In \emph{Proceedings of AAAI-2020}, pages 2103--2110, 2020.

\bibitem[Lackner and Skowron(2023)]{lac-sko:b:multiwinner-approval}
M.~Lackner and P.~Skowron.
\newblock \emph{Multi-Winner Voting with Approval Preferences}.
\newblock Springer, 2023.

\bibitem[Laslier and Straeten(2008)]{las-str:j:french-election-2002}
J.-F. Laslier and K.~Van~Der Straeten.
\newblock A live experiment on approval voting.
\newblock \emph{Experimental Economics}, 11\penalty0 (1):\penalty0 97--105,
  2008.

\bibitem[Leep and Myerson(1999)]{leep1999marriage}
D.~Leep and G.~Myerson.
\newblock Marriage, magic, and solitaire.
\newblock \emph{The American Mathematical Monthly}, 106\penalty0 (5):\penalty0
  419--429, 1999.

\bibitem[Lu et~al.(2019)Lu, Zhang, Rabinovich, Obraztsova, and
  Vorobeychik]{lu-zha-rab-obr-vor:c:manipulation-issue-selection}
J.~Lu, D.~Zhang, Z.~Rabinovich, S.~Obraztsova, and Y.~Vorobeychik.
\newblock Manipulating elections by selecting issues.
\newblock In \emph{Proceedings of AAMAS-19}, pages 529--537, 2019.

\bibitem[Lu and Boutilier(2011{\natexlab{a}})]{bou-lu:c:chamberlin-courant}
T.~Lu and C.~Boutilier.
\newblock Budgeted social choice: From consensus to personalized decision
  making.
\newblock In \emph{Proceedings of IJCAI-2011}, pages 280--286,
  2011{\natexlab{a}}.

\bibitem[Lu and Boutilier(2011{\natexlab{b}})]{lu-bou:c:max-regret}
T.~Lu and C.~Boutilier.
\newblock Robust approximation and incremental elicitation in voting protocols.
\newblock In \emph{Proceedings of IJCAI-2011}, pages 287--293,
  2011{\natexlab{b}}.

\bibitem[Lu and Boutilier(2014)]{lu-bou:j:sampling-mallows}
T.~Lu and C.~Boutilier.
\newblock Effective sampling and learning for mallows models with
  pairwise-preference data.
\newblock \emph{Journal of Machine Learning Research}, 15\penalty0
  (1):\penalty0 3783--3829, 2014.

\bibitem[Mallows(1957)]{mal:j:mallows}
C.~Mallows.
\newblock Non-null ranking models.
\newblock \emph{Biometrica}, 44:\penalty0 114--130, 1957.

\bibitem[Mattei and Walsh(2013)]{mat-wal:c:preflib}
N.~Mattei and T.~Walsh.
\newblock Preflib: A library for preferences.
\newblock In \emph{Proceedings of ADT-2013}, pages 259--270, 2013.

\bibitem[Mattei et~al.(2012)Mattei, Forshee, and
  Goldsmith]{mat-for-gol:c:empirical-study}
N.~Mattei, J.~Forshee, and J.~Goldsmith.
\newblock An empirical study of voting rules and manipulation with large
  datasets.
\newblock In \emph{Proceedings of COMSOC-2012}, 2012.

\bibitem[{McCabe-Dansted} and Slinko(2006)]{mcc-sli:j:similarity-rules}
J.~{McCabe-Dansted} and A.~Slinko.
\newblock Exploratory analysis of similarities between social choice rules.
\newblock \emph{Group Decision and Negotiation}, 15:\penalty0 77--107, 2006.

\bibitem[Meskanen and Nurmi(2008)]{mes-nur:b:distance-realizability}
T.~Meskanen and H.~Nurmi.
\newblock Closeness counts in social choice.
\newblock In M.~Braham and F.~Steffen, editors, \emph{Power, Freedom, and
  Voting}. Springer-Verlag, 2008.

\bibitem[Micha and Shah(2020)]{mic-sha:c:predict-election-outcome}
E.~Micha and N.~Shah.
\newblock Can we predict the election outcome from sampled votes?
\newblock In \emph{Proceedings of AAAI-2020}, pages 2176--2183, 2020.

\bibitem[Minka(2000)]{min:j:pca}
T.~Minka.
\newblock Automatic choice of dimensionality for {PCA}.
\newblock \emph{Proceedings of NIPS-2000}, 13, 2000.

\bibitem[Mirrlees(1971)]{mir:j:single-crossing}
J.~Mirrlees.
\newblock An exploration in the theory of optimal income taxation.
\newblock \emph{Review of Economic Studies}, 38:\penalty0 175--208, 1971.

\bibitem[Munagala et~al.(2021)Munagala, Shen, and
  Wang]{mun-she-wan:c:approx-cc}
K.~Munagala, Z.~Shen, and K.~Wang.
\newblock Optimal algorithms for multiwinner elections and the
  chamberlin-courant rule.
\newblock In \emph{Proceedings of EC-2021}, pages 697--717, 2021.

\bibitem[Nemhauser et~al.(1978)Nemhauser, Wolsey, and
  Fisher]{nem-wol-fis:j:submodular}
G.~Nemhauser, L.~Wolsey, and M.~Fisher.
\newblock An analysis of approximations for maximizing submodular set
  functions.
\newblock \emph{Mathematical Programming}, 14\penalty0 (1):\penalty0 265--294,
  December 1978.

\bibitem[Nitzan(1981)]{nit:j:closeness}
S.~Nitzan.
\newblock Some measures of closeness to unanimity and their implications.
\newblock \emph{Theory and Decision}, 13\penalty0 (2):\penalty0 129--138, 1981.

\bibitem[Obraztsova and Elkind(2012)]{obr-elk:c:optimal-manipulation}
S.~Obraztsova and E.~Elkind.
\newblock Optimal manipulation of voting rules.
\newblock In \emph{Proceedings of AAAI-2012}, pages 2141--2147, 2012.

\bibitem[Obraztsova et~al.(2013)Obraztsova, Elkind, Faliszewski, and
  Slinko]{obr-elk-fal-sli:c:swap-geometry}
S.~Obraztsova, E.~Elkind, P.~Faliszewski, and A.~Slinko.
\newblock On swap-distance geometry of voting rules.
\newblock In \emph{Proceedings of AAMAS-13}, pages 383--390, 2013.

\bibitem[Obraztsova et~al.(2020)Obraztsova, Elkind, and
  Faliszewski]{obr-elk-fal:c:convexity}
S.~Obraztsova, E.~Elkind, and P.~Faliszewski.
\newblock On swap convexity of voting rules.
\newblock In \emph{Proceedings of AAAI-2020}, pages 1910--1917, 2020.

\bibitem[O'Neill(2013)]{openstv}
J.~O'Neill.
\newblock Open {STV}, www.openstv.org.
\newblock 2013.

\bibitem[Oren et~al.(2013)Oren, Filmus, and Boutilier]{ore-fil-bou:c:efficient}
J.~Oren, Y.~Filmus, and C.~Boutilier.
\newblock Efficient vote elicitation under candidate uncertainty.
\newblock In \emph{Proceedings of the 23rd International Joint Conference on
  Artificial Intelligence}, page 309–316, 2013.

\bibitem[Peters and Lackner(2020)]{pet-lac:j:spoc}
D.~Peters and M.~Lackner.
\newblock Preferences single-peaked on a circle.
\newblock \emph{Journal of Artificial Intelligence Research}, 68:\penalty0
  463--502, 2020.

\bibitem[Pospisil et~al.(2015)Pospisil, Hasal, Nowakova, and
  Platos]{pos-has-now-pla:c:kam-kaw-bar-bor}
L.~Pospisil, M.~Hasal, J.~Nowakova, and J.~Platos.
\newblock Computation of kamada-kawai algorithm using barzilai-borwein method.
\newblock In \emph{International Conference on Intelligent Networking and
  Collaborative Systems}, pages 327--333, 2015.

\bibitem[Procaccia et~al.(2008)Procaccia, Rosenschein, and
  Zohar]{pro-ros-zoh:j:proportional-representation}
A.~Procaccia, J.~Rosenschein, and A.~Zohar.
\newblock On the complexity of achieving proportional representation.
\newblock \emph{Social Choice and Welfare}, 30\penalty0 (3):\penalty0 353--362,
  2008.

\bibitem[Puppe and Slinko(2019)]{pup-sli:j:single-crossing}
C.~Puppe and A.~Slinko.
\newblock Condorcet domains, median graphs and the single-crossing property.
\newblock \emph{Economic Theory}, 67\penalty0 (1):\penalty0 285--318, 2019.

\bibitem[Roberts(1977)]{rob:j:tax}
K.~Roberts.
\newblock Voting over income tax schedules.
\newblock \emph{Journal of Public Economics}, 8\penalty0 (3):\penalty0
  329--340, 1977.

\bibitem[Rubner et~al.(2000)Rubner, Tomasi, and Guibas]{rub-tom-gui:j:emd}
Y.~Rubner, C.~Tomasi, and L.~Guibas.
\newblock The earth mover's distance as a metric for image retrieval.
\newblock \emph{International Journal of Computer Vision}, 40\penalty0
  (2):\penalty0 99--121, 2000.

\bibitem[Sapała(2022)]{mt:sapala}
K.~Sapała.
\newblock Algorithms for embedding metrics in euclidean spaces.
\newblock Master's thesis, AGH University of Science and Technology, 2022.

\bibitem[Service and Adams(2012)]{ser-ada:c:dr}
T.~Service and J.~Adams.
\newblock Strategyproof approximations of distance rationalizable voting rules.
\newblock In \emph{Proceedings of the 11th International Conference on
  Autonomous Agents and Multiagent Systems}, pages 569--576, 2012.

\bibitem[Skowron et~al.(2015)Skowron, Faliszewski, and
  Slinko]{sko-fal-sli:j:multiwinner}
P.~Skowron, P.~Faliszewski, and A.~Slinko.
\newblock Achieving fully proportional representation: {Approximability}
  result.
\newblock \emph{Artificial Intelligence}, 222:\penalty0 67--103, 2015.

\bibitem[Skowron et~al.(2016)Skowron, Faliszewski, and
  Lang]{sko-fal-lan:j:collective}
P.~Skowron, P.~Faliszewski, and J.~Lang.
\newblock Finding a collective set of items: {From} proportional
  multirepresentation to group recommendation.
\newblock \emph{Artificial Intelligence}, 241:\penalty0 191--216, 2016.

\bibitem[Skowron et~al.(2017)Skowron, Lackner, Brill, Peters, and
  Elkind]{sko-lac-bri-pet-elk:c:proportional-rankings}
P.~Skowron, M.~Lackner, M.~Brill, D.~Peters, and E.~Elkind.
\newblock Proportional rankings.
\newblock In \emph{Proceedings of IJCAI-2017}, pages 409--415, 2017.

\bibitem[Szufa(2024)]{szu:thesis:map-of-elections}
S.~Szufa.
\newblock \emph{Map of Elections}.
\newblock PhD thesis, AGH University, 2024.

\bibitem[Szufa et~al.(2020)Szufa, Faliszewski, Skowron, Slinko, and
  Talmon]{szu-fal-sko-sli-tal:c:map}
S.~Szufa, P.~Faliszewski, P.~Skowron, A.~Slinko, and N.~Talmon.
\newblock Drawing a map of elections in the space of statistical cultures.
\newblock In \emph{Proceedings of AAMAS-20}, pages 1341--1349, 2020.

\bibitem[Szufa et~al.(2022)Szufa, Faliszewski, Janeczko, Lackner, Slinko,
  Sornat, and Talmon]{szu-fal-jan-lac-sli-sor-tal:c:map-approval}
S.~Szufa, P.~Faliszewski, L.~Janeczko, M.~Lackner, A.~Slinko, K.~Sornat, and
  N.~Talmon.
\newblock How to sample approval elections?
\newblock In \emph{Proceedings of IJCAI-2022}, pages 496--502, 2022.

\bibitem[Thiele(1895)]{Thie95a}
T.~Thiele.
\newblock Om flerfoldsvalg.
\newblock In \emph{Oversigt over det Kongelige Danske Videnskabernes Selskabs
  Forhandlinger}, pages 415--441. 1895.

\bibitem[Tideman and Plassmann(2012)]{tid-pla:b:modeling-elections}
T.~Tideman and F.~Plassmann.
\newblock Modeling the outcomes of vote-casting in actual elections.
\newblock In D.~Felsenthal and M.~Machover, editors, \emph{Electoral Systems:
  Paradoxes, Assumptions, and Procedures}, pages 217--251. Springer, 2012.

\bibitem[Walsh(2009)]{wal:c:where-hard-veto}
T.~Walsh.
\newblock Where are the really hard manipulation problems? {T}he phase
  transition in manipulating the veto rule.
\newblock In \emph{Proceedings of IJCAI-2009}, pages 324--329. AAAI Press, July
  2009.

\bibitem[Walsh(2011)]{wal:j:hard-manipulation}
T.~Walsh.
\newblock Where are the hard manipulation problems.
\newblock \emph{Journal of Artificial Intelligence Research}, 42\penalty0
  (1):\penalty0 1--29, 2011.

\bibitem[Walsh(2015)]{wal:t:generate-sp}
T.~Walsh.
\newblock Generating single peaked votes.
\newblock Technical Report arXiv:1503.02766 [cs.GT], arXiv.org, March 2015.

\bibitem[Wang et~al.(2019)Wang, Sikdar, Shepherd, Zhao, Jiang, and
  Xia]{wan-sik-she-zha-jia-xia:c:practical-algo-stv}
J.~Wang, S.~Sikdar, T.~Shepherd, Z.~Zhao, C.~Jiang, and L.~Xia.
\newblock Practical algorithms for multi-stage voting rules with parallel
  universes tiebreaking.
\newblock In \emph{Proceedings of AAAI-2019}, pages 2189--2196, 2019.

\bibitem[Wilder and Vorobeychik(2018)]{wil-vor:c:control-social-influence}
B.~Wilder and Y.~Vorobeychik.
\newblock Controlling elections through social influence.
\newblock In \emph{Proceedings of AAMAS-18}, pages 265--273, 2018.

\bibitem[Zhang and Wang(2006)]{mha-wan:c:lle}
Z.~Zhang and J.~Wang.
\newblock {MLLE}: Modified locally linear embedding using multiple weights.
\newblock \emph{Proceedings of NIPS-2006}, 19, 2006.

\end{thebibliography}

\newpage 

\appendix

{\centering
  \huge\bf
  Appendix

\bigskip
}

\section{Missing Proofs} \label{se:recov_app}

\polyalgo*
\begin{proof}
  Let $X$ be our input $m \times m$ matrix and let
  $C = \{c_1, \ldots, c_m\}$ be a set of candidates.  Our algorithm
  creates an election $E=(C, V)$ iteratively, as follows. In each
  iteration we first create a bipartite graph $G$ with vertex sets
  $A=\{a_1,\dots ,a_m\}$ and $B=\{b_1,\dots ,b_m\}$. For each
  $i, j \in [m]$, if $x_{i,j}$ is nonzero, then we put an edge
  between~$a_i$ and~$b_j$ (vertices in $A$ correspond to rows of $X$
  and vertices in $B$ correspond to the columns).  Next, we compute a
  perfect matching~$M$ in~$G$ (we will see later that it is guaranteed
  to exist).  Let $v$ be the vote that ranks $c_j$ on position $i$
  exactly if $M(a_i)=b_j$ ($v$ is well-defined because $M$ is a
  perfect matching).  Let $P$ be the position matrix corresponding to
  vote $v$, i.e., to election $(C,(v))$, and let $z$ be the largest
  integer such that $X-zP$ contains only non-negative entries.  Then,
  we add $z$ copies of $v$ to $V$ and set $X:=X-zP$.  We proceed to
  the next iteration until $X$ becomes the zero matrix.
  
  To prove the correctness of the algorithm, we show that at each
  iteration the constructed graph $G$ has a perfect matching.
  Let us assume that
  this is not the case.  Note that each row and each column in the
  current $X$ sums up to the same integer, say $n'$.  Since there is
  no perfect matching, by Hall's theorem, there is a subset of
  vertices $A'\subseteq A$ such that the neighborhood $B'\subseteq B$
  of $A'$ in $G$ contains fewer than $|A'|$ vertices. Yet, we have
  that $\sum_{a_i\in A', b_j\in B'} x_{i,j}=n' |A'|$, as we sum up all
  the nonzero entries of each row corresponding to a vertex from $A'$.
  However, this implies that $|B'| \geq |A'|$ because each column
  corresponding to a vertex from $B'$ sums up to $n'$, but we do not
  necessarily include all its nonzero entries. This is a
  contradiction.
  
  The algorithm terminates after at most $m^2-m+1$ steps (in each step
  at least one more entry of $X$ becomes zero, and in the last step,
  $m$ entries become zero).  Each step requires $\mathcal{O}(m^{2.5})$
  time to compute the matching, so the overall running time is
  $\mathcal{O}(m^{4.5})$.  This implies that $V$ contains at most
  $m^2-m+1$ different votes; indeed, using a similar argument as in  \citet{leep1999marriage} it can be shown that the algorithm always terminates
  after at most $m^2-2m+2$ steps.
\end{proof}

\dist*
\begin{proof}
\textbf{$\mathbf{\textbf{ID}_m}$ and $\mathbf{\textbf{UN}_m}$:} We start by computing the distance between $\ID_m$ and $\UN_m$. Note that $\UN_m$ always remains the same matrix independent of how its columns are ordered. Thus, we can compute the distance between these two matrices using the identity permutation between the columns of the two matrices: 
    $\POS(\ID_m,\UN_m) = \sum_{i=1}^m \EMD((\ID_m)_i, (\UN_m)_i)=\textstyle\sum_{i=1}^m (\textstyle\sum_{j=1}^{i-1} \frac{j}{m} + \textstyle\sum_{j=1}^{m-i} \frac{j}{m}) 
\\ = \frac{1}{m} \textstyle\sum_{i=1}^m ( \frac{1 + (i-1)}{2}(i-1) + \frac{1 + (m-i)}{2}(m-i) ) 
\\ = \frac{1}{2m} \textstyle\sum_{i=1}^m (2i^2 - 2i - 2mi + m^2  + m) 
\\ = \frac{1}{2m} (2\frac{m(m+1)(2m+1)}{6} - m(m+1)-m^2(m+1)  + m(m^2 + m))
\\ = \frac{1}{2m} (\frac{(m^2+m)(2m+1)}{3} - (m+1)(m+m^2)  + m(m^2 + m))
\\ = \frac{m+1}{2} (\frac{(2m+1)}{3} - (m+1)  + m) = \frac{(m+1)(m-1)}{3} = \frac{1}{3}(m^2-1).$

In the following, we use $(*)$ when we omit some calculations analogous to the calculations for $\POS(\ID_m,\UN_m)$.

\medskip
\noindent \textbf{$\mathbf{\textbf{UN}_m}$ and $\mathbf{\textbf{ST}_m}$:} Similarly, we can also directly compute the distance between $\UN_m$ and $\ST_m$ using the identity permutation between the columns of the two matrices. In this case, all column vectors of the two matrices have indeed the same $\EMD$ distance to each other:

 \noindent
 $\POS(\UN_m,\ST_m) =  m\cdot (\frac{1}{2}+2\cdot\textstyle\sum_{i=1}^{\frac{m}{2}-1} \frac{i}{m})= \frac{m}{2}+\frac{m}{2}(\frac{m}{2}-1) = \frac{m^2}{4}.$
 
  \medskip
\noindent \textbf{$\mathbf{\textbf{UN}_m}$ and $\mathbf{\textbf{AN}_m}$:} Next, we compute the distance between $\UN_m$ and $\AN_m$ using the identity permutation between the columns of the two matrices.  Recall that $\AN_m$ can be written as:
\[
  \AN_m = 0.5\begin{bmatrix}
    \ID_{\nicefrac{m}{2}} & \rID_{\nicefrac{m}{2}} \\
    \rID_{\nicefrac{m}{2}} & \ID_{\nicefrac{m}{2}}
  \end{bmatrix}.
\] Thus, it is possible to reuse our ideas from computing the distance between identity and uniformity:

  \noindent
 $\POS(\UN_m,\AN_m) = 4 \textstyle\sum_{i=1}^{\frac{m}{2}}(\textstyle\sum_{j=1}^{i-1} \frac{j}{m} + \textstyle\sum_{j=1}^{\frac{m}{2}-i} \frac{j}{m}) = (*) = \frac{2}{3}(\frac{m^2}{4}-1).$
 
 \medskip
\noindent \textbf{$\mathbf{\textbf{ID}_m}$ and $\mathbf{\textbf{ST}_m}$:}
There exist only two different types of column vectors in $\ST_m$, i.e.,  $\frac{m}{2}$ columns starting with $\frac{m}{2}$ entries of value $\frac{2}{m}$ followed by $\frac{m}{2}$ zero-entries and $\frac{m}{2}$ columns starting with $\frac{m}{2}$ zero entries followed by $\frac{m}{2}$ entries of value $\frac{2}{m}$. In $\ID_m$, $\frac{m}{2}$ columns have a one entry in the first $\frac{m}{2}$ rows and $\frac{m}{2}$ columns have a one entry in the last $\frac{m}{2}$ rows. Thus, again the identity permutation between the columns of the two matrices minimizes the $\EMD$ distance:

  \noindent
 $\POS(\ID_m,\ST_m) = 2\cdot \POS(\ID_{\frac{m}{2}},\UN_{\frac{m}{2}}) = \frac{2}{3}(\frac{m^2}{4}-1)$
 
  \medskip
\noindent \textbf{$\mathbf{\textbf{AN}_m}$ and $\mathbf{\textbf{ST}_m}$:} We now turn to computing the distance between $\AN_m=(\an_1,\dots , \an_m)$ and $\ST_m=(\stt_1,\dots , \stt_m)$. As all column vectors of $\AN_m$ are palindromes, each column vector of $\AN_m$ has the same $\EMD$ distance to all column vectors of $\ST_m$, i.e., for $i\in [m]$ it holds that $\EMD(\an_i,\stt_j)=\EMD(\an_i,\stt_{j'})$ for all $j,j'\in [m]$. Thus, the distance between $\AN_m$ and $\ST_m$ is the same for all permutation between the columns of the two matrices. Thus, we again use the identity permutation. 
We start by computing $\EMD(\an_i,\stt_i)$ for different $i\in [m]$ separately distinguishing two cases. Let $i\in [\frac{m}{4}]$. Recall that $\an_i$ has a $0.5$ at position $i$ and position $m-i+1$ and that $\stt_i$ has a $\frac{2}{m}$ at entries $j\in [\frac{m}{2}]$. We now analyze how to transform $\an_i$ to $\stt_i$. For all $j\in [i-1]$, it is clear that it is optimal that the value $\frac{2}{m}$ moved to position $j$ comes from position $i$. The overall cost of this is $\textstyle\sum_{j=1}^{i-1} \frac{2j}{m}$. Moreover, the remaining surplus value at position $i$ (that is, $\frac{1}{2}-\frac{2i}{m}$) needs to be moved toward the end. Thus, for $j\in [i+1,\frac{m}{4}]$, we move value $\frac{2}{m}$ from position $i$ to position $j$. The overall cost of this is  $\textstyle\sum_{j=1}^{\frac{m}{4}-i} \frac{2j}{m}$. Lastly, we need to move value $\frac{2}{m}$ to positions $j\in [\frac{m}{4}+1,\frac{m}{2}]$. This needs to come from position $m-i+1$. Thus, for each $j\in [\frac{m}{4}+1,\frac{m}{2}]$, we move value $\frac{2}{m}$ from position $m-i+1$ to position $j$. The overall cost of this is $\frac{1}{2}\cdot (\frac{m}{2}-i)+\textstyle\sum_{j=1}^{\frac{m}{4}} \frac{2j}{m}=\frac{1}{2}(\frac{m}{2}-i)+\frac{m}{16}+\frac{1}{4}$ 

Now, let $i\in [\frac{m}{4}+1,\frac{m}{2}]$. For $j\in [\frac{m}{4}]$, we need to move value $\frac{2}{m}$ from position $i$ to position $j$. The overall cost of this is $\frac{1}{2}\cdot (i-\frac{m}{4}-1)+\textstyle\sum_{j=1}^{\frac{m}{4}} \frac{2j}{m}=\frac{1}{2}\cdot (i-\frac{m}{4}-1)+\frac{m}{16}+\frac{1}{4}$. For $j\in [\frac{m}{4}+1,\frac{m}{2}]$, we need to move value $\frac{2}{m}$ from position $m-i+1$ to position $j$. The overall cost of this is $\frac{1}{2}\cdot (\frac{m}{2}-i)+\textstyle\sum_{j=1}^{\frac{m}{4}} \frac{2j}{m}=\frac{1}{2}\cdot (\frac{m}{2}-i)+\frac{m}{16}+\frac{1}{4}$. 

Observing that the case $i\in [\frac{3m}{4}+1,m]$ is symmetric to $i\in [\frac{m}{4}]$ and the case $i\in [\frac{m}{2}+1,\frac{3m}{4}]$ is symmetric to $i\in [\frac{m}{4}+1,\frac{m}{2}]$ the $\EMD$ distance between $\AN_m$ and $\ST_m$ can be computed as follows:

\noindent$\POS(\AN_m,\ST_m) = 2\cdot ( A + \frac{1}{2}\cdot (\sum_{i=1}^{\frac{m}{4}} \frac{m}{2}-i)  + \frac{m}{4}\cdot (\frac{m}{16}+\frac{1}{4})+ \frac{1}{2} \cdot (\sum_{i=\frac{m}{4}+1}^{\frac{m}{2}} \cdot (i-\frac{m}{4}-1))+\frac{m}{4}\cdot (\frac{m}{16}+\frac{1}{4})+ \frac{1}{2}\cdot (\sum_{i=\frac{m}{4}+1}^{\frac{m}{2}} \frac{m}{2}-i)+\frac{m}{4}\cdot (\frac{m}{16}+\frac{1}{4}))\\
=\frac{m^2}{48} - \frac{1}{3} + \frac{3m^2-4m}{32}  + \frac{m}{2}\cdot (\frac{m}{16}+\frac{1}{4})+  \frac{m^2-4m}{32}+\frac{m}{2}\cdot (\frac{m}{16}+\frac{1}{4})+ \frac{m^2-4m}{32}+\frac{m}{2}\cdot (\frac{m}{16}+\frac{1}{4}))\\
=\frac{m^2}{48}-\frac{1}{3}+\frac{3m^2-4m}{32}+\frac{3m}{2}(\frac{m}{16}+\frac{1}{4})+\frac{m^2-4m}{16}=(\frac{1}{48}+\frac{3}{32}+\frac{3}{32}+\frac{1}{16})m^2+(-\frac{4}{32}+\frac{3}{8}-\frac{4}{16})m\frac{1}{3}=\frac{13}{48}m^2-\frac{1}{3}$

\smallskip
\noindent with 

\noindent$A =  \textstyle\sum_{i=1}^{\frac{m}{4}}(\textstyle\sum_{j=1}^{i-1} \frac{2j}{m} + \textstyle\sum_{j=1}^{\frac{m}{4}-i} \frac{2j}{m}) = (*) = \frac{1}{6}(\frac{m^2}{16}-1) = \frac{1}{2}(\frac{m^2}{48} - \frac{1}{3})$

  \medskip
  
\noindent \textbf{$\mathbf{\textbf{ID}_m}$ and $\mathbf{\textbf{AN}_m}$:} Lastly, we consider $\ID_m=(\id_1,\dots , \id_m)$ and $\AN_m=(\an_1,\dots , \an_m)$. Note that, for $i\in [m]$, $\id_i$ contains a $1$ at position $i$ and $\an_i$ contains a $0.5$ at position $i$ and position $m-i$. Note further that for $i\in [\frac{m}{2}]$ it holds that $\an_i=\an_{m-i+1}$.
Fix some $i\in [\frac{m}{2}]$. For all $j\in [i,m-i+1]$ it holds that $\EMD(\an_i,\id_j)=\frac{m-2i+1}{2}$ and for all $j\in [1,i-1]\cup [m-i+2,m]$ it holds that $\EMD(\an_i,\id_j)>\frac{m-2i+1}{2}$. 
That is, for every $i\in [m]$, $\an_i$ has the same distance to all column vectors of $\ID_m$ where the one entry lies in between the two $0.5$ entries of $\an_i$ but a larger distance to all column vectors of $\ID_m$ where the one entry is above the top $0.5$ entry of $\an_i$ or below the bottom $0.5$ entry of $\an_i$. Thus, it is optimal to choose a mapping of the column vectors such that for all $i\in [m]$ it holds that $\an_i$ is mapped to a vector $\id_j$ where the one entry of $\id_j$ lies between the two $0.5$ in $\an_i$. This is, among others, achieved by the identity permutation, which we use to compute:

 \noindent
 $\POS(\ID_m,\AN_m) = 2 \textstyle\sum_{i=1}^{\frac{m}{2}} (\frac{1}{2} (m-2i+1)) 
  = \frac{m}{2}m-\frac{m}{2}(\frac{m}{2}+1)+\frac{m}{2} = \frac{m^2}{4}$
\end{proof}

\section{Different Embeddings and Their Robustness}\label{app:embeddings}

In this section, we provide the results of the accuracy experiments
from \Cref{sec:robustness} for the embedding algorithms of
Fruchterman-Reingold (FR)~\citep{fru-rei:j:graph-drawing},
Kamada-Kawai (KK)~\citep{kam-kaw:j:embedding}, and for
Multidimensional Scaling algorithm (MDS)~\citep{kru:j:mds,lee:j:mds}.
The results for KK from the main body of the paper are repeated in
this section, for comparison.  Overall, we find that KK is either
superior or nearly as good as each of the other ones.

\begin{figure}[]
  \centering

  \begin{subfigure}[b]{0.49\textwidth}
    \centering \includegraphics[width=8cm, trim={0.2cm 0.2cm 0.2cm
      0.2cm}, clip]{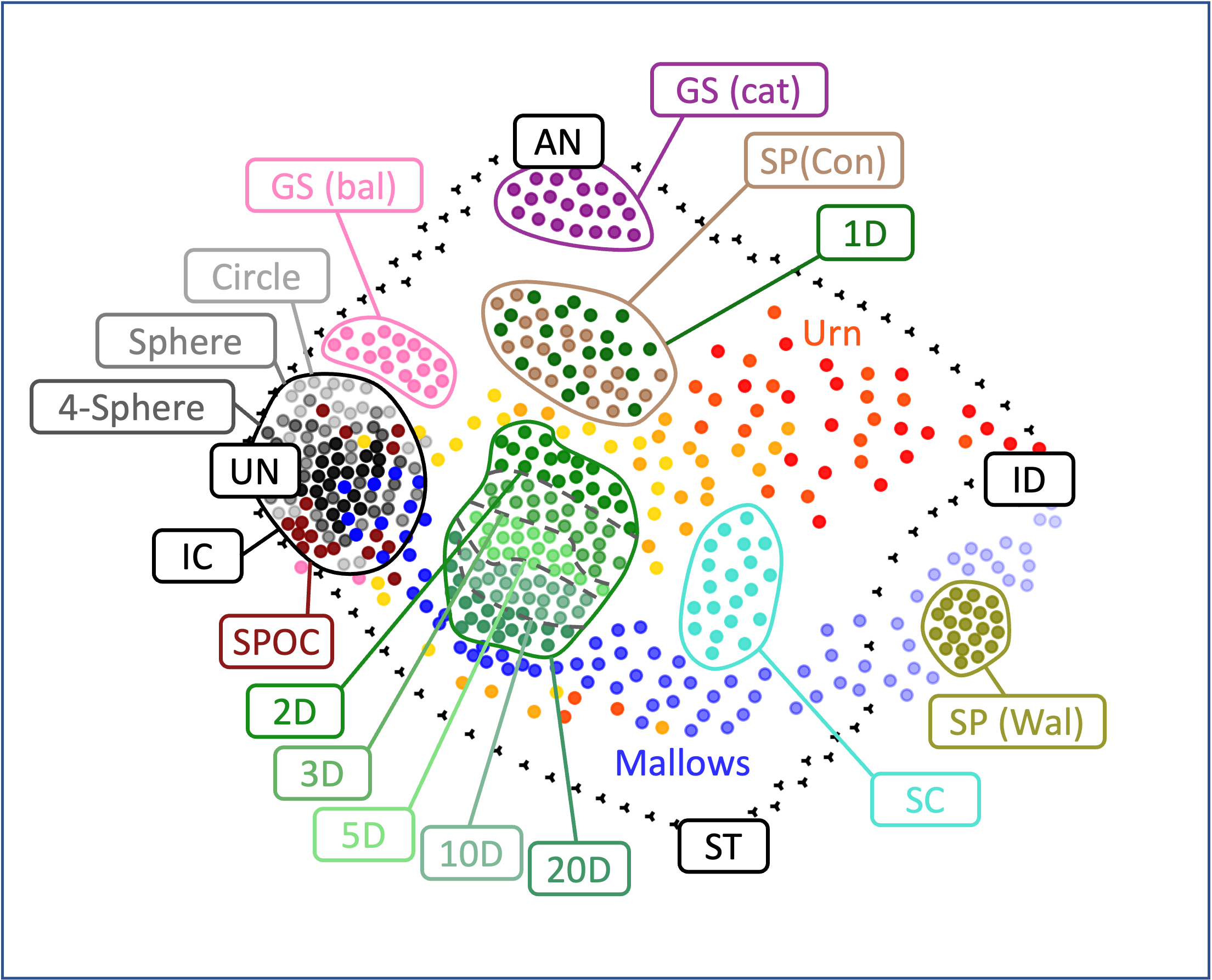}
    \caption{FR}
  \end{subfigure}
  \begin{subfigure}[b]{0.49\textwidth}
    \centering \includegraphics[width=8cm, trim={0.2cm 0.2cm 0.2cm
      0.2cm}, clip]{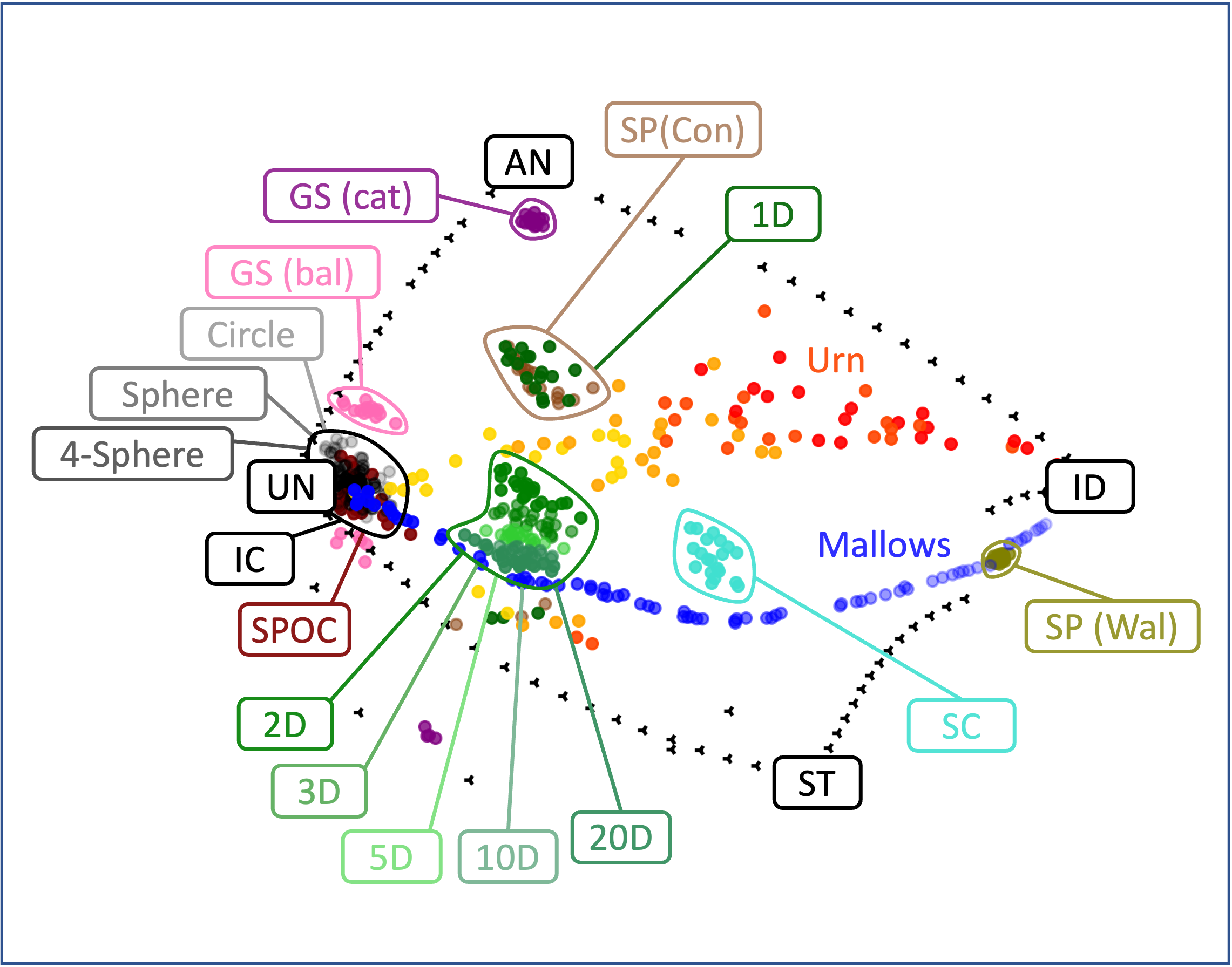}
    \caption{MDS}
  \end{subfigure}

  \begin{subfigure}[b]{0.49\textwidth}
    \centering \includegraphics[width=8cm, trim={0.2cm 0.2cm 0.2cm
      0.2cm}, clip]{images/embed/kamada_emd-positionwise.png}
    \caption{KK}
  \end{subfigure}
      
  \caption{Maps of elections obtained using the (a)
    Fruchterman-Reingold algorithm (FR), (b) the Multidimensional
    Scaling algorithm (MDS), and the Kamada-Kawai algorithm (KK). The
    colors of the dots correspond to the statistical culture from
    which the elections are generated. Elections generated from the Mallows
    model use the blue color, and the more pale they are, the closer
    is the $\normphi$ parameter to $0$. The urn elections use 
    yellow-red colors, and the more red they are, the larger is the
    contagion parameter $\alpha$.}
    \label{fig:emb:maps100x100}   
\end{figure}

In \Cref{fig:emb:maps100x100} we show maps of elections of the 100x100
dataset from \Cref{sec:map-cultures} obtained using the three
embedding algorithms. While the three maps in
\Cref{fig:emb:maps100x100} are different in some ways, for example,
the MDS one has the most compact clusters of points and the FR one has
the most scattered ones, on the high level they are similar. For
example, they all have IC, SPOC, and 4-Sphere elections very close to
the $\UN$ matrix, they all have a path of Mallows elections forming an
arc between the $\UN$ and $\ID$ matrices, and they all have hypercube
elections in the same area and in the same relation with respect to
each other (and with respect to the other elections).

\subsection{Correlation Between the Positionwise and Euclidean Distances}\label{app:pcc}

\begin{table}
  \centering
  \begin{tabular}{c|lll}
    \toprule
                     & \multicolumn{3}{c}{average PCC values} \\
    dataset          &  FR & MDS & KK \\
    \midrule
    $4   \times 100$ & \numberbarPCC{0.9226}{0.0083} & \numberbarPCC{0.9664}{0.0031} &  \numberbarPCC{0.9661}{0.0044} \\
    $10  \times 100$ & \numberbarPCC{0.9095}{0.0098}  & \numberbarPCC{0.9649}{0.0032} & \numberbarPCC{0.9686}{0.0061} \\
    $20  \times 100$ & \numberbarPCC{0.9135}{0.0105}  & \numberbarPCC{0.9657}{0.0036} & \numberbarPCC{0.9745}{0.0015} \\
    $100 \times 100$ & \numberbarPCC{0.9174}{0.0088}  & \numberbarPCC{0.9731}{0.0051} & \numberbarPCC{0.9735}{0.0115} \\
    \bottomrule
  \end{tabular}
  \caption{\label{tab:emb:pcc}Average values of the PCC between the
    original and embedded distances for collections of datasets of
    various sizes. After the $\pm$ signs we report the standard
    deviations.}
\end{table}

For each of the maps from \Cref{fig:emb:maps100x100}, we have computed
the PCC between the vector of the original distances and the vector of
the embedded ones. For the KK method the PCC is the highest and is
equal to~$0.9805$, for the MDS method it is equal to~$0.9748$ and for
the FR method it is equal~$0.9364$.  In \Cref{tab:emb:pcc} we report
average PCC values and standard deviations for the three embedding
algorithms and collections of datasets of various sizes (for our three
maps of the $100 \times 100$ dataset we chose very good embeddings,
whose PCC values are superior to the averages reported in
\Cref{tab:pcc}).

\begin{figure}[t]
  \centering
    
    \begin{subfigure}[b]{0.49\textwidth}
      \centering \includegraphics[width=8cm, trim={0.2cm 0.2cm 0.2cm
        0.2cm}, clip]{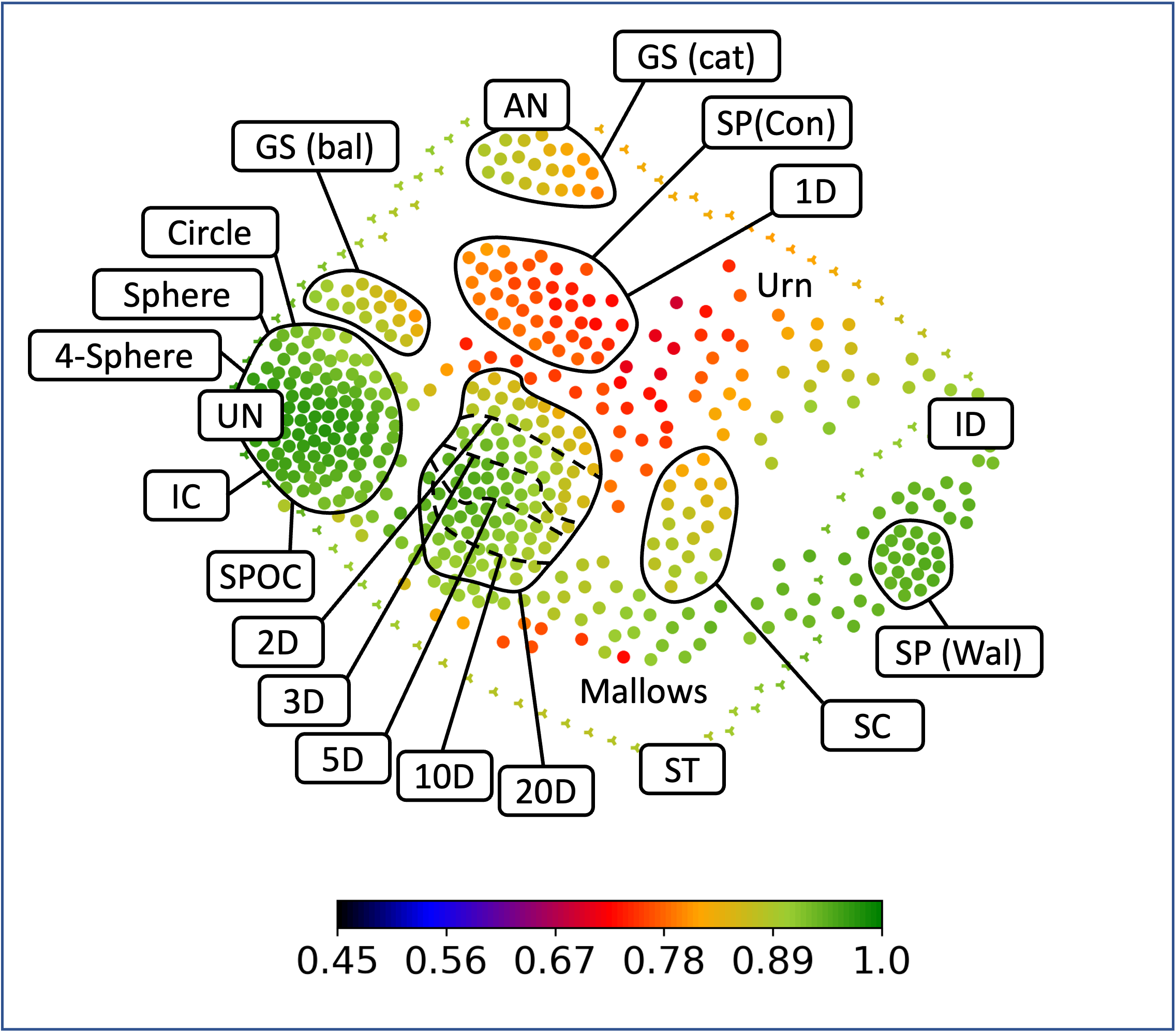}
      \caption{FR}
    \end{subfigure}%
    \begin{subfigure}[b]{0.49\textwidth}
      \centering \includegraphics[width=8cm, trim={0.2cm 0.2cm 0.2cm
        0.2cm}, clip]{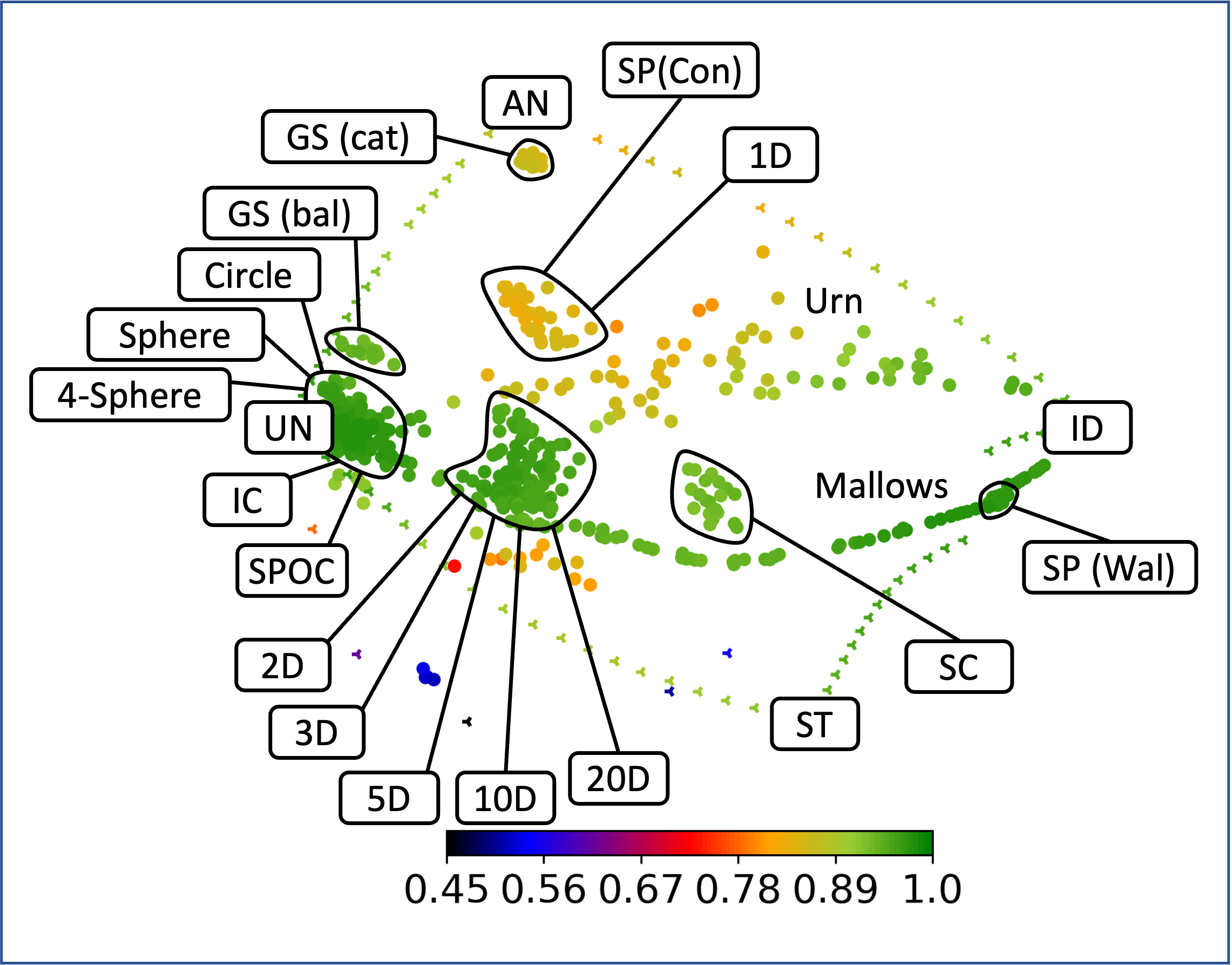}
      \caption{MDS}
    \end{subfigure}

    \begin{subfigure}[b]{0.49\textwidth}
      \centering \includegraphics[width=8cm, trim={0.2cm 0.2cm 0.2cm
        0.2cm}, clip]{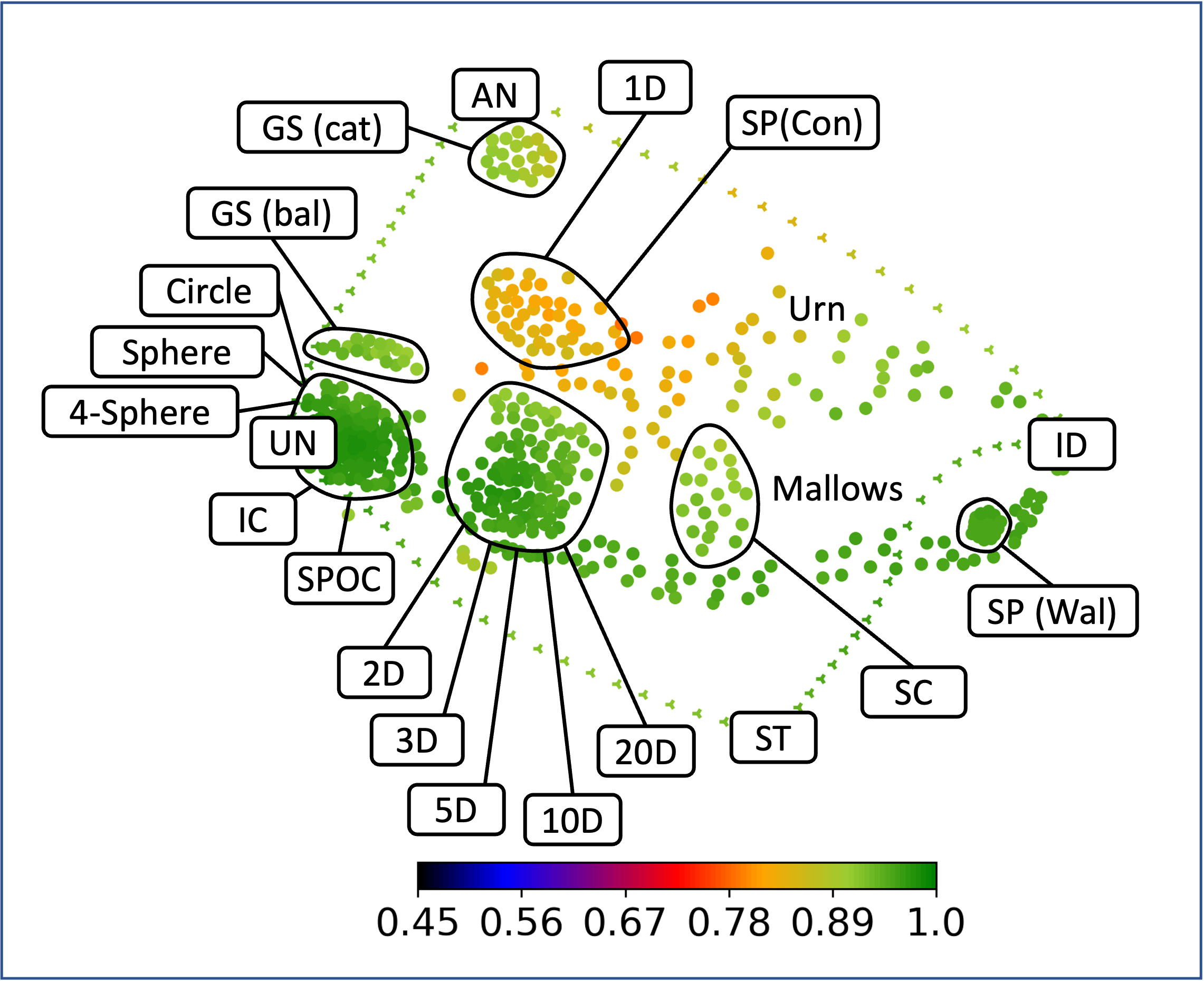}
      \caption{KK}
    \end{subfigure}

    \caption{Monotonicity coloring for the three emdeddings from
      \Cref{fig:maps100x100}. Each election $X$ (i.e., each point $X$)
      has a color reflecting its value $\mu(X)$ for the respective
      embedding.}
    \label{fig:monotonicity}
\end{figure}

\subsection{Monotonicity}\label{app:monotonicity}

Our next evaluation metric is what we call \textit{monotonicity}. The
intuition is that if the original distance between elections~$X$
and~$Y$ is larger than the original distance between elections~$X$
and~$Z$, then we expect that the same will hold for the embedded
distances.

Formally, for a given
embedding summary~$Q=(\mathcal{E}, d_\mathcal{M}, d_\Euc)$ and a given
election~$X\in \mathcal{E}$, we define the average monotonicity of this
election in the embedding summary to be:
\[
\textstyle \mu_{Q}(X) = \frac{1}{|{|\mathcal{E}|-1\choose 2}|}\sum_{Y,Z \in \mathcal{E}\setminus{X}} \Delta_{X}(Y,Z),
\]
where~$\Delta_{X}(Y,Z)$ is equal to 1, if 
\[
\sgn(d_\Euc(X,Y)-d_\Euc(X,Z)) = \sgn(d_\mathcal{M}(X,Y)-d_\mathcal{M}(X,Z)),
\]
and is equal to~$0$ otherwise. Positive (negative) signs in the above equation mean that
both the original and the embedded distances between~$X$ and~$Y$ were
larger (smaller) than the distances between~$X$ and~$Z$.  The larger
is the average monotonicity, the better.

In \Cref{fig:monotonicity} we present the three maps from
\Cref{fig:maps100x100} where each point (election) is colored
accordingly to its monotonicity value. The larger (the
closer to green) the value, the better, and the lower (the closer to
black) the value, the worse. Monotonicity equal to~$1$ means that all
inequalities are maintained after the embedding. For all three maps,
the main message is the same: Elections from the IC, SPOC, Mallows,
Walsh, and multidimensional Euclidean models are nicely
embedded. Then, elections from the single-crossing and group-separable
models are still fine, but on average worse than the previously
mentioned models. Finally, we have elections from the 1D-Interval,
Conitzer, and urn models, which are the worst embedded ones (not
counting some  caterpillar group-separable elections for MDS
embedding, which are obviously wrong).

\newcommand{\numberbarMonotonicity}[2]{\tikz{
    \fill[blue!17] (0,0) rectangle (#1*250mm-200mm,10pt);
    \node[inner sep=0pt, anchor=south west] at (0,0) {#1 $\pm$ #2};}
}

\begin{table}
  \centering
  \begin{tabular}{c|lll}
    \toprule
                     & \multicolumn{3}{c}{total average monotonicity values} \\
    dataset          &  FR & MDS & KK \\
    \midrule
    $4   \times 100$ & \numberbarMonotonicity{0.8684}{0.0051}  & \numberbarMonotonicity{0.8936}{0.0044} & \numberbarMonotonicity{0.8968}{0.0068} \\
    $10  \times 100$ & \numberbarMonotonicity{0.8637}{0.0112}  & \numberbarMonotonicity{0.9045}{0.0054} & \numberbarMonotonicity{0.9130}{0.0065} \\
    $20  \times 100$ & \numberbarMonotonicity{0.8649}{0.0105}  & \numberbarMonotonicity{0.9091}{0.0062} & \numberbarMonotonicity{0.9207}{0.0040} \\
    $100 \times 100$ & \numberbarMonotonicity{0.8689}{0.0065}  & \numberbarMonotonicity{0.9224}{0.0076} & \numberbarMonotonicity{0.9225}{0.0096} \\
    \bottomrule
  \end{tabular}
  \caption{\label{tab:monotonicity}Total average monotonicity values
    for collections of datasets of various sizes. After the $\pm$
    signs we report the standard deviations.}
\end{table}

In \Cref{tab:monotonicity} we provide the total average monotonicity
values for our collections of dataset. We see that with respect to
monotonicity, KK and MDS embeddings perform best, followed by
FR. This is in sync with the results regarding the Pearson correlation
coefficient.

We mention that instead of defining monotonicity for triples of
elections, we could have defined it for quadruples. Given elections
$X$ and $Y$ and elections $A$ and $B$, we could ask if the fact that
the original distance between $X$ and $Y$ is smaller than the original
distance between $A$ and $B$ implies the same relation between the
embedded distances. This would have lead to a somewhat more accurate
analysis, but would require significant amount of computation without
providing a clear advantage (in each of our datasets, we would have to analyze
nearly $480^4 \approx 53\cdot 10^9$ quadruples of elections).

\subsection{Distortion}\label{app:distortion}
\begin{figure}[t]
    \centering
    
     \begin{subfigure}[b]{0.49\textwidth}
        \centering
        \includegraphics[width=6.cm, trim={0.2cm 0.2cm 0.2cm 0.2cm}, clip]{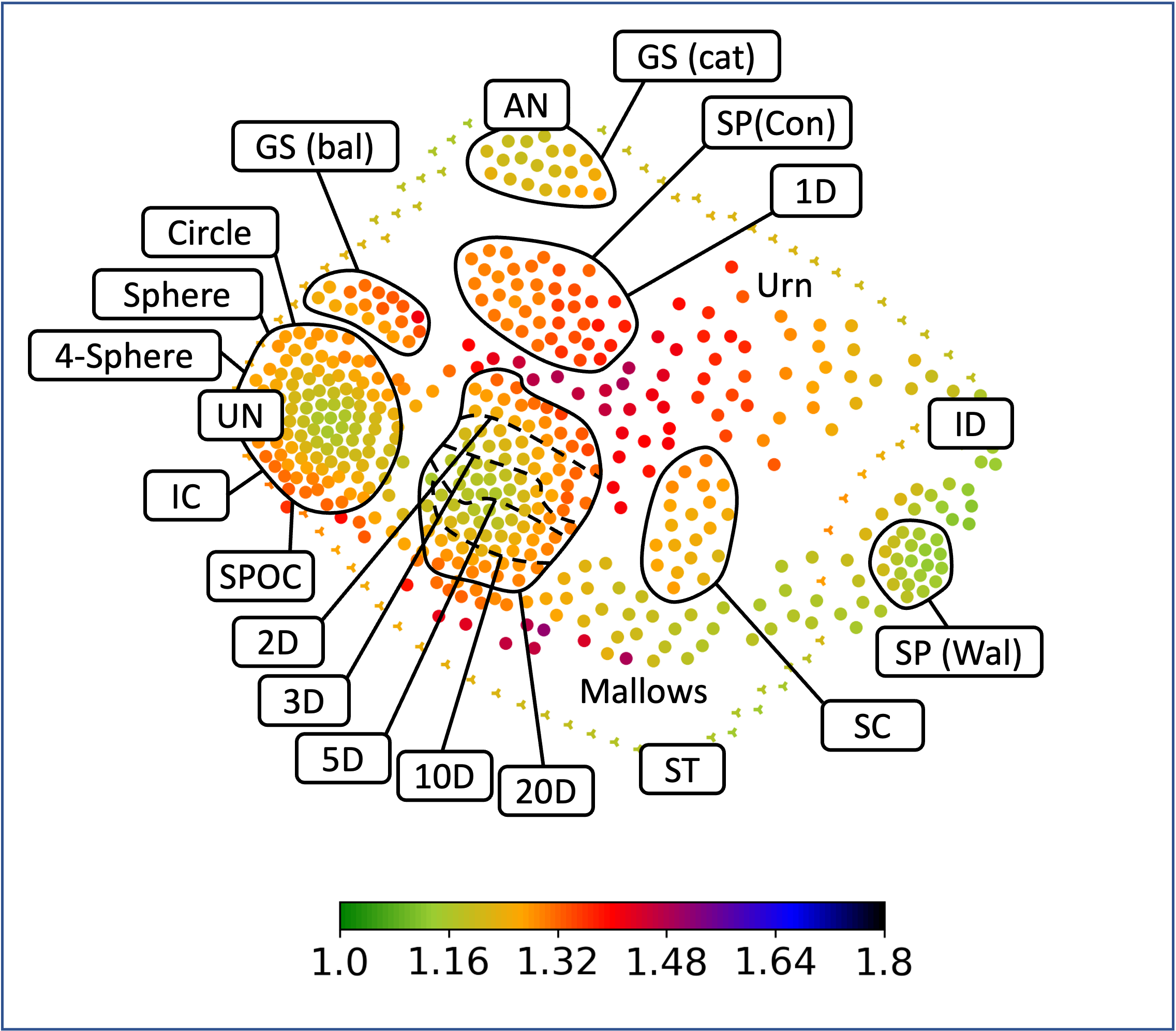}
        \caption{FR}
    \end{subfigure}
    \begin{subfigure}[b]{0.49\textwidth}
        \centering
        \includegraphics[width=6.cm, trim={0.2cm 0.2cm 0.2cm 0.2cm}, clip]{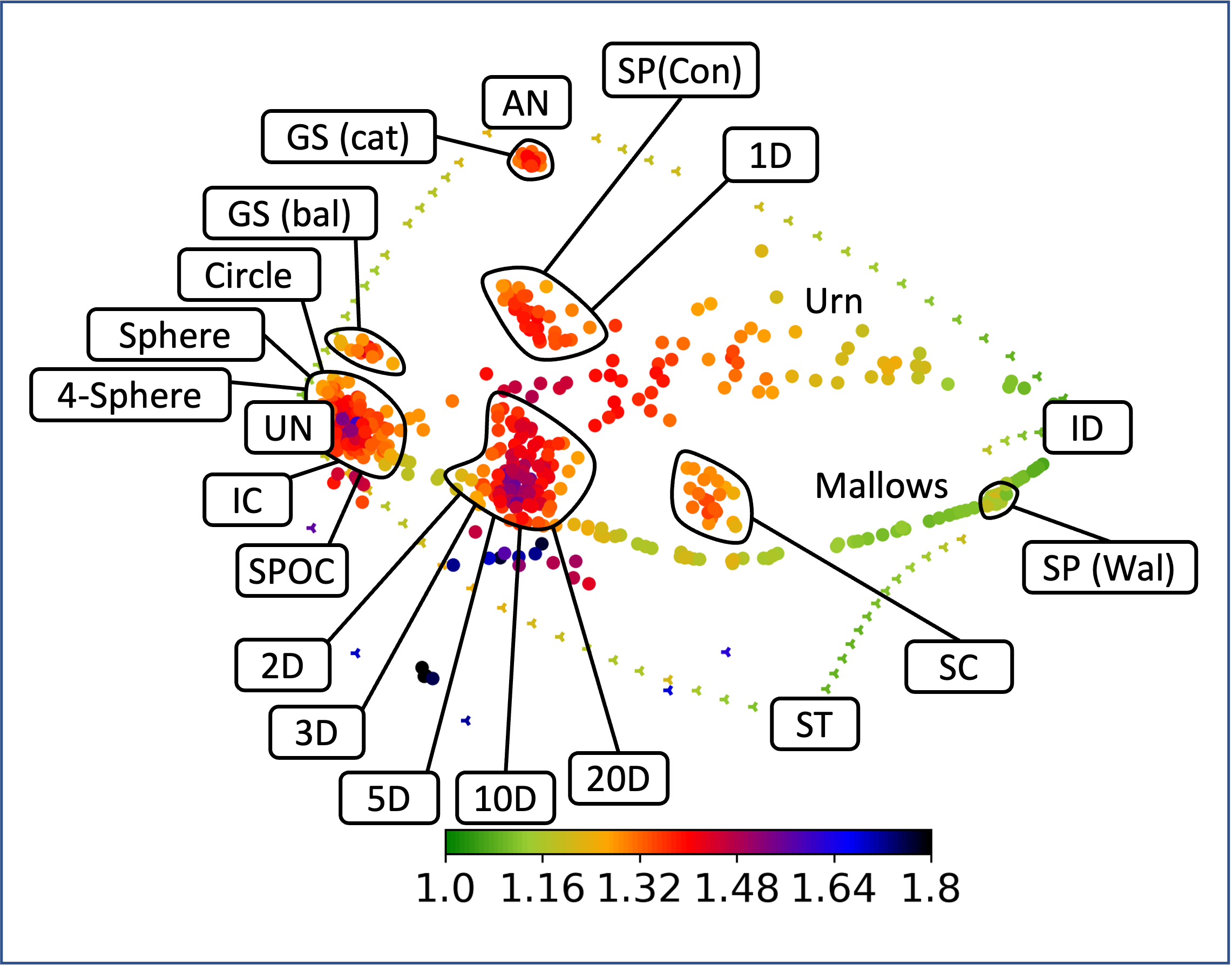}
        \caption{MDS}
    \end{subfigure}

     \begin{subfigure}[b]{0.49\textwidth}
        \centering
        \includegraphics[width=6.cm, trim={0.2cm 0.2cm 0.2cm 0.2cm}, clip]{images/embed/distortion_kamada_value.png}
        \caption{KK}
    \end{subfigure}     

    \caption{\label{fig:emb:distortion} Distortion coloring for the three
      emdeddings from \Cref{fig:maps100x100}. Each election $X$ (i.e.,
      each point $X$) has a color reflecting its value $\AMR(X)$ for
      the respective embedding.}
\end{figure}

In \Cref{fig:emb:distortion}, we present the map from
\Cref{fig:emb:maps100x100} with elections colored according to their distortion (i.e.,
according to their $\AMR$ vaules; recall \Cref{sec:distortion}). The best distortion is witnessed by
the KK embedding, followed by those for FR and MDS. In the cases of KK
and FR, urn elections, as well as 1D-Interval and Conitzer ones, have the
highest average distortion. This is in sync with the results
regarding monotonicity. In case of MDS, we additionally see elections
with high average distortion in the middle of the impartial culture
cluster and within the highly dimensional Euclidean clusters.

\begin{table}
  \centering
  \begin{tabular}{c|lll}
    \toprule
                     & \multicolumn{3}{c}{total average distortion values} \\
    dataset          &  FR & MDS & KK \\
    \midrule
    $4   \times 100$ & \numberbarDistortion{1.3213}{0.0157}  & \numberbarDistortion{1.3099}{0.0076} & \numberbarDistortion{1.2612}{0.0158} \\
    $10  \times 100$ & \numberbarDistortion{1.3119}{0.0194}  & \numberbarDistortion{1.3531}{0.0108} & \numberbarDistortion{1.2625}{0.0125} \\
    $20  \times 100$ & \numberbarDistortion{1.2979}{0.0195}  & \numberbarDistortion{1.3545}{0.0126} & \numberbarDistortion{1.2406}{0.0060} \\
    $100 \times 100$ & \numberbarDistortion{1.3006}{0.0256}  & \numberbarDistortion{1.3225}{0.0194} & \numberbarDistortion{1.2119}{0.0123} \\
    \bottomrule
  \end{tabular}
  \caption{\label{tab:emb:distortion}Total average distortion values
    (i.e., $\AMR$ values) for collections of datasets of various
    sizes. After the $\pm$ signs we report the standard deviations.}
\end{table}

In \Cref{tab:emb:distortion} we present the average values of the average
distortions for the three embedding methods and datasets with
different numbers of candidates. For each dataset, we compute the
average $\AMR$ value for its elections, as well as the standard
deviation.  We observe two patterns. The first one is related to the
embedding methods: KK is always the best, followed by FR, with MDS
being the worst (except for the $4 \times 100$ dataset). The second pattern is
that for KK the more candidates there are, the lower is the distortion
value. For MDS and FR this relation is not as clear.  Overall, for
each method the distances in the embedding are, on average, off by
$20-30\%$.

\section{Selection of Preflib Datasets}\label{app:selection}

In \Cref{tab:preflib} we list the datasets that were available on
Preflib when conducting our experiment on real-life elections (at the
end of 2020), including information whether we chose to include it in
our study and the reason for rejection (in case we deemed a given
dataset unsuitable).

\begin{table*}[t!]
  \centering \resizebox{\textwidth}{!}{\begin{tabular}{c c c c c c c
        c} \toprule
                                         Preflib ID & Name & Inst. & $m$ & Type & Selected &  Reason to reject \\
                                         \midrule
                                         1 & Irish & 3 & 9,12,14 & soi & yes &  -\\
                                         2 & Debian & 8 & 4-9 & toc & no & Too few candidates\\
                                         3 & Mariner & 1 & 32 & toc & no & Too many ties\\
                                         4 & Netflix & 200 & 3,4 & soc & no &Too few candidates \\
                                         5 & Burlington & 2 & 6 & toi & no & Too few candidates  \\
                                         6 & Skate & 48 & 14-30 & toc & yes &  -\\
                                         7 & ERS	& 87 & 3-29 & soi & yes &  -\\
                                         8 & Glasgow	& 21 & 8-13 & soi & yes&- \\
                                         9 & AGH & 2 & 7,9 & soc & no &  Too few candidates \\
                                         10.1 & Formula & 48 & 22-62 & soi & no &  Incomplete and few votes\\
                                         10.2 & Skiing & 2 & $\sim$50 & toc & no & Few votes and many ties \\
                                         11.1 & Webimpact & 3 &	103, 240, 242  & soc & no & Too many candidates and too few votes ($\sim$5)\\
                                         11.2 & Websearch & 74 &	100-200, $\sim$2000 & soi & no & Too few votes ($\sim$4)	\\
                                         12 & T-shrit & 1 & 11 & soc & yes &  -\\
                                         13 & Anes &	19 & 3-12 & toc &	no & Too many ties\\
                                         14 & Sushi & 1 & 10 & soc & yes & -\\
                                         15 & Clean Web & 79 & 10-50, $\sim$200 & soc & no & Too few votes ($\sim$4)	\\
                                         16 & Aspen & 2 & 5,11 & toc & yes & 		-\\
                                         17 & Berkeley & 1 & 4 & toc & no & Too few candidates 	\\
                                         18 & Minneapolis & 4 & 7,9,379,477 & soi & no & Incomplete votes	\\
                                         19 & Oakland & 7 & 4-11 & toc & no & Incorrect data (votes like: 1,1,1)\\
                                         20 & Pierce	& 4 & 4,5,7 & toc & no & Too few candidates			\\
                                         21 & San Francisco & 14 & 4-25 & toc & no & Incorrect data (votes like: 1,1,1)\\
                                         22 & San Leonardo & 3 & 4,5,7 & toc & no & Too few candidates \\
                                         23 & Takoma	& 1 & 4 & toc & no & Too few candidates \\
                                         24 & MT Dots & 4 & 4 & soc & no & Too few candidates					\\
                                         25 & MT Puzzles	& 4 & 4 & soc & no & Too few candidates				\\
                                         26 & Fench Presidential &	6 & 16 & toc & no & Approval ballots			\\
                                         27 & Proto French & 1 & 15 & toc & no & Approval ballots					\\
                                         28 & APA & 12 & 5 & soi & no & Too few candidates				\\
                                         29 & Netflix NCW & 12 & 3,4 & soc & no & Too few candidates						\\
                                         30 & UK labor party	& 1 & 5 & soi & no & Too few candidates				\\
                                         31 & Vermont & 15 & 3-6 & toc & no & Approval ballots \\
                                         32 & Cujae	& 7 & 6,32 & soc/soi/toc & no & Multiple issues				\\
                                         33 & San Sebastian Poster & 2 & 17 & toc & no & Approval ballots \\
                                         34 & Cities survey & 2 & 36, 48 & soi & yes &  -\\
                                         \bottomrule
	\end{tabular}}
      \caption{\label{tab:preflib} Overview of all election datasets
        that are part of the Preflib database (as of the end of 2020).
        ``Inst.'' stands for the number of elections in the dataset,
        ``$m$'' for the number of candidates and ``Type'' for the type
        of the votes in the dataset (``soc'' means that all votes are
        strict complete orders; ``soi'' means that all votes are
        strict incomplete orders; ``toc'' means that all votes are
        weak incomplete orders).}
      \end{table*}

\end{document}